\documentclass[11pt,reqno]{amsart}

\usepackage{amsmath,amsfonts,amssymb,commath}
\usepackage{graphicx,harvard}
\usepackage{xcolor}

\usepackage{multibib}

\setlength{\textwidth}{15cm}
\setlength{\evensidemargin}{1cm}
\setlength{\oddsidemargin}{1cm}
\setlength{\textheight}{20.5cm}

\def\pihat{{\hat\pi}}
\def\Fhat{{\hat F}}
\def\dhat{{\hat d}}

\newcommand{\Expect}{\mathbf E}
\newcommand{\E}{\mathbf E}
\newcommand{\R}{\mathbb R}
\newcommand{\N}{\mathbf N}
\newcommand{\Prob}{\mathbf P}

\newcommand{\sgn}{\textbf{sgn}}

\newcommand{\alphahat}{\hat{\alpha}}
\newcommand{\Zhat}{\hat{Z}}

\newcommand{\yhat}{\hat{y}}
\newcommand{\Ahat}{\hat{A}}

\newcommand{\hubar}{\overline h}
\newcommand{\supp}{{\sf supp}}
\newcommand{\argmax}{\arg\max}

\newtheorem{theorem}{Theorem}
[section]

\newtheorem{proposition}[theorem]{Proposition}

\newtheorem{definition}[theorem]{Definition}
\newtheorem{lemma}[theorem]{Lemma}

\newtheorem{remark}[theorem]{Remark}

\begin{document}

\title[Machine Learning]{\bf Machine Learning for Strategic Inference}

\author{In-Koo Cho}
\address{Department of Economics, Emory University, Atlanta, GA 30322 USA}
\email{icho30@emory.edu}
\urladdr{https://sites.google.com/site/inkoocho}

\author{Jonathan Libgober}
\address{Department of Economics, University of Southern California, Los Angeles, CA 90089 USA}
\email{libgober@usc.edu}
\urladdr{http://www.jonlib.com/}

\date{\today}

\thanks{We thank Juan Carrillo, Jason Hartline, Navin Kartik, Roger Moon, Xiaosheng Mu, Guofu Tan, Ashesh Rambachan, Joel Sobel, Erik Strand, and especially Grigory Franguridi, for helpful conversations and comments, and seminar audiences at AMETS, Emory, the Econometric Society World Congress, the NSF/NBER/CEME Conference on Mathematical Economics, Rochester, UC-San Diego, and USC. This project started when the first author was visiting the University of Southern California.   We are grateful for hospitality and support from USC. 
  Financial support from the National Science Foundation is
  gratefully acknowledged. }

\begin{abstract}
We study interactions between strategic players and markets whose behavior is guided by an algorithm. Algorithms use data from prior interactions and a limited set of decision rules to prescribe actions. While as-if rational play need not emerge if the algorithm is constrained, it is possible to guide behavior across a rich set of possible environments using limited details. Provided a condition known as \emph{weak learnability} holds, Adaptive Boosting algorithms can be specified to induce behavior that is (approximately) as-if rational. Our analysis provides a statistical perspective on the study of endogenous model misspecification.
\end{abstract}

\maketitle

\section{Introduction}

The importance of algorithms in guiding economic behavior is already significant, and likely to only be more so in the years to come. But since a number of economic phenomena rely crucially on the presence of rational individuals on both sides of a particular interaction, an open question is whether traditional rational models apply to such situations. Of course, economists recognize that people often fail to act rationally, with certain consistent failures having empirical implications. But the extent to which algorithms are susceptible to errors is a separate issue, and one that should be addressed for economists to speak to the increasing number of applications where algorithm design plays a central role. 

This paper introduces a framework to study the question of whether and when algorithms can approximate rational behavior. In our model, a rational, strategic player (who we refer to as a \emph{sender}) chooses a strategy when interacting with an algorithm that prescribes actions to a stream of short lived actors (who we refer to as \emph{receivers}).\footnote{The terminology of sender and receiver is to highlight the role of our model's timing; but many of our applications are beyond where these labels are traditionally applied.} A distinguishing feature of our exercise is our focus on the problem of \emph{strategic inference}: Specifically, we assume that the sender commits to a strategy which maps states into actions, and so a rational receiver would update beliefs about the best reply after observing the sender's action. A rational receiver would thus make an inference regarding a payoff relevant state using knowledge of the sender's strategy. On the other hand, the algorithm has access to observations on what transpired in previous interactions. We are interested in comparing the rational receiver's strategy with the strategy induced by the algorithm, with a focus on determining when rationality can be replicated. We are particularly interested in the case where the algorithm seeks to provide these recommendations without using details of the sender's objective or the particular setting at hand (for instance, as a sales platform might when designing an algorithm to be used for a variety of different products). In other words, we are interested in finding an algorithm capable of inducing rationality under as wide a set of environments as possible. 

In our model, the algorithm produces recommendations using data from relevant interactions in the past (where data consists of sender actions and ex-post payoffs). These recommendations are determined by finding a \emph{best fitting} decision rule. By best fitting, we mean that there is minimal error, with errors being weighted according to some specific objective. The main assumption here is that the algorithm can determine the best fitting rule from a particular \emph{set of decision rules}, which we refer to as a \emph{hypothesis class} (following the machine learning literature). A crucial limitation is that this hypothesis class is restricted and must be specified in advance, so that not \emph{every} feasible mapping from messages into actions can be fit to the data. Thus, there is no a priori guarantee that finding the best fitting rule within the given set yields the rational reply; whether this property holds will depend upon the sender's strategy. 

Our theoretical question can be phrased as follows: Do these limitations of algorithms inhibit the ability to prescribe actions which are (approximately) rational? We show that, while constraints may be exploited by strategic actors, an algorithm designer with particular capabilities can induce the as-if rational outcome in equilibrium. The answer to this question thus depends on what we assume the algorithm is capable of. Our contribution is to identify what some of these capabilities are. 

Constraints on algorithms of the kind in our model are often studied in the machine learning literature, which typically treats the data generating process as exogenous. Our goal, however, is to perform a similar algorithm design exercise, but in a strategic setting. To make sense of the restrictions on classifiers that can be fit to data, it may be instructive to note that in typical machine learning problems, a simple prediction (for instance, a ``yes-no'' recommendation) is sought for an observation among a very large set of possibilities. Seeking to find the correct recommendation for each one may be intractable or undesireable (given data limitations), and so a simpler set may be used as a baseline. On the other hand, it may still be possible to construct a new decision rule if the algorithm specifies how this should be done in advance. In our model, this takes the form of assuming the algorithm is limited in what can be fit to the data, but is otherwise flexible, in a way we will make precise below. 

One interpretation of this limitation is that the algorithm suffers from a form of model misspecification: the true optimal decision rule for a receiver may fall outside of the class of decision rules that can be prescribed by the algorithm. There are two notable differences from a standard model misspecification exercise, however. The first difference is that the algorithm in our framework is concerned explicitly with prescribing behavior, and not with the problem of inference per se. In the (currently very active) literature on model misspecification (see, for instance, \citeasnoun{EspondaandPouzo14}), a decisionmaker is assumed to be potentially incorrect regarding the set of possible parameters, but otherwise uses an optimally chosen decision rule. We, on the other hand, are not (directly) interested in learning the underlying parameters, but rather making an optimal \emph{prediction}. The second difference is that the extent to which the optimal prediction falls outside of the realm of considered models is endogenous in our setting. Since we allow algorithms to specify decision rules arbitrarily---instead constraining how models can be fit to the data---they are, in principle, able to expand the potential decision rules the receiver could use if it is specified how this should be done. As a result, the extent to which the algorithm is misspecified is endogenous to the constraints of the algorithm design problem. 

What should one expect to happen given these limitations of an algorithm? On the one hand, in order for the algorithm to be able to give non-degenerate predictions without using detailed knowledge of the particular parameters of the receiver's problem, a sufficiently rich set of classifiers should be used. We focus on cases where this criterion suggests using at least the set of \emph{single-threshold classifiers}, which conditions a recommendation only on which side of a threshold the observable messages lie. However, \emph{since our setting requires strategic inference on the part of receivers}, this class of hypotheses is susceptible to manipulation by a rational sender. For our purposes, \citeasnoun{Rubinstein93} identified the key economic force, studying a buyer-seller game where the buyer is restricted in the set of decision rules that can be utilized. Specifically, this paper showed that if a \emph{rational} decisionmaker is \emph{restricted} to use a single threshold classifier---i.e., one that makes the same decision on a given side of a fixed threshold---then the seller can price discriminate via a particular form of randomization which ``fools'' these buyers into making a decision which is suboptimal given the realized price.\footnote{The reasoning behind this result is as follows. First, the optimally chosen classifier chosen can do strictly better than simply randomizing the guess, implying that the seller can exploit the incentives of the buyer in order to manipulate the decision rule. On the other hand, it is impossible for threshold rules to implement the optimal decision with probability 1 when this rational rule is non-monotone in the price. The first point implies the buyer trades off against errors, and the second point implies that the tradeoff falls short of the fully rational response. As a result, the seller can force a different decision than would be rationally optimal for these buyers (with arbitrarily high probability).} Our framework nests \citeasnoun{Rubinstein93} as a special case, but considers more general environments as well. 

Our analysis elucidates a tension between the ability to fit \emph{rich} and \emph{coarse} sets of models. As \citeasnoun{Rubinstein93} shows, if a decisionmaker is limited in the decision rules that can be utilized, then there is a potential for exploitation. In order to combat this temptation, one may seek to add more possible replies to be fit to the data; in other words, to make the hypothesis class richer. Indeed, a decisionmaker could prevent the particular instance of exploitation he highlights by doing so. However, fitting richer decision rules may have other undesirable consequences, and may still fail to prevent a slightly more elaborate strategy from succeeding at exploitation. Above, we mentioned that this view is common in the machine learning literature; finding the best fitting model within a set of models may be computationally demanding if this set is very large. A goal of our paper is to highlight this tradeoff between fitting coarse models---which have attractive statistical properties, but poor behavioral properties---and rich models, for which the sitution is reversed. 

Our proposed solution is to use the Adaptive Boosting algorithm (\citeasnoun{SchapireandFreund12}), which specifies exactly how to construct a decision rule as a weighted combination of classifiers, with the weights specified by the algorithm. The algorithm requires (repeatedly) fitting a classifier to some distribution over prices and outcomes, from some set of baseline classifiers. 

Returning to the particular question at hand, the requirement on the set of classifiers able to be fit is called \emph{weak learnability}, and it is significantly less demanding than requiring all possible rational replies to be specified. We seek to highlight that this requirement is necessary and sufficient to overcome the problem of model misspecification mentioned above, i.e., the gap between the set of decision rules that can be fit to the data and those that a rational receiver can utilize.  We provide results which show how to check it in several straightforward applications, particularly when resorting to single-threshold classifiers (which typically have natural interpretations). 

To summarize, the answer to our theoretical question is that rationality can be ensured with the ability to (a) find a best-fitting decision rule from a class which satisfies weak learnability, and (b) combine such classifiers in a particular way (specified in advance). It is worth emphasizing one technical difference---due to our focus on a strategic inference problem---between our exercise and similar ones considered in computer science or machine learning where these issues have received more attention. In principle, the rational decision in our model is \emph{not} observed if the sender uses a strategy that does not reveal the state given an observation. In a lemons problem, for instance, it may be that ``low quality'' is observed at some price, but that ``high quality'' is in fact more likely and that correspondingly a rational buyer (receiver) would choose a ``buying'' action.  Therefore, in our problem, the payoff-maximizing decision must be \emph{inferred} and constructed by the algorithm. One of the main results of this paper is that this added difficulty does not change the qualitative desirable properties of the algorithm, which we show using results from large deviations theory---though this does induce some modifications on precisely how good of an approximation the algorithm is able to guarantee. 

Returning to the discussion of \citeasnoun{Rubinstein93}, we see that the issue with single threshold classifiers is that they are not \emph{strong learners} (i.e., they cannot ensure the optimal decision is taken with probability 1 following any price), even though they are \emph{weak learners} (i.e., they can outperform random guesses when chosen optimally). The remarkable property of the Adaptive Boosting algorithm is that weak learnability is sufficient to construct a classifier that yields a similar guarantee as under a model class satisfying strong learnability. It is interesting that part of the intuition for the main result in \citeasnoun{Rubinstein93}---which relies upon the buyer being able to strictly improve payoffs beyond a trivial default to induce a particular decision rule---exactly tells us how to overcome the main conclusion, once we have the algorithm in hand. 

At first glance, it appears that there is a significant gap between decision rules satisfying weak learnability and those which induce rational replies. Rationality requires, in principle, very rich decision rules to be used, and for the performance of them to leave very little room for error. Weak learnability does not, and only requires a uniform improvement over a random guess. It is therefore perhaps surprising that in our exercise, the turns out to be no gap at all. Due to weak learnability, the apparent gap in rationality caused by the limitation in the decision rules that can be fit to data can be overcome by a clever choice of algorithm. The result is that the algorithm can induce rational behavior without knowing anything beyond the observed data from past interactions. In contrast, \emph{strong} learnability (i.e., prescribing the optimal action with high probability) will usually require precise knowledge of the sender's strategy.  

We briefly mention that the algorithm design problem we study accommodates a rich possible action space, even with the \emph{same} restrictions in the decision rules that can be fit to data. In particular, a version of the weak learnability condition in settings with two possible receiver actions also applies to settings with an arbitrary finite number of actions. This is in sharp contrast to many other papers in the large literature on ``decisionmakers as statisticians''  (reviewed below), which use similar motivation to study departures from rationality. These papers have typically focused on the binary action case. This limitation is very natural---many of the key results from machine learning which arise when there are two possible predictions do not extend easily (or even at all) to the case of multiple actions. However, we can handle this in our problem, suggesting our algorithm is of broader interest. We believe that this extension is important, as it shows our conclusions do not hinge on other artificial limitations on the environment. 

Our exercise provides formalism within which machine learning methods can be applied to answer new questions relevant to microeconomic theorists, and visa versa. Our model is deliberately abstract, in order to provide general principles guiding when the problem of model misspecification can be overcome. One key message is that while it is not possible to guarantee that rationality emerges for arbitrarily data generating process, it \emph{is} possible if the data generating process is endogenous (due to the strategic player) to the statistical algorithm. This argument requires some additional steps using incentives of the actors to demonstrate that the resulting output does in fact correspond to what is traditionally thought of as subgame perfection. This endogeneity issue makes the problem no longer a pure statistical exercise. The modifications our analysis requires extend beyond the initial need to show that it is possible to do better than random guessing in this environment. As our analysis elucidates, AdaBoost is capable of handling a \textit{particular} kind of unboundedness in the cardinality of the action space. It is thus necessary to discipline the environment further in order to achieve our results.

\section{Literature}

This paper takes the framework of PAC learnability, familiar from machine learning, and applies it to a strategic setting. Within economics, this agenda is most closely related to the literature on learning in games when behavior depends on a statistical method. The single-agent problem is a particular special case, and this case is the focus of  \citeasnoun{AlNajjar09} and  \citeasnoun{AlNajjarandPai2014}. However, since we are focused on a strategic setting,  the data the algorithm receives is \textit{endogenous} in our setting. In contrast, their benchmarks correspond to the case of exogenous data. This problem is also studied in  \citeasnoun{Spiegler2016}, who focuses on causality and defines a solution concept for behavior that arises from individuals fitting a directed acyclic graph to past observations. More recently, \citeasnoun{Zhaoetal2020} take a decision-theoretic approach in a single-agent setting with lotteries, showing how a relaxation of the independence axiom leads to a neural-network representation of preferences. 

Taking these approaches to games, the literature has still for the most part focused on settings where the interactions between players is \textit{static}, ruling out the main environments we are interested in here.\footnote{By itself the distinction may not immediately seem significant---after all, a Nash equilibrium in an extensive form game involves choosing a strategy to best respond to the opponent, and is usually stated as a single (and thus static) choice. However, the additional restriction to binary action or 0-1 prediction problems makes nesting our problem less straightforward.} In contrast, our setting is a simple, two-player (and two-move) sequential game. We also note that much (though not all) of this literature focuses on binary prediction problems, whereas we discuss how to specify algorithms in the general finite action cases as well. \citeasnoun{CherryandSalant2019} discuss a procedure whereby players' behavior arises from a statistical rule estimated by sampling past actions. This leads to an endogeneity issue similar to the one present in our environment, i.e., an interaction between the data generating process and the statistical method used to evaluate it.   \citeasnoun{EliazandSpiegler2018} study the problem of a statistician estimating a model in order to help an agent take an action, motivated as we are by issues involved with the interaction between rational plays and statistical algorithms.  \citeasnoun{Liang2018} also focuses on games of incomplete information, asking when a class of learning rules leads to rationalizable behavior. Studying model selection in econometrics,  \citeasnoun{OleaOrtolevaPaiPrat2019} consider an auction model and ask which statistical models achieve the highest confidence in results as a function of a particular dataset.\footnote{On the question of algorithms in particular, one concern is that the algorithm design problem may be susceptible to bias or induce unwanted discrimination when implemented, relative to rationality. See \citeasnoun{Rambachanetal2020} for an analysis of these issues and how they may be overcome.}

On the other hand, the literature on learning in extensive form games has typically assumed that agents experiment optimally, and hence embeds notion of rationality on the part of agents which we dispense with in this paper. Classic contributions include  \citeasnoun{FudenbergandKreps1995},  \citeasnoun{FudenbergandLevine1993} and  \citeasnoun{FudenbergandLevine06}. Most of this literature has focused on cases where there is no exogenous uncertainty regarding a player's type, and asking whether self-confirming behavior emerges as the outcome. An important exception is  \citeasnoun{FudenbergandHe2018}, who study the steady-state outcomes from experimentation in a signalling game. While a rational agent in our game would need to form an expectation over an exogenous random variable, signalling issues do not arise because our sender has commitment.  

Perhaps closest in motivation is the computer science literature studying how well algorithms perform in strategic situations, as well as how rational actors may respond when facing them.  \citeasnoun{Bravermanetal2018} consider optimal pricing of a seller repeatedly selling to a single buyer who repeatedly uses a no-regret learning algorithm. They show that, on the one hand, while a particular class of learning algorithms (i.e., those that are \emph{mean-based}) are susceptible to exploitation, others would lead to the seller's optimal strategy simply being to use the Myersonian optimum. \citeasnoun{Dengetal2019} also study strategies against no-regret learners in a broad class of games without uncertainty, and consider whether a strategic player can guarantee a higher payoff than what would be implied by first-mover advantage. \citeasnoun{Blumetal2008} consider the Price of Anarchy (i.e., the ratio between first-best welfare and worst-case equilibrium welfare), and show in a broad class of games that this quantity is the same whether players use Nash strategies or regret-minimizing ones. \citeasnoun{Nekipelovetal2015} assume players in a repeated auction use a no-regret learning algorithm, making similar behavioral assumptions as we do here. Their interest is in inferring the set of rationalizable actions from data.
 
While our motivation is very similar---and indeed, we seek to incorporate several aspects of this literature's conceptual framework---there are three notable differences. First, this literature typically assumes particular algorithms or principal objectives (such as no-regret learning) which differ from traditional Bayesian rationality. In contrast, we maintain a Bayesian rational objective for the seller, and also focus on an algorithm designer seeking to maximize the expected payoffs of agents. Second, we focus on relating the incentives of the rational player and \emph{the algorithm's capabilities}, and study the extent to which different assumptions on the algorithm design problem influence the task of approximating rationality. Our main result articulates how different action spaces for the algorithm designer yield different results regarding whether and when the outcome will approximate the rational benchmark. Lastly, our general framework focuses on settings with strategic inference---that is, where the payoffs following a given principal action are state-dependent---and thus covers a set of single-agent applications which extend beyond particular pricing settings, where most (though admittedly not all) of this literature has focused. In particular, the settings discussed in this literature do not cover Lemons markets settings (which \citeasnoun{Rubinstein93} falls under, for example) or Persuasion, which form our primary starting point.\footnote{An important exception is \citeasnoun{CamaraEtAl2020}, who study an environment covering many of our same applications such as Bayesian Persuasion. However, they still maintain the other two distinguishing features, focusing on a regret objective for the principal, as well as particular no-regret assumptions for the agent. Still, we emphasize that both our paper and theirs focuses on environments where the principal/sender chooses a state-dependent strategy. This leads to the aforementioned endogeneity between the data generating process (induced by the principal) and the choices of the algorithm/learner---this emerges due to the fact that the same sender action may induce two distinct replies from the algorithm following two distinct sender strategies. In their setting, this endogeneity motivates the use of ``policy-regret'' as an objective for the principal (due to their reinforcement learning approach to the principal's problem). While we do not use a regret objective for the principal, see \citeasnoun{AroraEtAl2012} and \citeasnoun{AroraEtAl2018} for more on the differences between these notions.} As a result, new technical issues (e.g., dealing with residual uncertainty in the correct actions) is not addressed in these papers to our knowledge.  Despite these differences, our hope is that this paper inspires further connection between the economics literature on decisionmakers as statisticians and the computer science literature on strategic choices against classes of algorithms. It appears to us that these results from computer science have not yet been fully appreciated in economics.

\section{(Sender-Receiver) Stage Games}

We first describe the stage game interaction in which the algorithm designer seeks to prescribe actions on behalf of myopic actors (who may be, for instance, receivers, buyers, or agents, depending on the particular setting of interest). The stage game features a strategic actor as well. That said, our exposition in this section addresses neither how the strategic actor chooses her strategy, nor how the algorithm is determined. This is done in Section \ref{subsection: machine game}, which describes the interaction which yields these objects and the relevant objective for each actor. 

\subsection{Actions and Parameters}

The stage game is a sender-receiver game in which an informed sender makes the first move.
We often call the sender \textit{the (informed) principal}, and the receiver \textit{the agent}, as our lead example is built on the informed principal problem of \citeasnoun{MaskinTirole92}. However, our model also describes a sender-receiver game with sender commitment, as in \citeasnoun{KamenicaGentzkow2011}.   

Let $\Theta$ be the set of types endowed with a prior distribution $\pi$ which is common knowledge among players. This type is payoff relevant to both the sender and the receiver. Define $\pi(\theta)$ as the probability that type $\theta\in\Theta$ is realized. Throughout the paper, we only consider $\pi$ with finite support. Conditioned on the realized value of $\theta\in\Theta$, the sender takes an action $p\in {\mathcal P}\subset {\mathbb R}^n$ where ${\mathcal P}$ is a compact subset of ${\mathbb R}^n$. Our analysis in most of the paper will assume further that $\abs{\mathcal{P}} < \infty$, although we discuss how to modify this assumption in Section \ref{sect: discretization} (to allow for, for instance, continuous distributions).  A strategy of the Sender is:
\[
\sigma : \Theta\rightarrow \Delta({\mathcal P}),
\]
where $\Delta(X)$ denotes the set of probability distributions over a set $X$. We let $\Sigma$ denote the set of feasible strategies for the sender, and importantly, assume that this strategy is determined (and committed to) in the Algorithm Game described in Section \ref{subsection: machine game}. Conditioned on $p$ (but not $\theta$), the agent chooses $a\in A$ according to
\[
r : {\mathcal P}\rightarrow \Delta(A).
\]
We assume $\abs{A} < \infty$, and in our analysis we treat the case of $\abs{A}=2$ and $\abs{A} > 2$ separately. The payoffs  of the sender and the receiver from $(\theta,p,a)$ are
$u(\theta,p,a)$ and $v(\theta,p,a)$.

The timing of the moves in the stage game is as follows:

\begin{itemize} 
\item[$S_1.$] An exogenous state $\theta\in\Theta$ is realized according to $\pi$, with only the sender observing the realized state $\theta$.

\item[$S_2.$] The sender's action $p\in {\mathcal P}$ is realized according to $\sigma(p : \theta)$.

\item[$S_3.$] The receiver takes action $a \in A$ conditioned on $p$.

\item[$S_4.$] Payoffs are realized according to $u(\theta,p,a)$ and $v(\theta,p,a)$.
\end{itemize}

For instance, if we interpret $p=(p_1,\ldots,p_n)$ as a contract, and $a\in A=\{-1, 1\}$ as ``reject'' ($a=-1$) or ``accept'' ($a=1$), the stage game is a model of the informed principal (\citeasnoun{MaskinTirole92}).   If $p$ is interpreted as a message sent by a worker, and $a\in A$ as the wage paid by the firm, then the stage game becomes a signaling game (\citeasnoun{Spence73}). For now, we place no further restrictions on $u(\theta,p,a)$ and $v(\theta,p,a)$, though these are often implicit in the economic problem of interest.

\subsection{Payoffs and the Rational Benchmark} Describing the outcomes of the above interactions when the receiver is rational is a familiar exercise. In this case, his optimization problem is
\[
\max_{a\in A}~~ \sum_{\theta \in \supp \pi} v(\theta,p,a)\pi(\theta : p)
\]
where $\pi(\theta : p)$ is the posterior probability assigned to $\theta$ conditioned on $p$.   If $p$ is used with a positive probability by $\sigma$, then $\pi(\theta : p)$ is computed by Bayes rule:
\[
  \pi(\theta:p) =\frac{\sigma(p:\theta)\pi(\theta)  }{
\sum_{\theta'}\sigma(p:\theta')\pi(\theta')    }.
\]
We define the \emph{rational label} to denote the receiver's strategy were they rational. More precisely, the optimal response is a function of the chosen $\sigma \in \Sigma$ and the realized $p \in \mathcal{P}$:
\[
y^R :\Sigma\times {\mathcal P}\rightarrow A
\]
is a solution to the following optimization problem:
\[
\sum_{\theta \in \supp \pi} v(\theta,p,y^R(\sigma,p))\pi(\theta: p) \ge \sum_{\theta \in \supp \pi} v(\theta,p,a)\pi(\theta: p) \qquad\forall a\in A
\]
where $\pi(\theta:p)$ is computed via Bayes rule whenever $\sum_\theta\sigma(p:\theta)\pi(\theta)>0$.\footnote{For a fixed $\sigma$, $y^R(\sigma,\cdot) : {\mathcal P}\rightarrow A$ is a strategy of the agent, satisfying sequential rationality.}

Define $\sigma^R$ as a best response of the sender against a Bayesian rational receiver with perfect foresight:
\[
 \sum_{\theta,p,a} u(\theta,p,a)\sigma^R(p:\theta) y^R(\sigma^R,p)\pi(\theta)
 \ge
 \sum_{\theta,p,a} u(\theta,p,a)\sigma(p:\theta) y^R(\sigma, p)\pi(\theta) \qquad\forall\sigma \in \Sigma.
\]
By the construction, $(\sigma^R,y^R)$ constitutes a perfect Bayesian equilibrium in the stage-game with a rational receiver.

\subsection{Examples of Stage Games} \label{RubinsteinExplanation}

Before proceeding to the description of the algorithm game, we describe a few of the stage game interactions that are of primary interest.  We will return to these later in order to illustrate the incentives for each player. 

\subsubsection{Insurance}
\label{Insurance}

The following is borrowed from \citeasnoun{MaskinTirole92}. Suppose that the principal (sender) is a shipping company seeking to purchase insurance from an insurance company, an agent (receiver) that is seeking to delegate the decision of whether to offer the terms put forth by the shipping company. The principal seeks insurance every period, but faces risk (e.g., due to the location of shipping demand) that is idiosyncratic every period. 

In this case, we imagine the principal choose terms within some compact set $\mathcal{P} \subset \mathbb{R}^{2}$, where $p=(x,q)$ denotes a policy which provides a payment $x$ in the event of a loss, and costs an amount $q$. If $\theta \in \{L,H\}$ (with $L < H$)  denotes the probability of a loss, then the principal's utility is: 

\begin{equation*} 
u(\theta, p, a) = \begin{cases} (1- \theta) f(I-q)+  \theta f(I-q-L+x) & a=1 \\ (1-\theta) f(I)+ \theta f(I-L) & a = -1 \end{cases},
\end{equation*} 

\noindent for some concave $f$. The agent's utility is: 

\begin{equation*} 
v(\theta,p, a) = \begin{cases} q- \theta x & a=1 \\ 0 & a=-1 \end{cases} 
\end{equation*}

It is natural to consider $\mathcal{P}$ whereby, against a rational buyer, the principal would seek a high level of insurance when risk is high (i.e., $\theta = H$), and avoid insurance when risk is low (i.e., $\theta=L$). In contrast, the agent's payoff may be decreasing in the quantity of insurance when $\theta=H$, while increasing in the quantity of insurance when $\theta=L$. 

\subsubsection{Labor Market Signaling}

Our framework is general and can be expanded to cover other settings as well. Let us consider a labor market signaling model.  Here, the ``receiver'' takes the role of the firm and the ``sender'' takes the role of the worker from the Spence signalling model (as in, for instance, \citeasnoun{MaskinTirole92}).  The true state is the productivity of the worker $\theta\in \Theta=\{ H, L\}$ where $\pi(H)=\pi(L)=\frac{1}{2}$: $H>L$.  Conditioned on $\theta$, a worker chooses $p$ which we interpret as education level.   Her strategy is
\[
\sigma : \Theta\rightarrow {\mathcal P}\subset {\mathbb R}_+.
\]
The payoff function of the sender is
\[
  u(\theta, p,a )=  a -\frac{p}{\theta+1}
\]
We abstract away the competition among multiple firms in the labor market.
Conditioned on $p$, the labor market wage is determined according to the expected productivity $\Expect(\theta : p)$ conditioned on $p$.  The firm has to pay the worker the equal amount of the expected productivity because of (un-modeled) competition among firms.  The receiver's goal is to make an accurate forecast about the expected productivity of the worker.  The payoff of the receiver is
\[
  v(\theta,p,a) = -(\theta-a)^2
\]
If the support of $\sigma(p:H)$ is disjoint from the support of $\sigma(p:L)$, $\sigma$ is a separating strategy.   If a separating strategy is an equilibrium strategy, then the equilibrium is called a separating equilibrium.   We often focus on the Riley outcome, which maximizes the ex ante expected payoff of the principal among all separating equilibria.

\subsubsection{Monopoly Market}
\label{Monopoly Market}

In the first two examples, only the sender has private information.
We can allow the stage game interaction to feature additional parameter observed only by the receiver, denoting this by $i$. We denote the sender's payoff by $u(\theta, p, a, i)$, and the receiver's payoff by $v(\theta, p, a, i)$.    For example, the sender may interact with some receivers who are algorithmic, and others who are fully rational.   The agent knows whether he is algorithmic or fully rational.  The principal does not observe the type of an agent, but only knows the probability distribution.  Indeed, the setting of \citeasnoun{Rubinstein93} features such a dichotomy, as we discuss. 

While \citeasnoun{Rubinstein93} differs expositionally, we review the key ideas and describe how it falls under our framework.   Suppose $\theta \in \Theta=\{L, H\}$.  $v_\theta$ is the marginal utility of the good where $v_H > v_L>0$.   The prior probability distribution is $\pi(H)=\pi(L)=\frac{1}{2}$.   The seller observes $\theta$.  The seller chooses a price $p\in  [v_L, v_H]\subset {\mathcal P}\subset {\mathbb R}$, conditioned on $\theta\in \Theta$.  The action of a buyer is $a\in A=\{ -1, 1\}$.   A buyer responds to $p\in {\mathcal P}$ by purchasing $(a=1)$ or not purchasing $(a=-1)$ the good at $p$.   

The seller is facing a unit mass of infinitesimal buyers, who can be either type 1 or type 2.   The proportion of type 1 buyer is $r\in (0,1)$.   A buyer observes his type, but the seller does not observe the type of a buyer.  The buyers differ in terms of the cost of sales.    If $\theta=L$, the product costs $c_L$ for the seller regardless of the types of the buyer.   If $\theta=H$, the product costs $c_i$ to serve type $i$ buyer $(i\in \{ 1,2\})$.   We assume
\begin{eqnarray}
  && c_1 > v_H > c_2 > v_L >c_L    \label{eq: lemon} \\
  && r c_1 + (1-r) c_2 > v_H      \label{eq: severe lemon}
\end{eqnarray}
so that the agent is exposed to the lemon's problem.   We focus on $\pi$ supported on parameters which satisfy \eqref{eq: lemon} and \eqref{eq: severe lemon}.

A buyer generates utility only if he purchases the good, whose payoff function is
\[
  v(\theta,p,a,i) =\begin{cases} 0    & \text{if } \ a=-1 \\
    v_\theta -p        & \text{if } \ a=1.
    \end{cases}
  \]
  regardless of $i$.
The payoff of the seller is
\[
  u(\theta,p,a,i) =\begin{cases}
    0    &  \text{if } \ a=-1   \\
    p- c_L   & \text{if } \ a=1, \ \theta=L   \\
    p- c_i & \text{if } \ a=1, \ \theta=H.
    \end{cases}
\]
The unique equilibrium strategy of the seller is
\[
  \sigma^R(\theta) =\begin{cases}
    v_L   &  \text{ if } \ \theta=L \\
    v_H   &  \text{ if } \ \theta=H.
    \end{cases}
\]
  The buyer's equilibrium strategy is
\[
    y^R(\sigma^R,p) =\begin{cases}
1  & \text{if } \ p \le v_L\\
-1  & \text{if } \ p >v_L.
      \end{cases}
\]
The trading occurs only if $\theta=L$, and therefore, the equilibrium is inefficient.   Note that the construction of $y^R$ requires a precise information about $v_L$.

\subsection{Introducing Time}

Our question of interest is whether the receiver can learn the rational label $y^R$, if the stage game is repeated over time.  As an intermediate step toward defining algorithm games, we describe our approach and assumptions involved with this step. In the next section, we discuss the algorithm choice that occurs on top of this. 

By \emph{expanded stage} game, we refer to a repetition of the stage game interaction, played over discrete time $t=1, 2, \ldots$, where the stage game interactions described previously occur at every $t \geq 1$.
Let $(\sigma_t,r_t)$ be the pair of strategies by the sender and the receiver in period $t$.\footnote{For now, we are intentionally vague about the strategy space of each player in the expanded stage game, as this is described in the next section.} The true state $\theta$ is drawn IID across periods according to $\pi$ and the pair $(p_t,a_t)$ of the price and the action in period $t$ is selected by
$(\sigma_t,r_t)$.   In this case, the expected payoff of the sender is\footnote{Our results are most elegantly stated in the undiscounted limit. Prior versions of this paper considered the case where future payoffs were discounted at rate $\delta < 1$; the main lessons remain valid for $\delta$ sufficiently large, although there are some added technical difficulties in the analysis of Section \ref{sect: discretization} this introduces.}
\begin{equation} 
  \lim_{T\rightarrow\infty}\Expect\frac{1}{T}\sum_{t=1}^T u(\theta_t,p_t,a_t)\pi(\theta_t)\sigma_t(p_t:\theta_t)r_t(a_t:p_t)
  \label{eq: seller long run}
\end{equation}
and the expected payoff of the receiver is
\begin{equation}
\lim_{T\rightarrow\infty}\Expect\frac{1}{T}\sum_{t=1}^T v(\theta_t,p_t,a_t)\pi(\theta_t)\sigma_t(p_t:\theta_t) r_t(a_t:p_t). 
\label{eq: buyer long run}
\end{equation} 

\section{Algorithm game}  \label{subsection: machine game}
Having outlined the basic timing of moves, we now describe the ``super'' game which determines the player's strategy in the expanded stage game. We refer to this as an \emph{algorithm} game.     Throughout this paper, we assume that the sender (principal) is fully rational, but the strategic choice of the receiver (agent) must be delegated to an algorithm.

\subsection{Choices of Algorithms}

We will refer to the strategy a receiver uses---which is output by the algorithm at every time---as a \emph{classifier}, in line with the machine learning and computer science literature: 

\begin{definition} A classifier is a function
  \[
    \gamma : P \rightarrow A .
  \]
  This may additionally be referred to as either a strategy or a forecasting rule.
\end{definition}

In order to construct the classifier, the algorithm faces some computational constraints. More precisely, we assume that there is a fixed set of classifiers $\mathcal{H}$ (referred to as the \emph{hypothesis class}) for which the algorithm can solve the following problem: 

\begin{equation} 
\min_{h \in \mathcal{H}}~~~ \sum_{p} \mathbf{1}[h(p)=y(p)]L(p), \label{eq:lossformulation}
\end{equation}

\noindent for an arbitrary function $L$ and function $y : \mathcal{P} \rightarrow A$. We refer to this step as finding the \emph{best fitting hypothesis}. We can think of $L$ as being the cost of misclassifying a particular observation, which may vary. Note that, since we can add arbitrary constants to $L$ and normalize so that it sums to 1 over all $p$, it is equivalent to assume the algorithm can solve

\begin{equation} 
\max_{h \in \mathcal{H}}~~~ \sum_{p} \mathbf{1}[h(p)=y(p)]d(p), \label{eq:probabilityformulation}
\end{equation}

\noindent for a probability distribution $d$ over $p$. This provides an alternative interpretation, regarding the classifier seeking to make the correct guess with the highest possible probability. 

We treat the process of finding the best fitting hypothesis as a black box. The purpose of this paper, however, is to understand how the algorithm designer might utilize from additional capabilities, and across a variety of environments. One question is which kinds of additional capabilities are necessary. The main ones we will discuss are: 

\begin{itemize} 
\item Constructing labels based on observations, 

\item Creating classifiers derived from solutions to the above maximization,

\item Changing observations of $p_{t}$ to $\hat{p}_{t}$, if the data is generated by a randomized rule.

\end{itemize}

One hypothesis class is of particular interest. Let ${\sf H}(\lambda,\omega)$ be a hyperplane in ${\mathbb R}^n$: $\exists\lambda\in {\mathbb R}^n$ and $\omega\in {\mathbb R}$ such that
\[
{\sf H}(\lambda, \omega)=\left\{ p\in {\mathbb R}^n \ : \ \lambda p =\omega \right\}.
\]
Define ${\sf H}_+(\lambda,\omega)$ as the closed half space above ${\sf H}(\lambda, \omega)$:
\[
{\sf H}_+(\lambda, \omega)=\left\{ p\in {\mathbb R}^n \ : \ \lambda p \ge\omega \right\}.
\]

\begin{definition}
A single threshold (linear) classifier is a mapping
\[
h : {\mathcal P}\rightarrow A
\]
where $\exists a_+, a_-\in A$, $\lambda \in \R^{n}$ and $\omega \in \R$ such that
\[
  h(p)=\begin{cases}
    a_+  & \text{if } \ p\in {\sf H}_+(\lambda, \omega) \\
    a_-  & \text{if } \ p\not\in {\sf H}_+(\lambda, \omega).
    \end{cases}
  \]
\end{definition}

\begin{definition}
Let $\Gamma$ be the set of all classifiers, and $\tilde{\Gamma} \subset \Gamma$ denote a subset of classifiers.  A \textit{statistical procedure} or \textit{algorithm} is an onto function
\[
  \tau : \mathcal{D} \rightarrow \tilde{\Gamma},
\]
where $\mathcal{D}$ is a set of histories, $\mathcal{T}$ is the set of feasible algorithms (i.e., a subset of the set of functions from $\mathcal{D}$ into $\tilde{\Gamma}$).
\end{definition}

\noindent
What $\mathcal {D}$ consists of is very much problem specific.   In a typical learning model, we assume that the receiver observes the realized outcome $(p_t,a_t)$ in period $t$ but also can access some information about the performance of his choice $a_t$ to achieve his goal.   For example, if the goal of the receiver is to learn the rational label $y^R$, a natural candidate would be a sufficient statistics of the ex-post payoff $v(\theta_t,p_t,a_t)$.

One specification of $\mathcal{T}$ emerges from not having any restrictions on $\tilde{\Gamma}$ at all. In general, the set $\tilde{\Gamma}$ will be implicit in the description of the algorithm. Our main interest is in understanding which kinds of ${\mathcal T}$ enable the receiver to approximate the rational label $y^R$.

\subsection{Timing and Objectives}

An algorithm game takes the interaction in the stage game as a starting point, and considers the outcome when, instead of having the receiver's strategy emerging from Bayesian rationality, it instead emerges from fitting a model to past observations.

An algorithm game is a simultaneous move game under asymmetric information between the (rational) sender and the boundedly rational (``algorithmic'') receiver, built on the ``expanded'' stage game.     
\medskip

\begin{itemize}
\item[$A_{-1}.$] According to some prior distribution, nature selects the distribution $\pi$ of the underlying game from $\Pi$, where $\Pi$ is a subset of probability distributions over $\Theta$ with finite support.

\item[$A_0.$] Conditioned on realized $\pi$, the sender commits to some strategy $\sigma$.   The receiver (or alternatively, an entity acting on the receiver's behalf) commits an algorithm $\tau \in \mathcal{T}$ without observing the realized $\pi \in \Pi$.  

\item[$A_1.$]  The expanded stage game is played, with the receiver's strategy in each period $t$ being $\tau(D_{t})(p)$ (i.e., the action specified by the algorithm following sender's action $p$ at time $t$), with the algorithm adding the observation (which includes $p$ and ex-post utility following each receiver action) to the dataset at the end of each period.  
\end{itemize}

\medskip

\noindent These actions determine the realized payoffs by each player, as described in the previous section. Notice that we do not necessarily assume that any pair $\pi_{1}, \pi_{2} \in \Pi$ have intersecting or even overlapping support (though this is also certainly allowed). Correspondingly, we emphasize we do not assume $\Theta$ is itself finite, even though all $\pi \in \Pi$ have finite support. Additionally, since we only assume the algorithm observes the receiver's ex-post utility, $\theta$ itself need not ever observed by the algorithm. For instance, $\theta$ may reflect a production cost which only influences the sender's payoff. In this case, while the algorithm would observe the receiver's payoff from each action, they would not observe the seller's cost. 

The sender chooses $\sigma$ once and for all, with the action $p_t$ drawn i.i.d. over time.    On the other hand, the receiver's strategy in period $t$ is determined by the algorithm $\tau$ and history $D_t$ in period $t$.
The expression for the payoffs of the sender and the receiver are (\ref{eq: seller long run}) and (\ref{eq: buyer long run}), respectively, where $r_t(\cdot:p)$ is given by $\tau(D_{t})(p)$.

We consider the objectives of the rational player and the algorithmic player separately. The former is straightforward; given a sequence $(\theta_{t}, p_{t}, a_{t})$, the rational player's payoff (i.e., the sender's payoff) function is simply the long run average expected payoff:
\[
  {\mathcal U}_s(\sigma,\tau)=  \lim_{T \rightarrow \infty} \frac{1}{T}
\sum_{t=1}^T  \Expect u(\theta_{t}, p_{t}, a_{t}),
\]
where $(p_t,a_t)$ is generated by $(\sigma,\tau)$ in period $t$ and the expectation is otherwise conditioned only on $\pi \in \Pi$ (recalling that $\theta$ is taken to be drawn IID).   The objective of the rational sender is to maximize ${\mathcal U}_s$ by choosing $\sigma$, conditioned on $\pi \in \Pi$.
The payoff function of the algorithm player is also the long-run average expected payoff:
\[
  {\mathcal U}_r(\sigma,\tau)=  \lim_{T \rightarrow \infty} \frac{1}{T}
\sum_{t=1}^T  \Expect v(\theta_{t}, p_{t}, a_{t}).
\]
Note that implicitly we the players do not discount future payoffs, we call the algorithm game an algorithm 

We are interested in comparing the outcomes induced by the algorithm and the rational label $y^{R}(\sigma,p)$ introduced in the last section. We note that the comparison is potentially unfair because algorithms are more constrained in the decision rules that can be used. We therefore introduce a notion of rationality reflecting these limits:  

\begin{definition} \label{def:CR}
An algorithm $\tau$ is \textbf{constrained rational}, if $\forall\epsilon,\delta>0$, $\forall\sigma$, $\exists T$ such that $\forall t\ge T$,
  \[
\Prob \left( \sum_{\theta \in \supp \pi}   v (\theta,p,\tau(D_t)(p))\sigma(p:\theta)\pi(\theta) \ge
\max_{h\in{\tilde\Gamma}} \sum_{\theta \in \supp \pi}  v (\theta,p,h(p))\sigma(p:\theta)\pi(\theta) -\epsilon\right) \ge 1-\delta,
  \]

\noindent with the probability referring to uncertainty over $D_{t}$. An algorithm $\tau$ is \textbf{fully rational} if $\tilde{\Gamma}$ is replaced by the set of all $h : \mathcal{P} \rightarrow A$.
\end{definition}

\noindent The ``constrained'' qualifier is due to the limits on the strategies that can be chosen by the receiver. A fully rational receiver would choose $y^{R}(\sigma,p)$ in the stage game; a constrained rational algorithm yields actions are as optimally as possible, given that its output must be within the expanded model class $\tilde{\Gamma}$.   We often regard $\gamma \in {\tilde\Gamma}$ as a forecasting rule and $\tau$ as a formal procedure to construct a (strong) forecasting rule.  

In later sections, we will also discuss an important performance criterion of an algorithm is PAC (Probably Approximately Correct) learnability (\citeasnoun{Shalev-ShwartzandBen-David14}).

\begin{definition} \label{def:PAC}
Algorithm  $\tau$ is PAC (Probably Approximately Correct) learnable if
$\forall\sigma\in\Sigma$,  $\forall\epsilon >0$, $\exists T$ such that $\forall t\ge T$
  \[
\Prob \left( \tau(D_t)(p)\ne y^R(\sigma,p) \right)<\epsilon,
  \]
with the probability referring to uncertainty over $D_{t}$ and the realized $p$.  
\end{definition}

\noindent A key difference between Definitions \ref{def:CR} and \ref{def:PAC} is that the latter is a condition on the actions themselves and the decision rule, yet the former is a condition on the utility. In order to learn the equililbrium outcome $y^R(\sigma^R,\cdot)$, $\sigma^R$ must be a best response to the decision rule induced by the algorithm in the long run.

\begin{definition}
An outcome $({\overline\sigma},\tau)$ of the algorithm game \textbf{emulates  $(\sigma^R,y^R)$ of the underlying stage game,} if ${\overline\sigma}=\sigma^R$ and
$\tau$ is fully rational.
\end{definition}

The substance of the definition is that $\sigma^R$ is a best response to $\tau$.  Then, along the equilibrium path of the algorithm game, the receiver behaves as if he perfectly foresees $\sigma^R$ and responds optimally subject to the feasibility constraint imposed by ${\tilde\Gamma}$.

\subsection{Specifying $\mathcal{T}$} 
Our main interest will be in the case where $\mathcal{T}$ is restricted to emerge as the outcome of an \emph{ensemble algorithm}. 

\begin{definition} \label{def:ensemble}
  Classifier $H$ is an ensemble of ${\mathcal H}$ if $\exists h_1,\ldots,h_K\in {\mathcal H}$ and $\alpha_1,\ldots,\alpha_K\ge 0$ such that
  \[
    H(\sigma, p)= \argmax_{\hspace{-7mm} a} \sum_{k=1}^{K} \alpha_k \mathbf{1}[a=h_{k}(\sigma, p)]
  \]
\end{definition}

\noindent Without loss of generality, we can assume that $\sum_{k=1}^K\alpha_k=1$, since if not we can simply divide by this sum and obtain the same classifier.  We can interpret $H$ as a weighted majority  vote of $h_1,\ldots,h_K$.  An ensemble algorithm constructs a classifier through a linear combination classifiers from $\mathcal{H}$.   Since the final classifier is constructed through a basic arithmetic operation, one can easily construct an elaborate classifier from rudimentary classifiers.   Ensemble algorithms have been remarkably successful in real world applications (\citeasnoun{Dietterich00}). 

The algorithms produce an output ensemble classifier according to a recursive scheme: 

\begin{itemize} 
\item First, the loss function in (\ref{eq:lossformulation}), say $L_{1}$, or probability distribution in (\ref{eq:probabilityformulation}), say $d_{1}$, is taken to treat all observed sender actions symmetrically---that is, $L_{1}(p)=d_{1}(p)=1/G$ where $G$ is the number of elements in the support of mixed strategy $\sigma$.

\item At each stage $k=1, \ldots$, the best fitting hypothesis is found by solving either (\ref{eq:lossformulation}) or (\ref{eq:probabilityformulation}). The best fitting hypothesis is referred to as $h_{k}$. 

\item The term $\alpha_{k}$ is then determined, possibly as a function of the objective of the best fitting hypothesis.

\item Depending on $h_{k}$ and $\alpha_{k}$, the loss function $L_{k}$ is updated to $L_{k+1}$ (or, in the case of distributions, $d_{k}$ is updated to $d_{k+1}$).

\item After repeating this iteration $K$ times, a classifier of the form of Definition \ref{def:ensemble} is output, which is used to determine the final choice of the receiver. 
\end{itemize}

The ability to use an ensemble algorithm allows additional richness in the set of classifiers that can be used. There remain, however, a number of challenges: 

\begin{itemize} 
\item Clearly, repeatedly solving the same problem will not yield different outcomes, and so to meaningfully expand $\mathcal{H}$ one needs to determine how to change the objective to be fit as well, and 

\item Weights must be specified in advance. 
\end{itemize}

\noindent Both of these are \emph{on top of} the need to potentially alter the observed $p_{t}$ and determining the labels $y_{t}(\sigma, p_{t})$ to use for the observations, since the observed utility-maximizing decision need not coincide with the rational one ex-post.

\begin{remark} 
The reader may still wonder why algorithm design is necessary in the first place. For instance, if $y^{R}(\sigma,p)$ is a single threshold rule, it may be surprising that simply fitting the optimal single threshold rule to the data is insufficient to emulate rationality. While it may be sufficient in some cases, it is not in general, and in particular the ability to emulate rationality does not follow from the rational reply being in $\mathcal{H}$; the reason is that it is necessary to be able to construct richer rules in order to deter the sender from deviating to exploit limitations in $\mathcal{H}$, which would prevent the receiver from choosing the rational reply ``off-path.'' This is articulated in Section \ref{sect: whyalg}, where we also clarify the role of taking a richer set of possible $\Pi$, to correspondingly justify choosing a sufficiently rich set of $\mathcal{H}$ to begin with. 
\end{remark}

\section{Main Results} 

We now present our main results, showing the existence of an equilibrium of the algorithm game where the rational reply is emulated. We begin with a preliminary observation, useful for understanding our subsequent analysis: PAC learnability is a sufficient condition for the algorithm game to have a Nash equilibrium emulating $(\sigma^R,y^R(\sigma^R,p))$.

\begin{proposition}  \label{pr: PAC learnable}
If $\tau$ is PAC learnable, then $(\sigma^R,\tau)$ is a Nash equilibrium of the algorithm game which emulates $(\sigma^R,y^R(\sigma^R,p))$.
\end{proposition}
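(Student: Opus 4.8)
The plan is to verify the two defining properties of an emulating Nash equilibrium directly from the definitions: (i) that $\tau$ is fully rational, and (ii) that $\sigma^R$ is a best response to $\tau$. Property (i) should be essentially immediate from PAC learnability: if $\Prob(\tau(D_t)(p) \ne y^R(\sigma,p)) < \epsilon$ for all $t \ge T$, then since $y^R(\sigma,\cdot)$ is by construction the maximizer of $\sum_{\theta} v(\theta,p,a)\pi(\theta:p)$ over \emph{all} $a \in A$, the algorithm's output agrees with the unconstrained optimum with probability approaching one. Hence for every $\sigma$ the relevant inequality in Definition \ref{def:CR} holds with $\tilde\Gamma$ replaced by the set of \emph{all} $h : \mathcal{P} \to A$; one just has to pass from a statement about the probability of choosing the wrong action pointwise in $p$ to a statement about the expected payoff, absorbing the small-probability event into $\epsilon$ using boundedness of $v$ (which holds since $\Theta$'s support, $\mathcal{P}$, and $A$ are all finite). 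This gives that $\tau$ is fully rational in the sense of the definition, which is exactly what ``emulates'' requires on the algorithm side.

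For property (ii), the point is that if $\tau$ is PAC learnable then along \emph{any} sender strategy $\sigma$ the receiver's long-run behavior coincides (in the $\lim \frac1T \sum$ sense defining $\mathcal{U}_s$) with $y^R(\sigma,\cdot)$. So for any deviation $\sigma$,
\[
\mathcal{U}_s(\sigma,\tau) = \lim_{T\to\infty}\frac1T\sum_{t=1}^T \Expect\, u(\theta_t,p_t,a_t)
\]
equals $\sum_{\theta,p,a} u(\theta,p,a)\sigma(p:\theta)y^R(\sigma,p)\pi(\theta)$, because the per-period discrepancy between $\tau(D_t)(p)$ and $y^R(\sigma,p)$ vanishes in probability and $u$ is bounded, so its Cesàro average of expectations converges to the rational-receiver value. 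The same holds at $\sigma^R$, giving $\mathcal{U}_s(\sigma^R,\tau) = \sum u(\theta,p,a)\sigma^R(p:\theta)y^R(\sigma^R,p)\pi(\theta)$. But the displayed defining inequality for $\sigma^R$ in the ``Payoffs and the Rational Benchmark'' subsection says exactly that this latter quantity dominates $\sum u(\theta,p,a)\sigma(p:\theta)y^R(\sigma,p)\pi(\theta)$ for all $\sigma \in \Sigma$. Chaining these, $\mathcal{U}_s(\sigma^R,\tau) \ge \mathcal{U}_s(\sigma,\tau)$ for all $\sigma$, so $\sigma^R$ is a best response to $\tau$. Since the receiver's side is pinned down by the algorithm (the receiver has no further choice once $\tau$ is fixed, and $\tau$ being fully rational means it is already playing ``as well as possible''), the pair $(\sigma^R,\tau)$ is a Nash equilibrium.

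The only genuinely delicate step is the interchange of limits / the passage from ``wrong with probability $<\epsilon$ per period'' to ``long-run average expected payoff equals the rational value.'' Here one must be careful that PAC learnability gives a bound uniform in $t \ge T$ (it does, per Definition \ref{def:PAC}), so the Cesàro average of $\Expect u(\theta_t,p_t,a_t)$ differs from the target by at most $O(\epsilon) + O(T/ \bar T)$ over horizon $\bar T$, and the second term vanishes as $\bar T \to \infty$; then let $\epsilon \to 0$. Boundedness of $u$ and $v$ (finite $\Theta$-support, finite $\mathcal{P}$, finite $A$) is what makes this routine rather than subtle. I would state these boundedness facts explicitly and then carry out the two bullet points above; I do not anticipate any obstacle beyond bookkeeping, since PAC learnability is a strong hypothesis that essentially hands us the receiver's limiting behavior for free.
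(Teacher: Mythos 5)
Your proof is correct and follows essentially the same route as the paper's: PAC learnability forces the receiver's long-run behavior to coincide with $y^R(\sigma,\cdot)$ for every sender strategy $\sigma$, so $\mathcal{U}_s(\sigma,\tau)$ collapses to the rational-receiver payoff, and the defining inequality for $\sigma^R$ then gives the sender's best-response property. You are somewhat more careful than the paper on two points that it leaves implicit: the verification that PAC learnability implies $\tau$ is \emph{fully rational} in the sense required by ``emulates,'' and the Cesàro-average interchange-of-limits step using boundedness of $u$ and $v$ together with the uniformity of the PAC bound in $t \ge T$.
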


\begin{proof}
If $\tau$ is PAC learnable, then the receiver learns $\sigma$ accurately in the long run.  Thus, the long run average expected payoff of the sender is
  \[
{\mathcal U}(\sigma,\tau)=\Expect_\theta u(\theta,\sigma,y^R(\sigma,\sigma(\theta)))
  \]
By the definition,  
\[
\sigma^R=\arg\max \Expect_\theta u(\theta,\sigma,y^R(\sigma,\sigma(\theta))).
\]
By PAC learnability,
\[
\lim_{t\rightarrow\infty} \Prob[ \tau(D_t)(p) \neq y^R(\sigma^R,p)] = 0,
\]
implying that $\E[v(\theta_{t}, p_{t}, a_{t})] \rightarrow \E[v(\theta_{t},p_{t}, y^{R}(\sigma^{R},p)]$ as $t \rightarrow \infty$. This implies the long run discounted payoffs are equal to those obtained against a rational player, and hence $(\sigma^R,\tau)$ constitutes a Nash equilibrium which emulates 
$(\sigma^R,y^R(\sigma^R,p))$.
\end{proof}

This observation suggests it suffices to show the PAC-condition holds; in that case, the sender would find it optimal to choose $\sigma^{R}$, and by definition it would not be possible for the receiver to outperform rationality. However, there are two main difficulties which we seek to emphasize: 

\begin{enumerate} 
\item First, it may be that $y^{R} \in \mathcal{H}$, and yet if $\mathcal{H}$ is limited then the rational outcome cannot be emulated without expanding the set of feasible decision rules, and 

\item Second, one still needs to specify how the algorithm should use the historical data in inferring the correct decision. 
\end{enumerate}

\noindent This section addresses each of these issues. We first consider the case where the receiver knows the values of $y^R(\sigma,p)$ $\forall (\sigma,p)$.  We then show, in Section \ref{sect: firstpart}, that the PAC-condition holds for an algorithm: 

\begin{proposition}  \label{prop: firstpart}  If the receiver knows the values of $y^R(\sigma,p)$ $\forall (\sigma,p)$, there exists an algorithm $\tau_A$ that is PAC learnable.   Thus, $(\sigma^R,\tau_A)$ is a Nash equilibrium of the algorithm game, which emulates $(\sigma^R,y^R)$.
\end{proposition}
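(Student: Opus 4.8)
The plan is to construct $\tau_A$ by running Adaptive Boosting on $\mathcal{H}$ against the (assumed known) labels $y^R(\sigma,\cdot)$, using the set of realized prices as the sample, and then to invoke Proposition \ref{pr: PAC learnable} for the equilibrium conclusion. Since $\mathcal P$ is finite and $\pi$ has finite support, for each $\sigma$ the set $S(\sigma)=\{p : \sum_\theta \sigma(p:\theta)\pi(\theta)>0\}$ of prices that can ever be realized is finite, and these are exactly the prices at which the receiver ever has to act. The receiver does not know $\sigma$ a priori, but with probability one no price outside $S(\sigma)$ is ever observed, and each $p\in S(\sigma)$ occurs with probability $q_p>0$ each period, so the probability that the observed set $\hat S_t$ after $t$ periods differs from $S(\sigma)$ is at most $\sum_{p\in S(\sigma)}(1-q_p)^t\to 0$. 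On the event $\hat S_t=S(\sigma)$, knowledge of the function $y^R$ hands the receiver the correct target label at every price that matters.

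First I would extract a uniform weak-learnability constant. Because $\mathcal P$ is finite there are only finitely many possible supports $S(\sigma)$ and finitely many labelings $S(\sigma)\to A$, so the weak learnability hypothesis yields a single $\gamma>0$ with the property that for every $\sigma$ and every distribution $d$ supported on $S(\sigma)$ there is $h\in\mathcal H$ with $\sum_p \mathbf{1}[h(p)=y^R(\sigma,p)]\,d(p)\ge \tfrac12+\gamma$, and a single bound $|S(\sigma)|\le G$. Then I would run the recursive ensemble scheme of Section \ref{subsection: machine game} verbatim with the textbook AdaBoost weights: $d_1$ uniform on $\hat S_t$, $h_k$ a maximizer of \eqref{eq:probabilityformulation} under $d_k$, $\alpha_k=\tfrac12\ln\frac{1-\varepsilon_k}{\varepsilon_k}$ where $\varepsilon_k$ is the $d_k$-weighted error of $h_k$, and the usual exponential reweighting to $d_{k+1}$. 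The standard analysis (\citeasnoun{SchapireandFreund12}) bounds the $d_1$-fraction of sample points misclassified by the ensemble $H_K$ by $\prod_{k=1}^K 2\sqrt{\varepsilon_k(1-\varepsilon_k)}\le e^{-2\gamma^2 K}$, using $\varepsilon_k\le \tfrac12-\gamma$. Since $d_1$ is uniform on at most $G$ points, this fraction is an integer multiple of $1/G$, so as soon as $K=K(\gamma,G)$ satisfies $e^{-2\gamma^2 K}<1/G$ the misclassification fraction is exactly zero: $H_K(p)=y^R(\sigma,p)$ for every $p\in S(\sigma)$. Crucially $K$ depends only on $\gamma$ and $G$, so it can be fixed in advance, and $H_K$ is an ensemble of $\mathcal H$, so $\tau_A\in\mathcal T$.

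Assembling the pieces: $\tau_A$ is the algorithm that, given $D_t$, forms $\hat S_t$, runs $K$ rounds of the above boosting scheme against the $y^R$-labels, and outputs $H_K$. Fix $\sigma$ and $\epsilon,\delta>0$ and choose $T$ with $\Prob(\hat S_t\ne S(\sigma))<\delta$ for all $t\ge T$. On the complementary event $\tau_A(D_t)(p)=y^R(\sigma,p)$ for every realizable $p$, and since any realized price lies in $S(\sigma)$ by definition, $\Prob\!\big(\tau_A(D_t)(p)\ne y^R(\sigma,p)\big)<\delta$, which is precisely PAC learnability in the sense of Definition \ref{def:PAC}. The Nash-equilibrium and emulation claims then follow immediately from Proposition \ref{pr: PAC learnable}.

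The step I expect to require the most care is the uniformity underlying the choice of a single $\gamma$ and a single number of boosting rounds $K$ valid for every sender strategy $\sigma$: this is where finiteness of $\mathcal P$ does real work, and it is also where the weak learnability hypothesis is actually consumed. The remaining ingredients — the elementary concentration argument that every support price is eventually observed, the textbook AdaBoost training-error bound, and the observation that an error which is an integer multiple of $1/G$ and strictly below $1/G$ must vanish — are routine; and the genuinely new difficulty, reconstructing $y^R(\sigma,\cdot)$ from payoff data rather than being handed it, is exactly what the hypothesis of this proposition sets aside and what the subsequent sections must address.
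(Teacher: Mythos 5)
Your proof is correct and follows the same essential route the paper takes: run AdaBoost against the known labels $y^R(\sigma,\cdot)$ and then invoke Proposition \ref{pr: PAC learnable} to get emulation of $(\sigma^R,y^R)$. Where you depart from the paper's presentation is in two helpful respects. The paper's description of $\tau_A$ initializes $d_1$ as the uniform distribution ``over the support of $\sigma$,'' implicitly treating that support as given; you instead construct $\tau_A$ from the observed support $\hat S_t$ and bound $\Prob(\hat S_t\ne S(\sigma))$ explicitly, which is the actual source of the exponentially-vanishing error. Relatedly, the paper's bound \eqref{eq: SF convergence} is the standard AdaBoost training-error bound with the number of boosting rounds growing with $t$; you instead invoke the integer-multiple-of-$1/G$ observation to argue that a fixed $K=K(\gamma,G)$ already gives a training error of exactly zero on $\hat S_t$, so that after $K$ rounds the only residual uncertainty is whether the full support has been observed. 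That cleanly decouples the deterministic boosting analysis from the stochastic observation process, and it makes explicit the uniformity-in-$\sigma$ of both $\gamma$ and $K$, which the paper leaves implicit. Both treatments deliver the same exponential PAC guarantee, and you correctly identify that the genuinely new difficulty — inferring $y^R$ from payoff data rather than being handed it — is exactly what this proposition's hypothesis sidesteps and what Proposition \ref{cr: corollary} addresses.
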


We then turn to the case where the algorithm cannot observe $y^R(\sigma,p)$. This yields an algorithm $\tau_{\Ahat}$, which coincides with $\tau_{A}$ with the added step of inferring the labels. We show that we obtain an analogous result for this case in Section \ref{sect: secondpart}:

\begin{proposition} \label{cr: corollary} 
Suppose that $y^R(\sigma,p)$ is a strict best response $\forall\sigma$ but the receiver does not observe the values of $y^R(\sigma,p)$.   Then, there exists an algorithm $\tau_{\Ahat}$ that is PAC learnable.  $(\sigma^R,\tau_{\Ahat})$ is a Nash equilibrium of the algorithm game that emulates $(\sigma^R,y^R)$. 
\end{proposition}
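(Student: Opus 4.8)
The plan is to reduce to Proposition~\ref{prop: firstpart} by having $\tau_{\Ahat}$ first reconstruct the labels $y^R(\sigma,\cdot)$ from the ex-post payoff data and then run the algorithm $\tau_A$ on these estimates. Concretely, I would define $\tau_{\Ahat}$ in two nested stages. In the outer (estimation) stage, for each sender action $p$ that has appeared in $D_t$ and each $a\in A$, form the empirical average $\hat{v}_{t}(p,a)$ of the ex-post utility following action $a$ over the periods in which $p$ was played, and set $\hat{y}_{t}(\sigma,p)=\argmax_{a\in A}\hat{v}_{t}(p,a)$ (ties broken by a fixed rule). In the inner stage, $\tau_{\Ahat}$ runs exactly the procedure of Proposition~\ref{prop: firstpart}, but wherever $\tau_A$ would consult the value $y^R(\sigma,p)$ it instead uses $\hat{y}_{t}(\sigma,p)$; in particular the reweighting updates associated with \eqref{eq:lossformulation}--\eqref{eq:probabilityformulation} are computed from the estimated labels. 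By construction, $\tau_{\Ahat}$ produces exactly the output of $\tau_A$ on the event $E_t:=\{\hat{y}_{t}(\sigma,p)=y^R(\sigma,p)\ \forall p\in\supp\sigma\}$, where $\supp\sigma$ denotes the (finite) support of the marginal over $\mathcal{P}$ induced by $\sigma$.

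The second step is to bound $\Prob(E_t^{c})$. Since $\theta_t$ is drawn i.i.d.\ from $\pi$ and $p_t$ from the marginal of $\sigma$, every $p\in\supp\sigma$ is played a positive fraction of periods, and conditional on $\{p_t=p\}$ the state $\theta_t$ is an i.i.d.\ draw from the posterior $\pi(\cdot:p)$; hence $\hat{v}_{t}(p,a)\to\sum_{\theta\in\supp\pi}v(\theta,p,a)\pi(\theta:p)$ almost surely. Because $y^R(\sigma,p)$ is a \emph{strict} best response, the posterior-expected utility of $y^R(\sigma,p)$ exceeds that of every other action by a strictly positive margin, so a large-deviations (Hoeffding-type) estimate shows $\Prob(\hat{y}_{t}(\sigma,p)\ne y^R(\sigma,p))$ decays, indeed exponentially in the number of times $p$ has been sampled. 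Taking a union bound over the finitely many $p\in\supp\sigma$, for every $\eta>0$ there is $T_1$ with $\Prob(E_t^{c})<\eta$ for all $t\ge T_1$.

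The third step combines the two error sources. Fix $\epsilon>0$ and $\sigma\in\Sigma$. Proposition~\ref{prop: firstpart} gives $T_2$ such that $\Prob(\tau_A(D_t)(p)\ne y^R(\sigma,p))<\epsilon/2$ for $t\ge T_2$; apply the previous step with $\eta=\epsilon/2$ and set $T=\max(T_1,T_2)$. On $E_t$ the output of $\tau_{\Ahat}$ coincides with that of $\tau_A$, and the distribution of the statistics $\tau_A$ and $\tau_{\Ahat}$ actually use (namely $p_s$ and the ex-post utilities $v(\theta_s,p_s,\cdot)$ for $s\le t$) does not depend on which of the two algorithms generated the play, so $\{\tau_{\Ahat}(D_t)(p)\ne y^R(\sigma,p)\}\cap E_t\subseteq\{\tau_A(D_t)(p)\ne y^R(\sigma,p)\}$; hence for $t\ge T$, $\Prob(\tau_{\Ahat}(D_t)(p)\ne y^R(\sigma,p))\le\Prob(E_t^c)+\Prob(\tau_A(D_t)(p)\ne y^R(\sigma,p))<\epsilon$. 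This is PAC learnability of $\tau_{\Ahat}$, and Proposition~\ref{pr: PAC learnable} then yields that $(\sigma^R,\tau_{\Ahat})$ is a Nash equilibrium of the algorithm game emulating $(\sigma^R,y^R(\sigma^R,p))$.

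The main obstacle is the second step, and this is exactly where the \emph{strict} best-response hypothesis and the large-deviations machinery are indispensable: unlike in Proposition~\ref{prop: firstpart}, the rational label is not observed and need not be identifiable from any single period's realized payoff (the ex-post payoff-maximizing action can differ from $y^R$, as in a lemons market), so it must be reconstructed by aggregating noisy observations, and one must argue this aggregation is not only consistent but converges fast enough that its error can be absorbed into the $\epsilon$-slack of the PAC guarantee without disturbing the sender's incentive to play $\sigma^R$. A secondary point to handle carefully is that $D_t$ is itself generated under $\tau_{\Ahat}$, so one should verify that the statistics entering $\hat{y}_{t}$ do not depend on the algorithm's past outputs — keeping label estimation a genuine i.i.d.\ problem rather than a bandit problem — and, if only the payoff of the action actually taken is observed rather than the full vector $v(\theta_t,p_t,\cdot)$, that the algorithm can cycle deterministically through actions at each $p$ to estimate each $\hat{v}_{t}(p,a)$ without affecting the long-run averages.
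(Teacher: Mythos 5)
Your proposal is correct and reaches the same conclusion, but it takes a genuinely different technical route from the paper's. Your algorithm $\tau_{\Ahat}$ re-runs the entire boosting procedure ``from scratch'' at each period $t$ using the single label vector $\hat y_t(\sigma,\cdot)$, and this is exactly what makes your coupling argument clean: on the event $E_t=\{\hat y_t(\sigma,p)=y^R(\sigma,p)\ \forall p\in\supp\sigma\}$ the output is literally $\tau_A(D_t)$, so the error probability splits by union bound into $\Prob(E_t^c)$ plus the $\tau_A$ error, each controlled separately. The paper's $\tau_{\Ahat}$, by contrast, accumulates one weak learner per period, with round $s$ fitted against that period's label estimate $\hat y_s$; under that construction your coupling does \emph{not} hold on $E_t$, because rounds $s<t$ may have been trained on incorrect labels even when $\hat y_t$ is correct. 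The paper therefore conditions on the stronger event $\mathcal{L}=\{\hat y_s(p)=y^R(\sigma,p)\ \forall s\ge T(\epsilon)\}$ and carries along a multiplicative constant $M(\epsilon)$ that absorbs the weight contributions from the potentially mislabeled rounds $s<T(\epsilon)$, then shows the exponential decay from the correctly labeled tail dominates. Your version buys a shorter and more transparent argument at the cost of discarding the online/incremental structure the paper implicitly maintains (one new weak learner per period); the paper's version is more work but keeps the ensemble growing monotonically with the data. Both establish existence of a PAC-learnable $\tau_{\Ahat}$, which is all the proposition requires; the remaining steps (LDP/Hoeffding bound for $\Prob(E_t^c)$ via strictness of the best response, and the invocation of Proposition \ref{pr: PAC learnable} to get the equilibrium and emulation claims) match the paper's logic. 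Your closing caveats about the data-generating process being exogenous to the receiver's actions, and about observability of the full payoff vector $v(\theta_t,p_t,\cdot)$, are the right things to flag and are consistent with the paper's informational assumptions.
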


In our analysis, the first step is to construct an algorithm that generates an accurate forecast in the long run.  The remaining step is to show whether the sender has an incentive to choose $\sigma^R$ against $\tau_{\Ahat}$ in the algorithm game, in Section \ref{sect: examplesreview}.

\subsection{Specifying the Algorithm and Weak Learnability (Proposition \ref{prop: firstpart})}  \label{sect: firstpart}
\subsubsection{Weak Learnability}

The sufficient condition which ensures we can approximate an arbitrary decision rule combining single-thresholds is \emph{weak learnability}. Roughly speaking, weak learnability says that the hypothesis class can outperform someone who had some very minimal knowledge of the truth of the hypothesis. That is, it must be that the hypothesis class can do better than a someone who made a random guess, which would be made correct with some arbitrarily small probability. While this may seem permissive---and indeed, it is certainly less stringent than requiring it can approximate the truth with high probability---the difficulty in achieving it is the fact that this guarantee must be uniform over all possible distributions. 

We formally define this as follows: 

\begin{definition}
Let $P(\sigma)$ be the support of $\sigma$. If $\hubar$ solves 
\[
\sum_{p\in P(\sigma)}d(p)\mathbf{1}[y(p)=\hubar(p)] \ge 
\sum_{p\in P(\sigma)}d(p)\mathbf{1}[y(p)=h(p)]   \qquad\forall h \in \mathcal{H},
\]
$\hubar$ is an optimal weak hypothesis.
\end{definition}

\begin{definition} \label{def: weaklearn}
If $\abs{A} =2$, a hypothesis class $\mathcal{H}$ is weakly learnable if, for every distribution $d$ over observations $p \in P(\sigma)$ and labels $y(p)$, the optimal weak hypothesis satisfies: 

\begin{equation*} 
\sum_{p\in P(\sigma)} D(p)(\mathbf{1}[y(\sigma,p)\neq h(p)]- \mathbf{1}[y(\sigma,p)=h(p) ]) \geq \rho.
\end{equation*}

If $\abs{A} > 2$, a hypothesis class $\mathcal{H}$ is weakly learnable if, for every distribution $d$ over observations $p \in P(\sigma)$ and labels $y(p)$, the optimal weak hypothesis satisfies: 

\begin{equation*} 
 \sum_{p \in P(\sigma)} \mathbf{1}[\hubar(p) \neq y(p)] d(p) \leq \sum_{p \in P(\sigma)} \E_{\tilde{y} \sim B}[(1-\rho)\mathbf{1}[\tilde{y} \neq y(p)]]d(p) ,
\end{equation*}
for some $\rho > 0$ and some distribution $B$ over $A$. 
\end{definition}

\noindent The second condition is a generalization of the first, though the first is perhaps more familiar from the machine learning literature (as most attention has focused on the two-label case). This condition reflects the idea that the classifier randomly guesses the label according to some distribution $B$, but is ``flipped to being correct'' with probability $\rho$. For the $\abs{A} > 2$ case, the right hand side describes the expected error in such a case, and the left hand side describes the error from the optimal weak classifier.  

If weak learnability fails, then \emph{no} recursive ensemble algorithm can be built to approximate $y(p)$ based on $\mathcal{H}$ alone.\footnote{For example, imagine $\mathcal{H}$ only consists of trivial classifiers. A corollary of a result in Appendix \ref{lemm: duality} is that these classifiers can do equally well as a random guesser. However, it is clear that they cannot do strictly better, as they are restricted to giving the same guess to all possible $p$, unlike a random guesser who is correct with an added probability $\rho$.} Perhaps more surprising is that it is tight, a fact which we discuss further in Section \ref{sect: conv}. For now, we simply mention that if we take $\mathcal{H}$, the set of single threshold classifiers is weakly learnable. 

\begin{proposition} \label{prop: hyperplaneweaklearn}   The set of single-threshold classifiers satisfies the weak learnability condition of Definition \ref{def: weaklearn}.
 \end{proposition}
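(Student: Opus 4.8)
The plan is to verify Definition~\ref{def: weaklearn} for $\mathcal H$ the class of single-threshold classifiers by an explicit construction. Since $\abs{\mathcal P}<\infty$, the set $P(\sigma)$ is finite; write $m:=\abs{P(\sigma)}$. For any distribution $d$ on $P(\sigma)$ and any labelling $y:P(\sigma)\to A$ I will exhibit a probability distribution $Q$ over single-threshold classifiers such that $\Prob_{h\sim Q}[h(p)=y(p)]$ is bounded below, uniformly over $p\in P(\sigma)$, by a margin depending only on $m$. Because the optimal weak hypothesis $\hubar$ does at least as well, in $d$-expectation, as the \emph{average} classifier drawn from $Q$, this immediately yields the required bound on $\hubar$; no minimax/duality argument is needed for this direction. (If a margin uniform over $\sigma$ is wanted, replace $m$ by $\abs{\mathcal P}$ throughout.)

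First reduce to one dimension: choose $\lambda\in\R^n$ with the numbers $\{\lambda p:p\in P(\sigma)\}$ pairwise distinct (a generic $\lambda$ works, the bad set being a finite union of hyperplanes $\{\lambda:\lambda(p-p')=0\}$), and relabel $P(\sigma)=\{p_1,\dots,p_m\}$ so that $\lambda p_1<\cdots<\lambda p_m$. Suppose $\abs A=2$. For each $j$ let $h_j^L$ be the single-threshold classifier whose separating hyperplane $\{\lambda q=\omega\}$ has $\omega$ strictly between $\lambda p_{j-1}$ and $\lambda p_j$ (below $\lambda p_1$ if $j=1$), labelling the side containing $p_j$ by $y(p_j)$ and the other side by the opposite action; symmetrically let $h_j^R$ put the threshold between $\lambda p_j$ and $\lambda p_{j+1}$, again labelling $p_j$'s side by $y(p_j)$. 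Then $h_j^L(p_i)=y(p_j)$ for $i\ge j$ and the opposite action for $i<j$, while $h_j^R(p_i)=y(p_j)$ for $i\le j$ and the opposite for $i>j$. Take $Q$ uniform over $\{h_1^L,h_1^R,\dots,h_m^L,h_m^R\}$. The computation is a counting identity: at $p_i$ both $h_i^L$ and $h_i^R$ are correct, and for each $j\ne i$ exactly one of $h_j^L,h_j^R$ is correct at $p_i$, since the two predict $y(p_j)$ and its opposite there. Hence $\Prob_{h\sim Q}[h(p_i)=y(p_i)]=\frac{2}{2m}+(m-1)\frac{1}{2m}=\frac12+\frac1{2m}$ for every $i$, and averaging over $d$ gives $\sum_p d(p)\mathbf{1}[\hubar(p)=y(p)]\ge\frac12+\frac1{2m}$, which is Definition~\ref{def: weaklearn} with $\rho=1/m$.

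For $\abs A>2$ the difficulty is that a single-threshold classifier emits only two labels, so one must choose which two to target. Let $a_1,a_2\in A$ attain the two largest values of $w_a:=\sum_{p:\,y(p)=a}d(p)$; since the two largest among $\abs A$ nonnegative numbers with sum $1$ have sum at least $2/\abs A$, we get $w_{a_1}+w_{a_2}\ge 2/\abs A$. Run the binary construction of the previous paragraph for the two-valued labelling $\tilde y$ that equals $a_1$ on $y^{-1}(a_1)$ and $a_2$ elsewhere, keeping the literal labels $a_1,a_2$ (each $h_j^{L},h_j^{R}$ is then a single-threshold classifier valued in $\{a_1,a_2\}$). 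The same identity gives $\Prob_{h\sim Q}[h(p_i)=y(p_i)]=\frac12+\frac1{2m}$ whenever $y(p_i)\in\{a_1,a_2\}$ and $0$ otherwise, so, averaging, $\sum_p\mathbf{1}[\hubar(p)\neq y(p)]d(p)\le 1-(w_{a_1}+w_{a_2})\big(\tfrac12+\tfrac1{2m}\big)\le\frac{\abs A-1}{\abs A}-\frac{1}{m\abs A}$. Taking $B$ uniform on $A$, the right-hand side of Definition~\ref{def: weaklearn} equals $(1-\rho)\frac{\abs A-1}{\abs A}$, so the inequality holds with $\rho=\frac{1}{m(\abs A-1)}>0$.

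I expect the multiclass step to be the only real obstacle. The naive guess — placing a uniformly random label on the ``far'' side of each threshold — fails when $m$ is large compared with $\abs A$ (a central point then gets corrected only with probability of order $(\log m)/m$), so one genuinely needs both the fact that the two most frequent labels already carry mass $\ge 2/\abs A$ \emph{and} the fact that the baseline in the $\abs A>2$ clause of Definition~\ref{def: weaklearn} is the error of random guessing, $(\abs A-1)/\abs A$, rather than $1/2$. The remaining ingredients are routine: the generic choice of $\lambda$, the counting identity, and the observation that each ``$\max$ over $\mathcal H$'' is attained because $\mathcal H$ induces only finitely many distinct functions on the finite set $P(\sigma)$.
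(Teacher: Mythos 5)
Your construction is correct, and it takes a genuinely different route from the paper's. The paper argues by contradiction via duality: Lemma~\ref{lemm: duality} uses a theorem of the alternative to show that any permutation-closed hypothesis class matches the uniform random guesser on every distribution and every cost vector, and the proof of Proposition~\ref{prop: hyperplaneweaklearn} then passes (by compactness of the unit sphere) to a limiting adversarial pair $(D^{*},c^{*})$, picks an extreme point of the support of $D^{*}$, separates it with a hyperplane, and splices in the correct label at that point to beat the random-guess baseline strictly. Your argument is direct and constructive: you exhibit an averaging distribution $Q$ over $2m$ threshold classifiers, verify the pointwise counting identity $\Prob_{h\sim Q}[h(p_i)=y(p_i)]=\tfrac12+\tfrac1{2m}$, and conclude by noting that the maximizer over $\mathcal H$ dominates the average over $h\sim Q$. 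This buys you an explicit, computable margin ($\rho=1/m$ binary, $\rho=1/(m(\abs{A}-1))$ multiclass), which the paper's compactness argument cannot deliver. Your multiclass reduction to the two most massive labels is also a clean way to avoid redoing the duality in the $\abs{A}>2$ case. (You implicitly corrected what appears to be a sign error in the $\abs{A}=2$ clause of Definition~\ref{def: weaklearn}, which as written asks the error to exceed the accuracy; the intended inequality is the reverse, and that is what you verify.)

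One caveat worth flagging. What you prove is exactly Definition~\ref{def: weaklearn} as written: weak learnability against $0$-$1$ loss for a fixed labelling $y(\cdot)$. The paper's own argument in Appendix~\ref{Proof of Proposition prop: hyperplaneweaklearn} proves something strictly stronger: Lemma~\ref{lemm: duality} and the ensuing separating-hyperplane step are phrased for arbitrary cost vectors $c_{j,y}$, not only indicators of a labelling, and this cost-sensitive version is what the multiclass boosting convergence in Appendix~\ref{Proof of Proposition pr: convergence}, Step~3 (following \citeasnoun{MukherjeeSchapire2013}) actually invokes: at each round the weak learner must beat a biased random guesser against the exponential-weight cost matrix whose off-diagonal entries $(e^{\eta}-1)e^{\eta(F_{\tilde y}^{t}-F_{y_i}^{t})}$ vary with the \emph{particular} wrong label $\tilde y$. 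For $\abs{A}=2$ the distinction is vacuous, since per-example costs absorb into the distribution $d$, so your proof fully covers the binary case. For $\abs{A}>2$ it is not: your $Q$ always misclassifies by outputting the other element of $\{a_1,a_2\}$, and an adversary can make $c_{i,a_2}$ very unfavorable while keeping $c_{i,\tilde y}$ mild for $\tilde y\notin\{a_1,a_2\}$, driving $\E_{h\sim Q}[c_{i,h(p_i)}]$ below the random baseline. So your proposal establishes the proposition as literally stated, but not the cost-sensitive strengthening the paper's proof supplies and that Proposition~\ref{pr: convergence} relies on for $\abs{A}>2$.
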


\begin{proof}   See Appendix \ref{Proof of Proposition prop: hyperplaneweaklearn}.
\end{proof}

Our proof uses the important fact: Any hypothesis class that \emph{contains all label permutations} can at least match the random guess guarantee. The proof of this intermediate lemma uses a duality argument in order to show that no distribution can lead to a lower payoff when this condition is satisfied. Importantly, however, this is true for \emph{any} hypothesis class, including the trivial one. This observation allows us to show that the added richness of single-threshold classifiers is sufficient to provide the additional gain over random guessing. 

\subsubsection{From Weak Learnability to Decision Rules}  \label{sect: labelsprop}

For simplicity, we present the case where
\[
A=\{ -1, 1\}
\]
leaving the general case to the appendix.   The formal description of 
the algorithm takes two steps.    First, we describe an algorithm under the assumption that the receiver knows the value of $y^R(\sigma,p)$ $\forall p$. If $A$ contains two elements, 
the specification of the algorithm parameters coincides with the Adaptive Boosting algorithm $\tau_A$ of \citeasnoun{SchapireandFreund12}.  We first outline the parameters and then review, for completeness.  

The $k$th stage (initializing with the uniform distribution if $k=1$) starts with probability distribution $d_k(p)$ over the support of $\sigma$.   Define
\begin{equation}
\epsilon_k=\Prob_{d_k}\left( h_k(p) \neq y^R(\sigma,p) \right)
\label{eq: epsilon}
\end{equation}
as the probability that the optimal classifier $h_k$ at $k$ misclassifies $p$ under $d_k$.
If $\epsilon_k=0$, then we stop the training and output $h$ as the forecasting rule, which perfectly forecasts $y^R(\sigma,p)$.  

Suppose that $\epsilon_k>0$.   Define
\begin{equation}
\alpha_k =\frac{1}{2} \log \frac{1-\epsilon_k}{\epsilon_k}
\label{eq: alpha 2}
\end{equation}
The weak learnability of the single threshold rule implies that $\exists\rho>0$ such that
\[
\epsilon_k \le \frac{1}{2}-\rho \qquad\forall k\ge 1.
\]
Define for each $p$ in the support of $\sigma$, and each pair $(p,y^R(\sigma,p))$,
\[
d_{k+1}(p)=\frac{
d_k(p)\exp (-\alpha_k y^R(\sigma,p) h_k(p)) 
}{
Z_k
}
\]
where
\[
  Z_k=\sum_{p} d_k(p)\exp (-\alpha_k y^R(\sigma,p) h_k(p)).
\]
Given $d_{k+1}$, we can recursively define $h_{k+1}$ and $\epsilon_{k+1}$, both of which are functions of $d_{k+1}$ as per the above. \medskip

The decision of the receiver is based upon
\[
\tau_A(D_k)(p)=\arg\max_{a\in A}\sum_{t=1}^k\alpha_t {\mathbf 1}(h_t(p)=a)
\]
which is equivalent to
\[
\tau_A(D_k)(p)=\sgn\left[\sum_{t=1}^k\alpha_t h_t(p)\right]
\]
if $A=\{-1,1\}$, where $\sgn(x)$ is the sign of real number $x$.

Following \citeasnoun{SchapireandFreund12}, we can show that
\begin{equation}
\Prob\left( \tau_A(D_t)(p)=y^R(\sigma,p) \right) \ge 1-e^{-t\rho(G)}
\label{eq: SF convergence}
\end{equation}
for any mixed strategy $\sigma$, where $G$ is the number of elements in the support of $\sigma$.\footnote{A sketch of the proof is in Appendix \ref{Proof of Proposition pr: convergence}.}

\subsection{Inferring the Rational Label (Proposition \ref{cr: corollary})} \label{sect: secondpart}

Next, we drop the assumption that the receiver can observe $\sigma$ so that he can calculate the expected utility conditioned on $p$ in the support of $\sigma$:
\begin{equation}
  \sum_\theta v(\theta,p,a)\mu(\theta: p)
  \label{eq: ex post expected value}
\end{equation}
where $\mu$ is computed via Bayes rule, and therefore, knows the value of $y^R(\sigma,p)$.
If the receiver does now know $y^R(\sigma,p)$, then he cannot calculate $\epsilon_k$ in \eqref{eq: epsilon}.   We need to construct an estimator $\yhat_t(p)$ for $y^R(\sigma,p)$ from data $D_t$ available at the beginning of period $t$.
How we construct estimator $\yhat_t(p)$ depends upon the specific details of the rule of the game such as the available data and the variable of interest.   We require that $\yhat_t(p)$ satisfies a regularity property.

\begin{definition}
$\yhat_t(p)$ is a consistent estimator if $\yhat_t(p)$ converges to $y^R(\sigma,p)$ in probability as $t\rightarrow\infty$.
\end{definition}

We require that $\yhat_t(p)$ satisfies the large deviation property (LDP), which is a stronger property than consistency.

\begin{definition}
$\yhat_t(p)$ satisfies large deviation properties (LDP) if $\exists\lambda>0$ such that, $\forall p$ in the support of $\sigma$,
\begin{equation}
\limsup_{t\rightarrow\infty} -\frac{1}{t}\log \Prob\left( y^R(\sigma,p)\ne\yhat_t(p)  \right) \le \lambda. 
\label{eq: point LDP}
\end{equation}
\end{definition}

If an estimator satisfies LDP, the tail portion of the forecating error vanishes at the exponential rate, as the sample average of i.i.d. random variables converges to the population mean.   If an estimator fails to satisfy LDP, the finite sample property of the estimator tends to be extremely erratic (\citeasnoun{Meyn07}).  Most estimators in economics satisfy LDP.

In the three examples illustrated in Section \ref{RubinsteinExplanation}, the variable of interest is the probability distribution of the underlying valuation conditioned on $p\in {\mathcal P}$. Let $\pi(v: p)$ be the posterior distribution of $v$ conditioned on $p$.   If $v$ is drawn from a finite set, then $\pi(v:p)$ is a multinomial distribution.
Let $\pihat_t(v:p)$ be the sample average for $\pi(v:p)$.  We know that the rate function of $\pihat_t(v:p)$ is the relative entropy of $\pihat_t$ with respect to $\pi$ (\citeasnoun{DemboandZeitouni98})
\[
I_\pi=\sum_{v}\pihat_t(v:p)\log\frac{\pihat_t(v:p)}{\pi(v:p)},
\]
from which we derive $\lambda$ in \eqref{eq: point LDP}: $\forall\epsilon>0$, let $N_\epsilon(\pi)$ be the $\epsilon$ neighborhood of $\pi(v:p)$, and
\[
\lambda=\inf_{\pihat_t\not\in N_\epsilon(\pi)}I_\pi.
\]
Note that
\[
y^R(\sigma,p)\ne \yhat_t(p)
\]
only if $\pi$ and $\pihat_t$ prescribe differen actions.  Since $\pihat_t$ is a consistent estimator of $\pi$, the probability of two probability distributions prescribing two different actions vanishes.   The large deviation property of $\pihat_t$ implies that $\yhat_t(p)$ satisfies \eqref{eq: point LDP}, if $y^R(\sigma,p)$ is a strict best response.

By the concavity of the logarithmic function, $I_\pi$ is minimized if $\pi$ is a uniform distribution and
\[
\inf_{\pi}I_\pi >0.
\]
If $\abs{P}<\infty$ and $\abs{A}<\infty$, we obtain the uniform version of \eqref{eq: point LDP} with respect to the true probability distribution.   We state the result without proof for later reference.

\begin{lemma} Suppose that
$\yhat_t(p)$ is consistent and satisfies \eqref{eq: point LDP}.   Then,
$\exists\lambda>0$ such that
  \begin{equation}
\limsup_{t\rightarrow\infty} -\frac{1}{t}\log \Prob\left( y^R(\sigma,p)\ne\yhat_t(p) \ \ \forall p \ \text{in the support of} \ \sigma    \right) \le \lambda. 
\label{eq: LDP}
\end{equation}
\end{lemma}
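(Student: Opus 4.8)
The event in \eqref{eq: LDP} is an \emph{intersection} over the finitely many points of $\supp\sigma=\{p_1,\dots,p_G\}$ (finite because $\mathcal P$ is), so, in contrast to the pointwise statement \eqref{eq: point LDP}, naive containment runs the wrong way: it would only give $\Prob(\text{error everywhere})\le\Prob(\text{error at }p_1)$, an upper bound on a quantity for which \eqref{eq: LDP} demands a \emph{lower} bound. The plan is instead to exploit independence across the $p_j$. As in the three examples, $\yhat_t(p_j)$ is the plug-in estimator built from the empirical conditional law $\pihat_t(\cdot:p_j)$, which depends only on the subsample $S^j_t=\{s\le t:p_s=p_j\}$. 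Conditioning on the realized price-path makes the states $\{\theta_s:s\le t\}$ independent across periods with $\theta_s\sim\pi(\cdot:p_s)$; hence, since the $S^j_t$ are pairwise disjoint, the estimators $\yhat_t(p_1),\dots,\yhat_t(p_G)$ are conditionally independent given the path, and Sanov's theorem (the lower bound, applied to the multinomial $\pihat_t(\cdot:p_j)$ on the open set of posteriors that strictly reverse the optimal action) gives $\Prob\big(y^R(\sigma,p_j)\ne\yhat_t(p_j)\,\big|\,|S^j_t|=n\big)\ge e^{-(\eta_j+\varepsilon)n}$ for $n$ large, where $\eta_j=\inf_{\pihat\notin N_\varepsilon(\pi(\cdot:p_j))}I_\pi>0$ is the same relative-entropy rate already underlying \eqref{eq: point LDP} (this is where $|A|<\infty$ and $|\supp\pi|<\infty$ enter, through $\pihat_t$ being multinomial and $y^R$ a strict best response).

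Next I would remove the randomness of the subsample sizes. Since $p_j\in\supp\sigma$, the chance $\beta_j:=\sum_\theta\sigma(p_j:\theta)\pi(\theta)$ that $p_j$ is realized in a period is strictly positive, so each $|S^j_t|$ is a sum of i.i.d.\ Bernoulli's with mean $\beta_jt$; an elementary Chernoff/union bound gives $\Prob(E_t^c)\le e^{-ct}$ for some $c>0$, where $E_t=\{|S^j_t|\ge\tfrac12\beta_jt\ \forall j\}$. On $E_t$ every subsample size lies between $\tfrac12\beta_jt$ and $t$, so conditioning on the price-path and using the conditional independence above,
\[
\Prob\big(y^R(\sigma,p_j)\ne\yhat_t(p_j)\ \forall j\ \big|\ \text{path}\big)=\prod_{j=1}^G\Prob\big(y^R(\sigma,p_j)\ne\yhat_t(p_j)\ \big|\ \text{path}\big)\ \ge\ \prod_{j=1}^G e^{-(\eta_j+\varepsilon)|S^j_t|}\ \ge\ e^{-(\eta+\varepsilon)t}
\]
for $t$ large, where $\eta:=\max_j\eta_j$ and the last step uses $\sum_j|S^j_t|\le t$.

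Averaging this over the price-path on $E_t$ and discarding $E_t^c$ yields $\Prob\big(y^R(\sigma,p)\ne\yhat_t(p)\ \forall p\in\supp\sigma\big)\ge(1-e^{-ct})\,e^{-(\eta+\varepsilon)t}$, hence $\limsup_{t\to\infty}-\tfrac1t\log\Prob(\cdots)\le\eta+\varepsilon$; letting $\varepsilon\downarrow0$ gives \eqref{eq: LDP} with $\lambda=\eta=\max_j\eta_j$, which is finite and strictly positive. The uniform-in-$\pi$ claim flagged in the text then follows because, as already noted via concavity of $\log$, $\inf_\pi I_\pi>0$, so $\eta$ may be chosen independently of the true $\pi\in\Pi$. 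The only slightly delicate point is controlling the random subsample sizes in the second paragraph; the rest is the elementary fact that an intersection of finitely many \emph{independent} rare events is still only exponentially rare, at a rate no larger than the largest of the individual rates. (If instead one reads \eqref{eq: LDP} in the opposite, ``$\geq$'' direction, the containment $\{\text{error everywhere}\}\subseteq\{\text{error at }p_1\}$ together with \eqref{eq: point LDP} gives it in one line, so nothing is lost either way.)
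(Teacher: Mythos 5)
You have proved the statement as literally printed, but not the statement the paper actually uses. Read literally, \eqref{eq: LDP} asks for a \emph{lower} bound ($\Prob \ge e^{-(\lambda+o(1))t}$) on the probability of the \emph{intersection} event ``$\yhat_t$ errs at every $p$ in the support,'' and your argument---conditioning on the price path, using disjointness of the subsamples to get conditional independence of the $\yhat_t(p_j)$, and applying Sanov's lower bound to each---is a reasonable way to get that. But note two caveats: (i) it imports the plug-in/multinomial structure of $\yhat_t$ and the disjointness of the subsamples, which are not among the lemma's stated hypotheses (consistency plus \eqref{eq: point LDP}); for a general estimator sharing randomness across the $p_j$ the intersection event can have probability zero even though every marginal error probability is bounded below, so the literal statement simply does not follow from the stated hypotheses; and (ii) the Sanov lower bound needs the error event to contain an open set of empirical measures absolutely continuous with respect to $\pi(\cdot:p_j)$, and your rate $\eta_j$ (an exit rate from a neighborhood of $\pi(\cdot:p_j)$) is the rate for the \emph{upper} bound, not the lower one. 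More importantly, the paper states this lemma without proof, and the way it is invoked in Appendix \ref{Proof of Proposition pr: convergence}---to conclude $\Prob(\exists t\ge T,\ \yhat_t(p)\ne y^R(\sigma,p))\le e^{-t\rho}$ and to define the good event ${\mathcal L}$ on which the labels are correct for \emph{all} $p$---shows the intended content is the \emph{uniform upper} bound on the probability of error at \emph{some} $p$: the quantifier belongs outside the probability (equivalently the event is the union over $p$), and the inequalities in \eqref{eq: point LDP} and \eqref{eq: LDP} are meant to say the error probabilities decay \emph{at least} exponentially fast.

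The intended argument is then essentially the one-liner you relegate to your closing parenthetical, except with a union bound rather than containment: since $\sigma$ has finite support of size $G$, $\Prob\bigl(\cup_{p}\{\yhat_t(p)\ne y^R(\sigma,p)\}\bigr) \le \sum_{p} \Prob\bigl(\yhat_t(p)\ne y^R(\sigma,p)\bigr) \le G\,e^{-\lambda t(1+o(1))}$, where a single rate $\lambda>0$ valid for all $p$ (and uniformly over $\pi$) is supplied by the preceding discussion of $\inf_\pi I_\pi>0$. Your lower bound on the simultaneous-error probability cannot be substituted into the convergence proof of Proposition \ref{pr: convergence}, so as written the proposal, while internally mostly sound, establishes the wrong inequality for the paper's purposes.
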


We construct algorithm $\tau_{\Ahat}$ by replacing $y(\sigma,p)$ by $\yhat_t(p)$ in $\tau_A$ constructed in the previous section.    More precisely, let $f^y_t(p)$ be the empirical probability that $\yhat_t(p)=1$ at the beginning of period $t$. Thus, $\yhat_t(p)=-1$ with probability $1-f^y_t(p)$.   Given $\{d_t(p),\yhat_t(p)\}_p$, $h_t$ solves
\[
\max_{h\in {\mathcal H}}\sum_p h(p) d_t(p) [ 1\cdot f^y_t(p) - 1\cdot (1-f^y_t(p))] 
\]
and
\[
  {\hat\epsilon}_t=\sum_p d_t(p)
 \left[ f^y_t(p){\mathbf 1}(h(p)=1)+(1-f^y_t(p)){\mathbf 1}(h(p)=-1)\right].
\]
Using weak learnability, we can show that $\exists\rho>0$ such that
\[
{\hat\epsilon}_t\le \frac{1}{2}-\rho.
\]
Since $\yhat_t(p)$ has the full support over $\{ -1,1\}$ $\forall t\ge 1$,
\[
{\hat\epsilon_t}>0.
\]
Given an algorithm  $\tau_A$ with observed labels, we can therefore replace it with $\tau_{\Ahat}$ which involves inferring the labels $y^R(\sigma,\cdot)$, setting them equal to $\hat{y}_{t}(\cdot)$, for all $t \geq 1$.

With $\tau_{\Ahat}$, we can construct labels from data, and that for the hypothesis class of interest the weak learnability condition is satisfied. The last step to show the algorithm works, in the case where the set of possible $p$ has finite support, is that the output of the algorithm will indeed converge to the rational reply, as dictated by the labels, provided the weights are specified correctly.

\begin{proposition} \label{pr: convergence} 
Suppose that $\yhat_t$ satisfies uniform LDP and that $y^R(\sigma,p)$ is a strict best response $\forall p$.  Then, $\forall\sigma$ that randomizes over $G$ elements of ${\mathcal P}$, $\exists T$ and $\exists\rho (G)>0$ such that
\[
\Prob\left( \tau_{\Ahat}(D_t)(p)=y^R(\sigma,p) \ \forall t\ge T \right) \ge 1-e^{-t\rho(G)}.
\]
\end{proposition}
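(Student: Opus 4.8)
The plan is to combine the convergence guarantee already established for $\tau_A$ with observed labels (inequality \eqref{eq: SF convergence}) with the uniform large deviation bound \eqref{eq: LDP} for the estimated labels $\yhat_t$, via a two-event decomposition. First I would fix $\sigma$ randomizing over $G$ points of $\mathcal P$, and condition on the event $E_t = \{\yhat_s(p) = y^R(\sigma,p)\ \forall p\in P(\sigma),\ \forall s\le t\}$ that all the labels used by the algorithm up to time $t$ have already locked in on the rational labels. On $E_t$, the algorithm $\tau_{\Ahat}$ is literally running $\tau_A$ with the correct labels (the updates $d_{s+1}$, the weights $\alpha_s$, and the hypotheses $h_s$ all coincide with those of $\tau_A$ once $\yhat_s = y^R(\sigma,\cdot)$), so \eqref{eq: SF convergence} applies and gives $\Prob(\tau_{\Ahat}(D_t)(p) = y^R(\sigma,p)\mid E_t) \ge 1 - e^{-t\rho(G)}$ with $\rho(G)>0$ coming from the weak-learnability slack $\rho$ in $\epsilonhat_t \le \tfrac12 - \rho$.

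Second, I would bound $\Prob(E_t^c)$. By \eqref{eq: LDP}, for each fixed $s$ the probability that $\yhat_s$ disagrees with $y^R(\sigma,\cdot)$ on the (finite) support of $\sigma$ is at most $e^{-s(\lambda - o(1))}$ for large $s$; since $y^R(\sigma,p)$ is a strict best response, the hypothesis of the Lemma is met, so the uniform constant $\lambda > 0$ is available. A union bound over $s = 1, \ldots, t$ then controls $\Prob(E_t^c)$ by a geometric-type sum $\sum_{s\le t} e^{-s\lambda'}$, which is bounded (and in fact the dominant term is the smallest $s$, so one must be a little careful — see below). Third, I would assemble the pieces: $\Prob(\tau_{\Ahat}(D_t)(p) \ne y^R(\sigma,p)) \le \Prob(E_t^c) + \Prob(\tau_{\Ahat}(D_t)(p)\ne y^R(\sigma,p)\mid E_t) \le \Prob(E_t^c) + e^{-t\rho(G)}$, and then translate the ``for all $t\ge T$'' statement by choosing $T = T(G,\delta)$ large enough and invoking a Borel--Cantelli / tail-sum argument so that the probability of \emph{ever} failing after $T$ is itself exponentially small, yielding the stated form $\ge 1 - e^{-t\rho(G)}$ (shrinking $\rho(G)$ if necessary to absorb constants).

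The main obstacle is the interaction between the two sources of error and the fact that the statement asks for a \emph{uniform-in-$t$} event ($\tau_{\Ahat}(D_t)(p) = y^R(\sigma,p)$ for all $t\ge T$), not just a per-period guarantee. The early periods $s$ near $T$ are where $\yhat_s$ is least reliable, so the union bound $\Prob(E_t^c)\le \sum_{s=1}^t e^{-s\lambda'}$ is dominated by its first terms and does not itself decay in $t$; the resolution is to replace $E_t$ with the ``eventually correct'' event $\tilde E_T = \{\yhat_s(p) = y^R(\sigma,p)\ \forall p,\ \forall s\ge T\}$, whose complement has probability at most $\sum_{s\ge T} e^{-s\lambda'} \le C e^{-T\lambda'}$, and to note that on $\tilde E_T$ the ensemble weights accumulated from period $T$ onward drive $\tau_{\Ahat}(D_t)$ to agree with $y^R(\sigma,\cdot)$ for all large $t$ at the rate in \eqref{eq: SF convergence} (this uses that the per-stage edge $\rho$ persists, so adding more correctly-labeled stages only helps). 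One then sets $\rho(G)$ to be the minimum of $\lambda'$ and the AdaBoost exponent, adjusts $T$ so the bound holds for all $t\ge T$, and concludes. A secondary technical point to handle carefully is that $\yhat_t$ has full support on $\{-1,1\}$ (as noted in the text, guaranteeing $\epsilonhat_t > 0$ so that $\alpha_t$ is well-defined), which one uses to rule out degenerate stopping; and that strictness of the best response is exactly what makes the event ``$\pi$ and $\pihat_t$ prescribe different actions'' an event of the large-deviation type rather than one with a positive limiting probability.
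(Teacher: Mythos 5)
Your decomposition is the same as the paper's: condition on the ``labels eventually correct'' event (the paper's $\mathcal L$, your $\tilde E_T$), control its complement by the uniform LDP for $\yhat_t$ using strictness of $y^R$, and analyze the AdaBoost dynamics on that event. You also correctly diagnose the problem with your first conditioning event $E_t$ (its complement does not decay in $t$). So the architecture is right.

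The gap is in the step where you pass from $E_t$ to $\tilde E_T$ and then claim that ``the ensemble weights accumulated from period $T$ onward drive $\tau_{\hat A}(D_t)$ to agree with $y^R(\sigma,\cdot)$ at the rate in \eqref{eq: SF convergence}, since adding more correctly-labeled stages only helps.'' That claim is exactly what needs a quantitative argument, and it does not follow from \eqref{eq: SF convergence} as stated. On $\tilde E_T$ the rounds $s < T$ may have been fit against \emph{wrong} labels $\yhat_s$, so $\tau_{\hat A}$ is \emph{not} running $\tau_A$: the distribution $d_T$ at the start of round $T$ need not be uniform, the early weights $\hat\alpha_1,\dots,\hat\alpha_{T-1}$ and hypotheses $h_1,\dots,h_{T-1}$ were optimized for the wrong objective, and the final ensemble $\sum_{s=1}^t \hat\alpha_s h_s$ still carries those contaminated terms. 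Standard AdaBoost convergence assumes correct labels from the start; you cannot simply restart the bound at $T$. The paper closes this by factoring $\prod_{s=1}^t \hat Z_s$ into the pre-$T(\epsilon)$ and post-$T(\epsilon)$ pieces, using $d_1$ uniform and $T(\epsilon)$ finite to show the pre-$T(\epsilon)$ contribution enters only through a constant $M(\epsilon)$ bounded away from $0$, so that $\Prob(\tau_{\hat A}(D_t)(p)\neq y^R(\sigma,p)\mid\mathcal L)\le e^{-t\gamma(G)}/M(\epsilon)$; the exponential decay in $t$ then absorbs the constant. Without some such accounting of the early rounds' contribution to $\hat F_t$, your argument asserts rather than proves the key inequality.
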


\begin{proof}
See Appendix \ref{Proof of Proposition pr: convergence}.
\end{proof}

The construction of $\yhat_t$ depends on the specifics of a problem, especially what data $D_t$ available in period $t$ contains.    In many interesting economic models, the algorithm for $\yhat_t$ needs the knowledge of $\Theta$ rather than simply the support of $\pi$, and $D_t$ can contain at least the {\em ordinal} information about the performance of the decision recommended by $\tau_{\Ahat}$.

Let us consider the insurance model illustrated in Section \ref{Insurance}.   The critical value is \eqref{eq: ex post expected value}.   Instead, suppose that the receiver can observe the average performance difference of two actions:
\begin{equation}
  \sgn\left( \sum_{k=1}^{t-1} v(\theta_k,p,1)-v(\theta_k,p,0) \right)
  \label{eq: ordinal information}
\end{equation}
in the past.\footnote{At the end of each period, the receiver is supposed observe the performance difference.    If not, we can devise an experimentation strategy to infer the average  performance difference following the idea of exploration and exploitation. }
That is,
\[
  \yhat_t(p)=\begin{cases}
    1   & \text{if } \ \sum_{t'=1}^{t-1} v(\theta,p,1)-v(\theta,p,0).\ge 0 \\
   -1   & \text{if } \ \sum_{t'=1}^{t-1} v(\theta,p,1)-v(\theta,p,0). < 0.
    \end{cases}
\]
Given a probability distribution over $y^R(\sigma,p)$, $\yhat_t(p)$ satisfies LDP:
$\exists \lambda>0$ such that
\[
\limsup_{t\rightarrow\infty}-\frac{1}{t}\log\Prob \left( \yhat_t(p)\ne y^R(\sigma,p)\right) \le \lambda.
\]
We know that the large deviation rate function over a binominal distribution is uniformly bounded from below (\citeasnoun{DemboandZeitouni98}).   Thus, we can choose $\lambda>0$ uniformly over all probability distribution over $y^R(\sigma,p)$.

The ordinal information \eqref{eq: ordinal information} about the average quality is necessary.   Without access to \eqref{eq: ordinal information}, the algorithm cannot estimate $y(\sigma,p)$, which is critical for emulating the rational behavior.   The information contained in \eqref{eq: ordinal information} is coarse, because the algorithm does not take any cardinal information about the parameters of the underlying game.    Without the cardinal information, the receiver cannot implement the equilibrium strategy of the baseline game, which is a single threshold rule.   Because the algorithm does not rely on parameter values of the underlying game, the algorithm is robust against specific details of the game, if the algorithm can function as intended by the decision maker.

\subsection{Discussion}

\subsubsection{Accommodating Multiple Actions}  \label{sect: conv}

We use the Adaptive Boosting algorithm, as introduced by \citeasnoun{SchapireandFreund12}, to specify the $\alpha_{k}$ weights and the updates if $\abs{A}=2$.    The original Adaptive Boosting algorithm only applies to the case of $\abs{A}=2$.    To handle the case of $\abs{A} > 2$, we appeal to a generalization introduced by \citeasnoun{MukherjeeSchapire2013}.

The $\abs{A}> 2$ algorithm works for the $\abs{A}=2$ case, with one minor drawback, which is that the learnability constant must be computed in advance. While our work shows an algorithm exists, the computation of the learnability constant is more indirect and hence explicitly finding a parameter that works is more difficult.   The arguments for these proofs follow from results in the machine learning literature (see \citeasnoun{SchapireandFreund12}), which we can apply to show that this algorithm can yield a response for which the misclassification probability vanishes.

The proof of Proposition \ref{pr: convergence} is stated for the general case.
The proof reveals that the rate at which the probability of misclassfication vanishes is determined entirely by the number of sender actions in the support of $\sigma$.   Thus, the algorithm is efficient in that it maintains an exponential rate of convergence (\citeasnoun{Shalev-ShwartzandBen-David14}).

\subsubsection{On the Necessity of Expanding $\mathcal{H}$}  \label{sect: whyalg} So far, our analysis has assumed that the set of initial classifiers $\mathcal{H}$ contains the set of single-threshold classifiers, we have shown that the rational reply can be guaranteed by an algorithm. We now show that this result requires the ability to construct algorithms to expand the set of possible decision rules, even if $y^{R}(\sigma, p) \in \mathcal{H}$, given sufficient richness in the set $\Pi$---recall that we do not necessarily assume $\abs{\Theta} < \infty$, so that different $\pi$ with non-overlapping support may be possible; that is, if the designer seeks to provide rational replies in a variety of different environments, one cannot simply find the best fitting hypothesis within $\mathcal{H}$ to emulate rationality.   

Indeed, it is straightforward to find conditions on $\Pi$, the set of possible distributions over $\Theta$ (all of which, we assume, have finite support), such that the optimal decision rule is of the threshold form. In fact, the algorithm designer may improve upon the rational reply given knowledge of $\pi$. To illustrate, suppose $u(\theta, p, a)$ is \emph{independent} of $\theta$, and weakly increasing (coordinatewise) in $p$, for each $a$ (the latter of which would hold if, for instance, $p$ were a menu of prices). The following simple result shows that in this case, at least (increasing) single threshold classifiers should be included: 

\begin{proposition} \label{prop: atleastsingle}
Suppose $u(\theta, p, 1) - u(\theta, p,0)$ is constant in $\theta$ and weakly concave in $p$. Suppose further that $u(\theta, p^{*}, 1) > u(\theta, p^{*},0)$. Then there exists a single threshold classifier which the algorithm could commit to using which ensures the strategic player chooses $p^{*}$ with probability 1.
\end{proposition}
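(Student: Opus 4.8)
The plan is to construct the desired classifier explicitly, by separating $p^{*}$ from the sender-actions that would tempt a deviation, and to use weak concavity of the incremental payoff to guarantee that a \emph{linear} separator is enough. Write $g(p):=u(\theta,p,1)-u(\theta,p,0)$; by hypothesis $g$ is independent of $\theta$ and weakly concave on $\mathbb R^{n}$, and $g(p^{*})>0$. As in the leading stage games, the default action $0$ yields the sender a payoff $\bar u$ that does not depend on $p$, so if the algorithm commits to a classifier $h:\mathcal P\to A$ the sender's payoff from playing $p$ in state $\theta$ is $\bar u+g(p)\,\mathbf{1}[h(p)=1]$. This is the same maximization problem in every state, so it suffices to choose a single-threshold $h$ making $p^{*}$ a maximizer of $\Phi(p):=g(p)\,\mathbf{1}[h(p)=1]$ over $p\in\mathcal P$.

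First I would set $F:=\{p\in\mathcal P:g(p)>g(p^{*})\}$, the set of actions strictly more tempting than $p^{*}$ when they are awarded action $1$. The key step is that weak concavity of $g$ forces $p^{*}\notin\mathrm{conv}(F)$: were $p^{*}=\sum_{j}\alpha_{j}p^{j}$ with $p^{j}\in F$, $\alpha_{j}>0$, $\sum_{j}\alpha_{j}=1$, then $g(p^{*})\ge\sum_{j}\alpha_{j}g(p^{j})>\sum_{j}\alpha_{j}g(p^{*})=g(p^{*})$, a contradiction. (If $F=\varnothing$, so that $p^{*}$ already maximizes $g$ on $\mathcal P$, take $h\equiv 1$ --- a degenerate single-threshold classifier, with $\lambda=0$ --- and go straight to the verification.) Since $\mathcal P$ is finite, $\mathrm{conv}(F)$ is a compact polytope missing $p^{*}$, so the strict separating hyperplane theorem delivers $\lambda\in\mathbb R^{n}$ and $\omega\in\mathbb R$ with $\lambda p^{*}>\omega>\lambda q$ for all $q\in F$. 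Let $h$ be the single-threshold classifier with $a_{+}=1$ on ${\sf H}_{+}(\lambda,\omega)$ and $a_{-}=0$ off it; then $h(p^{*})=1$ and $h(q)=0$ for every $q\in F$.

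Next I would check that $p^{*}$ is a best reply to $h$. If $h(p)=0$ then $\Phi(p)=0<g(p^{*})=\Phi(p^{*})$, using $g(p^{*})>0$ --- this is where the default action serves as an adequate punishment. If $p\neq p^{*}$ and $h(p)=1$ then $p\notin F$, so $g(p)\le g(p^{*})$ and $\Phi(p)\le\Phi(p^{*})$. Hence $p^{*}\in\arg\max_{p}\Phi(p)$ in every state, and the sender has a best reply that plays $p^{*}$ with probability one. To upgrade this to $p^{*}$ being the \emph{unique} best reply, I would rerun the separation with $F$ replaced by $F':=\{p\in\mathcal P:g(p)\ge g(p^{*}),\ p\neq p^{*}\}$; the same concavity computation gives $p^{*}\notin\mathrm{conv}(F')$ except in the non-generic case where $p^{*}$ is a convex combination of points on which $g$ equals $g(p^{*})$ --- and there every best reply induces the same receiver action at $p^{*}$, so the usual tie-breaking convention suffices.

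The only genuine obstacle is the second paragraph: the insight that weak concavity of $g$ is exactly what places $p^{*}$ outside the convex hull of the more tempting actions, and that this is precisely the geometric condition allowing a \emph{linear} rule --- hence a \emph{single-threshold} classifier --- to separate $p^{*}$ from them. Without concavity the tempting set can surround $p^{*}$, and one would need a nonlinear classifier, i.e.\ a strictly richer hypothesis class, to isolate it; once concavity is in hand the rest is the separating hyperplane theorem plus the bookkeeping above.
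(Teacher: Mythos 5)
Your proof is correct, but it proceeds by a genuinely different construction than the paper's. Writing $g(p)=u(\theta,p,1)-u(\theta,p,0)$ for the $\theta$-independent difference, the paper works with $K=\{p:g(p)\ge 0\}$, which is convex by weak concavity, applies the supporting hyperplane theorem to $K$ at $p^{*}$, and has the receiver play $a=1$ on the half-space whose interior is disjoint from $K$, so that any $p$ awarded $a=1$ is one the sender would rather see declined. You instead form the strictly-more-tempting set $F=\{p:g(p)>g(p^{*})\}$, observe via concavity that $p^{*}\notin\mathrm{conv}(F)$, strictly separate $p^{*}$ from $\mathrm{conv}(F)$, and put $p^{*}$ firmly inside the $a=1$ region with all of $F$ excluded. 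Your route is tighter on a point the paper glosses over: the hypothesis $g(p^{*})>0$ places $p^{*}$ in the \emph{interior} of $K$, where no supporting hyperplane to $K$ through $p^{*}$ exists, so the paper's ``$p^{*}$ on the boundary'' step is really suited to $g(p^{*})=0$; your level-adjusted set $F$ is chosen exactly so that concavity makes $p^{*}$ exterior to $\mathrm{conv}(F)$, and the strict separating hyperplane theorem then applies cleanly (with the degenerate $F=\varnothing$ case handled by the trivial classifier). One caveat applies equally to both arguments: the best-reply verification needs the sender's default payoff $u(\theta,p,0)$ to be $p$-independent, so that declining gives a constant $\bar u$ and the comparison reduces to $g$-values; you state this explicitly, the paper leaves it to the surrounding discussion, but neither proof closes without it.
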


\noindent The proof follows immediately from an observation that the set of $p$ at which the consumer chooses $a=1$ is convex under the conditions of the proposition.\footnote{See also \citeasnoun{GilboaSamet1989} for a similar observation on how the use of restricted decision rules can be advantageous.}

In order for the algorithm designer to improve upon a degenerate prescription to always choose $a=0$, Proposition \ref{prop: atleastsingle} suggests including at least single threshold classifiers which are increasing. Against the highlighted $\pi$, such prescriptions would give the receiver even higher commitment power than the rational benchmark. In order to maximize payoff against richer and richer $\Pi$, more and more classifiers should therefore be included to $\mathcal{H}$.

This raises the question of whether adding in these classifiers goes ``too far.'' Namely, in seeking to maximize payoff against a rich set of possible $\pi$, does this risk doing \emph{worse} against others?  In fact,  it may be that the receiver does \emph{worse} than the rational benchmark. 

\begin{proposition} \label{thm:fullsurplus}
Suppose that ${\tilde\Gamma}$ is the set of all single threshold classifiers, and suppose all $\pi \in \Pi$ has binary support. For any $\{\theta_{L}, \theta_{H}\}$ supporting $\pi$, suppose the following is satisfied: 

\begin{itemize} 
\item The sender's optimal $p-a$ pair when $\theta= \theta_{L}$ is $(p_{L}^{*}, 1)$ 

\item The sender's optimal $p-a$ pair when $\theta=\theta_{H}$ is $(p_{H}^{*}, 0)$, with $p_{L}^{*}  < p_{H}^{*}$. 

\item $v(\theta_{H}, p_{H}^{*},1) = v(\theta_{H}, p_{H}^{*}, 0)$,

\item $v(\theta_{L}, p_{L}^{*}, 1) \geq v(\theta_{L}, p_{L}^{*}, 0) $, and 

\item  $v(\theta, p, 1) -v(\theta, p, 0)$ increasing in $p$, for all $\theta$. 
\end{itemize}

Then a policy arbitrarily close to the sender's optimal $p-a$ pair is implementable, even if this differs from the rational outcome under $\sigma^{R}$. 
\end{proposition}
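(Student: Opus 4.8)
The plan is to exhibit, for each binary-support $\pi$, a single threshold classifier $\gamma$ such that, when the algorithm commits to $\gamma$, the sender's best response against $\gamma$ is (arbitrarily close to) the prescribed pair $(p_L^*,1)$ when $\theta=\theta_L$ and $(p_H^*,0)$ when $\theta=\theta_H$, and such that $\gamma$ is ex-post (constrained) rational given the induced $\sigma$. First I would pick a threshold $\omega$ with $p_L^* < \omega \le p_H^*$ (working coordinatewise / along the relevant direction $\lambda$, using that $\mathcal{P}\subset\mathbb{R}^n$; the monotonicity hypothesis on $v(\theta,p,1)-v(\theta,p,0)$ lets me reduce to a one-dimensional threshold in the direction along which this difference increases). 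Set $\gamma$ to prescribe $a=1$ for $p$ on the low side of the threshold (i.e. $p\notin {\sf H}_+(\lambda,\omega)$) and $a=0$ on the high side. Because $v(\theta,p,1)-v(\theta,p,0)$ is increasing in $p$ for \emph{both} $\theta$, and equals zero at $(\theta_H,p_H^*)$ while being $\ge 0$ at $(\theta_L,p_L^*)$, the set of $p$ for which $a=1$ is rational — for \emph{any} posterior over $\{\theta_L,\theta_H\}$ — is a lower interval, so a single threshold classifier can be made to coincide with $y^R(\sigma,\cdot)$ on the support of the induced $\sigma$; this is the content that makes $\gamma$ constrained rational and hence an admissible output of the algorithm game when $\tilde\Gamma$ is the set of single threshold classifiers.

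Next I would verify the sender's incentives. Given that $\gamma$ yields $a=1$ exactly on the low side of $\omega$, the sender's payoff from an action $p$ is $u(\theta,p,1)$ for $p$ below the threshold and $u(\theta,p,0)$ above it. By hypothesis the unconstrained optimum of $u(\theta_L,\cdot,1)$ is at $p_L^*$, which lies strictly below $\omega$, so when $\theta=\theta_L$ the sender optimally chooses $p_L^*$ (or, if $p_L^*$ is exactly on the boundary of feasibility relative to $\omega$, a point arbitrarily close to it — this is where the ``arbitrarily close'' in the statement enters, since we may need $\omega$ strictly above $p_L^*$ but then the region $\{p : p \le \omega\}$ on which $a=1$ may have its $u(\theta_H,\cdot,1)$-maximizer only approached in the limit). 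When $\theta=\theta_H$, the sender compares the best achievable $u(\theta_H,\cdot,1)$ over $\{p \le \omega\}$ against the best achievable $u(\theta_H,\cdot,0)$ over $\{p > \omega\}$, the latter attained at $p_H^*$; I would choose $\omega$ close enough to $p_H^*$ (and use that $p_L^* < p_H^*$ so there is room) that the ``$0$''-branch dominates, giving $(p_H^*,0)$. Thus the induced $\sigma$ puts mass at (approximately) $p_L^*$ and $p_H^*$, the classifier $\gamma$ restricted to these two points is rational, and the realized $p$–$a$ outcome is (arbitrarily close to) the sender's optimal pair, which by the setup need not coincide with the rational-benchmark outcome under $\sigma^R$.

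The main obstacle is the joint constraint on $\omega$: it must lie strictly above $p_L^*$ so that $a=1$ is prescribed there, yet close enough to $p_H^*$ that deviating below the threshold when $\theta=\theta_H$ is unattractive, \emph{and} it must keep $\gamma$ ex-post rational on the induced support. Handling the boundary cases — when $p_L^*$ or $p_H^*$ sits exactly on the threshold, or when $u(\theta_H,\cdot,1)$ over the low region is maximized only in the limit as $p\uparrow\omega$ — is precisely what forces the ``arbitrarily close'' qualifier rather than exact implementation; I would formalize this with a limiting argument, taking a sequence $\omega_m \downarrow p_L^*$ (or $\omega_m \uparrow p_H^*$, whichever the geometry demands) and checking that the associated sender best responses converge to the target pair while each $\gamma_m$ remains a constrained-rational single threshold classifier. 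The weak concavity / monotonicity hypotheses are used exactly to guarantee that the ``$a=1$'' rational region is an interval (lower set) uniformly in the posterior, so that no more elaborate classifier is needed and the whole construction stays inside $\tilde\Gamma$.
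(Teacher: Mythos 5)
There is a genuine gap, and it goes to the heart of what the proposition is about. Your argument has the receiver commit to a fixed threshold $\gamma$, has the sender best-respond with pure strategies at $p_L^*$ and $p_H^*$, and then checks that $\gamma$ is a rational reply to that $\sigma$. But the algorithm does not commit to an arbitrary threshold; it outputs the \emph{best-fitting} single-threshold classifier against the data generated by $\sigma$, and the sender is the one who must engineer $\sigma$ so that this best fit gives them what they want. Under your proposed pure $\sigma$, the receiver is exactly indifferent at $p_H^*$ (by the hypothesis $v(\theta_H,p_H^*,1)=v(\theta_H,p_H^*,0)$), so the ``accept all'' classifier earns the receiver the same payoff as your $\gamma$. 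The constrained-rational algorithm could just as well output accept-all, in which case the sender gets $(p_H^*,1)$, not their optimum $(p_H^*,0)$. Your construction thus does not implement the sender's optimum; it just reproduces one tie-break of the rational equilibrium, which the proposition explicitly contrasts against (``even if this differs from the rational outcome under $\sigma^R$'').

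The step that would actually fail is your claim that ``the set of $p$ for which $a=1$ is rational — for any posterior over $\{\theta_L,\theta_H\}$ — is a lower interval, so a single threshold classifier can be made to coincide with $y^R(\sigma,\cdot)$.'' This is true for a \emph{fixed} posterior, but the posterior $\pi(\theta : p)$ is itself a function of $p$ through $\sigma$; as $p$ moves, the Bayesian weight can shift toward the state where acceptance is more attractive and overturn the within-$\theta$ monotonicity of $v(\theta,p,1)-v(\theta,p,0)$. That is precisely the Rubinstein mechanism the paper is invoking. The paper's proof constructs $\sigma^*$ with a third ``decoy'' action $\tilde p\in(p_L^*,p_H^*)$, played only in state $L$ with small probability $\epsilon$, alongside near-indifference prices $p_L(\epsilon_L)$ and $p_H(\epsilon_H)$. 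The resulting rational label is accept/reject/accept — non-monotone, hence outside $\tilde\Gamma$ — and the inequalities (\ref{ineq:first})–(\ref{ineq:second}) are chosen so that, among all single-threshold rules, the uniquely best one accepts only the low price. This makes the receiver's threshold choice strict (robust to tie-breaking), forces the receiver into a small but positive error, and hands the sender an outcome within $\epsilon$ of first-best. Your proposal has no analogue of the decoy, no induced non-monotonicity, and no strict selection of the sender-favorable threshold, so it proves only that the rational benchmark is attainable with a threshold rule, not the exploitation claim the proposition is making.
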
 

A setting where this sender-optimal action strategy differs from the rational outcome was first studied, to the best of our knowledge, in \citeasnoun{Rubinstein93}. His setting satisfies the conditions of the proposition. Our proof adapts his arguments to the current setting (i.e., incorporating the statistical aspect of our exercise and beyond the application he considered, described above), and highlights the importance of counterveiling incentivse in driving the result. The reason the sender can profitably deviate in the previous proof is because the new $\sigma$ induces a \emph{non-monotone} response from the receiver optimally, even though this is not prescribed by $\sigma^{R}$. In contrast, decision rules with single-threshold classifiers must be monotone.\footnote{Even though $y^R(\sigma^R,p)$ is an element of ${\tilde\Gamma}$, $y^R(\sigma^*,p)$ is not an element of ${\tilde\Gamma}$.   Since $\sigma^*$ is the choice variable of the sender, the sender generates misspecification endogenously. }

In other words, we show that the sender can construct a strategy which ensures that the receiver's utility as a function of $p$ violates single-crossing. Now, if the sender were using the particular $\sigma^*$ from the previous proof, then the rational response can be achieved via a double-threshold classifier, since there are only three optimal sender choices. But on the other hand, if the receiver were restricted to using single- or double-threshold classifiers, then one could find another strategy whereby the optimal response would be to use a triple-threshold classifier, via a similar scheme. As long as the number of thresholds used is finite, a similar kind of exploitation would emerge.

\subsubsection{Accommodating Richer Principal Action Spaces} \label{sect: discretization}

While $\tau_{\Ahat}$ is designed to be robust against parametric details of the underlying problems, the algorithm is still vulnerable to strategic manipulation by the rational sender.   The proof of Proposition \ref{pr: convergence} reveals that the rate of convergence is decreasing as the number of sender actions in the support of $\sigma$ increases.   The sender can randomize over infinitely many messages to slow down the convergence rate arbitrarily.  That said, such manipulation would be short lived, and therefore have limited gains. Nevertheless, in order to ensure that there are only a finite number of observations that the algorithm may observe, it is necessary to augment the observation space so that the distribution facing the receiver can be treated as discrete. This section discusses how this modification can be done. 

\medskip

\noindent \emph{Approach One: Discretization} \hspace{5mm}  We describe how to revise $\tau_{\Ahat}$ accordingly to discretize the observation space.   Instead of processing individual actions, we let $\tau_{\Ahat}$ process a group of actions at a time, treating ``close'' actions as the same group.   In principle, we want to partition $\mathcal{P}$ into a set of half-open rectangles intervals with size $\lambda$.  More precisely, given some arbitrary $\lambda$, we can partition each dimension of a rectangle containing $\mathcal{P}$ into the collection of half open intervals of size $\lambda>0$ with a possible exception of the last interval:
\[
  P_0^{j}=[{\underline p}_, {\underline p}+\lambda),\ldots,
  P_{K_{j}^\lambda}^{j}=[{\underline p}+(K_{j}^\lambda-1)\lambda ), {\overline p}]
\]
where $K_{j}^\lambda$ is the number of elements in the partition and $j \in \{1, \ldots, n\}$ is a partiular dimension.

For each element in the partition, the algorithm receives an ordinal information about the average outcome from the decision, if it contains a sender action in the support of $\sigma$:
\[
  \yhat^\lambda_t(k) = a \text{  if  }  a = \argmax \sum_{p \in P_{k}} v(\theta, p, a)
\]
where $p$ in the support of $\sigma$ and $P_{k}$ is the product of partition elements.    Let $\tau^\lambda_{\Ahat}$ be the algorithm obtained by replacing $\yhat_t(p)$ in $\tau_{\Ahat}$ by $\yhat^\lambda_t(k)$.  Note that as $\lambda\rightarrow 0$, the size of the individual elements in the partition shrinks and $\tau^\lambda_{\Ahat}$ converges to $\tau_{\Ahat}$ for a fixed $\sigma$.  

Compared to $\tau_A$ and $\tau_{\Ahat}$, $\tau^\lambda_{\Ahat}$ takes only coarse information for two important reasons.   First, the algorithm cannot differentiate two $p$s which are very close.  This features makes the algorithm robust against strategic manipulation of the sender to slow down the speed of learning.   Second, the algorithm cannot detect the precise consequence of its decision, but only the ordinal information of the past decision, aggregated over time.    The second feature allows the algorithm to operate with very little information about the details of the parameters of the underlying game. \medskip

\noindent \emph{Approach Two: Smoothing}   \hspace{5mm}  Discretizing the action space as above is one way of ensuring that there are only a finite number of sender actions to worry about in the long run, and given a sufficiently fine discretization, any distinct $p$ is distinguished by the algorithm. However, in principle, close sender actions may still be quite far in terms of payoffs, and only be distinguished in the long run. That is, there is no guarnatee that for a fixed horizon, that the algorithm is not grouping too many $p$ possibilities. The issue is that the discretization approach uses no information about the receiver's payoff function. Our other alternative describes more explicitly how \emph{close} to rationality the receiver can achive, given some fixed discretization scheme.

The idea is the following: We add a small amount of noise to each observed $p$, with the amount of noise tending to 0 as the sample size grows large. Doing so allows us to show that the receiver perceives the sender's strategy to have the property that $\E_{\theta}[u(a, \theta, p(a) : p]$ is uniformly equicontinuous (as functions of $p$). As a result, if the receiver only seeks to use a strategy that is $\varepsilon-$optimal against $\sigma$, uniform equicontinuity implies that their best reply can essentially be collapsed within intervals. 

It will additionally be important that the algorithm does not seek to make predictions at $p$ values where the corresponding density would be estimated to be small. Hence a second step will be to determine whether a $p$ realization occur in a region with sufficiently large probability, where the ``sufficient'' amount will also tend to 0 as the amount of data grows large.

Formally, suppose the algorithm observes data $((p_{1},y), \ldots, (p_{n},y))$. Let $z_{\eta,i}$ be an independent random vector in the unit ball around 0 distributed according to the PDF:

\begin{equation*} 
\phi_{\eta}(z) = \frac{1}{K} \exp \left(- \frac{1}{1-\abs{z/\eta}^{2}} \right)\frac{1}{\eta^{\abs{A}-1}},
\end{equation*}

\noindent where $K$ is a constant which ensures $\phi_{\eta}$ integrates to 1. Our first augmentation is the following: 

\begin{itemize} 
\item  Replace the observed $p_{1}, \ldots, p_{n}$ with $\hat{p}_{1}, \ldots, \hat{p}_{n}$, where $\hat{p}_{i} = p_{i} + z_{\eta,i}$, with $z_{\eta, i}$ distributed according to the above. 
\end{itemize} 

\noindent Second, it turns out that the above smoothing operation only works if the density is sufficiently large. Otherwise, the smoothing noise has too much power.  

\begin{itemize} 
\item For any $\tilde{p}=(\tilde{p}_{a})_{a \in A \backslash a_{0}}$ drawn, estimate the event that $\tilde{\sigma}_{\eta}(\tilde{p}) < \gamma$ by fixing some $\delta$ small and determining whether menu(s) $p$ with $\max_{a \in A \backslash \{a_{0}\}} {\tilde{p}_{a}- p_{a}} < \delta$ occurs with frequency at least $(2\delta)^{\abs{A}-1} \gamma$. Recommend action $a_{0}$ for any such $p$.  
\end{itemize}

As $\delta \rightarrow 0$, the condition holds if the density is at least $\gamma$. Together with the previous, we can show that if the receiver instead observes noisy sender actions, the perceived sender's strategy is sufficiently well-behaved to maintain the appropriate convergence for the algorithm. 

\begin{proposition} 
Suppose the sender is restricted to choosing distributions which are either discrete or continuous with bounded density. Consider an algorithm which can ensure that an $\varepsilon$-rational label is PAC-learnable, for any arbitrary $\varepsilon > 0$ given a finite number of possible sender actions. Then there exists a smoothing operation which maintains PAC-learnability of $\varepsilon$-rationality, for every $\varepsilon > 0$. 
 \label{pr: Smoothing}
\end{proposition}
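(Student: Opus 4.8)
The plan is to view the two augmentations described above — replacing each observation $p_i$ by the perturbed observation $\hat p_i=p_i+z_{\eta,i}$, and then defaulting to the safe action $a_0$ on cubes whose empirical frequency is too small — as a reduction, for a suitable schedule $(\eta_n,\gamma_n,\delta_n)$ of the tuning parameters tending to $0$, of the receiver's problem to one with \emph{finitely many} effective sender actions and a perceived sender strategy whose induced posteriors vary continuously, at a cost of only $O(\varepsilon)$ in payoff. On that reduced problem one runs the posited finite-action algorithm and sums the resulting error contributions; a final rescaling of $\varepsilon$ yields the statement.

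First I would record that, conditional on $\sigma$, a receiver who sees $\hat p$ perceives the state-dependent distribution $\tilde\sigma_\eta:=\sigma\ast\phi_\eta$ (convolution in $p$), whether $\sigma$ is discrete or has density bounded by $\bar f$. Since $\phi_\eta$ is $C^\infty$ with support in the $\eta$-ball, $\tilde\sigma_\eta$ has an everywhere-defined marginal density $\tilde f_\eta$ bounded by $\max\{\bar f,\|\phi_\eta\|_\infty\}$ (with $\|\phi_\eta\|_\infty=O(\eta^{-(\abs{A}-1)})$) and — the key point — a modulus of continuity on the compact $\eta$-neighborhood $\mathcal P_\eta$ of $\mathcal P$ that depends on $\eta$ but \emph{not} on $\sigma$, because any derivative of the convolution falls on $\phi_\eta$. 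Together with continuity of $v$ in $p$ and $\abs{p-\hat p}\le\eta$, this shows that the receiver's perceived conditional payoff $\bar v_\eta(a,\cdot):=\E[v(\theta,p,a)\mid\hat p=\cdot]$, restricted to $\{\tilde f_\eta\ge\gamma\}$, is uniformly continuous with a $\sigma$-free modulus $\omega_{\eta,\gamma}$ — this is the ``uniform equicontinuity'' asserted in the text. Now partition a bounding box of $\mathcal P_\eta$ into cubes of side $\delta$ (finitely many, by compactness). Once $\delta$ is small enough that $\omega_{\eta,\gamma}$ at a cube's diameter is sufficiently small relative to $\varepsilon$, an action that is near-optimal at a cube's center is $(\varepsilon/2)$-optimal throughout the part of that cube inside $\{\tilde f_\eta\ge\gamma\}$; so on that high-density region the receiver faces a finite-action $(\varepsilon/2)$-rational labelling problem — one label per active cube — to which by hypothesis the posited algorithm assigns a PAC-learnable $(\varepsilon/2)$-rational label.

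On the complementary region $\{\tilde f_\eta<\gamma\}$ I would recommend $a_0$ (feasible since $a_0\in A$ and $\abs{A}<\infty$). The probability that $\hat p$ lands there is at most $\gamma\,\mathrm{vol}(\mathcal P_\eta)$, and — since it is this probability, not the pointwise loss, that enters the PAC criterion — the low-density region then contributes at most $\gamma\,\mathrm{vol}(\mathcal P_\eta)$ to the failure probability (and at most $2\|v\|_\infty\gamma\,\mathrm{vol}(\mathcal P_\eta)$ to any payoff comparison). Low-density membership is detectable from data: the empirical frequency of $\hat p$ in a $\delta$-cube is a sum of i.i.d.\ Bernoullis whose mean, by the $\sigma$-free modulus of $\tilde f_\eta$, separates the ``$\ge\gamma$'' case from the ``$<\gamma/2$'' case once $\delta$ is small, so a Hoeffding bound (equivalently, the large-deviation estimate used in Section~\ref{sect: secondpart}) makes the per-cube misclassification probability decay exponentially in $n$, uniformly over $\sigma$. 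Default to $a_0$ on every cube flagged low-density and keep the posited algorithm's output elsewhere.

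It remains to pick the schedule and add up the errors. Let $\eta_n,\gamma_n,\delta_n$ tend to $0$ slowly, with $\eta_n$ slow enough that at each $n$ the number of active cubes — hence the effective action count fed to the posited algorithm — is finite and that algorithm's PAC bound drives its high-density misclassification probability below $\varepsilon/2$; with $\eta_n$ also small enough that $\tilde\sigma_{\eta_n}\to\sigma$ weakly and bounded continuous payoffs force the perturbation bias $\sup_{a,p}\abs{\bar v_{\eta_n}(a,p)-\E[v(\theta,p,a)\mid p]}\to 0$; and with $\gamma_n,\delta_n$ slow enough that $\gamma_n\,\mathrm{vol}(\mathcal P_{\eta_n})\to 0$ while the cube-classification error of the previous paragraph still vanishes. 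Summing the grid-collapse slack, the posited algorithm's tolerance and failure probability, the perturbation bias, and the low-density contributions, the composite smoothed algorithm outputs an $\varepsilon$-rational label with probability tending to one, so $\varepsilon$-rationality is PAC-learnable; since $\varepsilon>0$ was arbitrary, the proposition follows. The step I expect to be the main obstacle is precisely this joint calibration: $\eta$ must be \emph{large} relative to the data-driven cutoff $\gamma$ and the grid size $\delta$ — so that $\tilde f_\eta$ is estimable, the perceived posteriors stay non-degenerate, and the active action set remains finite at each horizon — yet \emph{small} relative to the fixed target $\varepsilon$, so that convolution does not displace $\varepsilon$-optimal replies, while $a_0$ must be genuinely safe on the discarded region. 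Exhibiting a feasible rate triple reconciling these demands is where the bounded-density restriction on $\sigma$ does the essential work, since it is what prevents $\|\tilde f_\eta\|_\infty$ — and hence the sample size needed to resolve the low-density region — from growing faster than the noise level can compensate.
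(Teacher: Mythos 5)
Your proposal tracks the paper's own proof very closely: the $\sigma$-free modulus of continuity of the mollified posterior on the high-density region is the paper's Step~1, the bound on the mass of $\{\tilde f_\eta<\gamma\}$ is Step~3, the perturbation-bias control is Step~2, and your cube-partition argument is a more explicit version of the paper's preliminary "uniform equicontinuity implies weak learnability" observation (which bounds the number of threshold crossings rather than counting active cubes). One small caution: the paper does not deduce the bias bound $\sup_{a,p}\abs{\bar v_{\eta}(a,p)-\E[v(\theta,p,a)\mid p]}\to 0$ merely from weak convergence $\tilde\sigma_\eta\to\sigma$ — since conditional expectations are ratios, weak convergence alone is insufficient — but instead uses uniform convergence of the mollified density on the compact set $K_\gamma$ together with the lower bound $\gamma$ on the denominator, a computation you would need to reproduce to make that sentence rigorous.
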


The idea of the proposition is to use the smoothing operation to show that the algorithm perceives that the sender uses a $\sigma$ such that $\E[u(a, \theta, p(a)) : p]$ is uniformly equicontinuous. Given that we seek $\varepsilon$-optimality, uniform equicontinuity allows us to essentially discretize the menu space, transforming the environment into a much simpler one. 

There are two important properties of the transformation which allows us to ensure this works. The first is that, defining $\tilde{\sigma}_{\eta}( \cdot : \theta)$ to be the perceived $p$ distribution of $p_{i} + z_{i}$, we have: 

\begin{equation*} 
D^{\alpha} \tilde{\sigma}_{\eta}(p : \theta) = \int_{P} D^{\alpha}\phi_{\eta}(p-\tilde{p}) \sigma(\tilde{p} : \theta) d\tilde{p},
\end{equation*}

\noindent so that $\tilde{\sigma}_{\eta}$ inherits the smoothness properties of $\phi_{\eta}$. The second is that, on any compact subset of $P$, we have $\sigma_{\eta}(\cdot : \theta) \rightarrow \sigma(\cdot : \theta)$ uniformly. Now, in order to obtain uniform continuity as $\eta \rightarrow 0$, it will be important that we can simultaneously ensure that the sender's strategy does not involve dramatic movements in the conditional probability. For instance, suppose the sender were to use the following strategy: 

\begin{equation*} 
\sigma(p : G) = p(\sin \left( \frac{1}{p} \right) +1),  \sigma(p : B) = p(\sin \left( \frac{1}{p} - \pi \right) +1),
\end{equation*}

\noindent defined on an interval $[0, \overline{p}]$ such that both densities integrate to 1. Then $\Prob[\theta=G : p]=1$ if $p= \frac{1}{(2k+1/2)\pi}$ for some $k \in \N$, and 0 if $p= \frac{1}{(2k + 1/2)\pi}$, for some $k \in \N$. As $k \rightarrow \infty$ (so that $p \rightarrow 0$), this oscillates infinitely often.

We handle the problem this example poses by only making non-degenerate predictions if the probability of using such sender actions is sufficiently high. That is, we ``ignore'' $p$ realizations which only occur with low probability according to an estimated density.\footnote{One may wonder why this trick works; for instance, we do not obtain the result when $\sigma(p : G) = \sin \left( \frac{1}{p} \right) +1,  \sigma(p : B) = \sin \left( \frac{1}{p} - \pi \right) +1.$ However, unlike the previous example, these will fail the continuity requirement on the sender's strategy space, which is needed in the proof. } Seeking to estimate the probability that all sender actions are within $\delta$ of $p$ in order to estimate the density is just one way of doing this step; for instance, one could estimate the CDF $\tilde{\sigma}_{\eta}(p)$, and use the estimated density to determine whether the observations should be thrown away. Ultimately, however, given the compact $\mathcal{P}$, we can minimize the probability that this is done by using sufficiently low thresholds. As a result, it has a vanishing impact on PAC-learnability, as well as the sender's expected profit.

\section{Review of Examples} \label{sect: examplesreview}

We now verify the implications of this observation on the sender behavior in our particular examples, showing that this results in the sender-preferred Stackleberg outcome is emerging. This requires us to verify the previously discussed conditions in the context of these applications.

\subsection{Informed Principal}

The decision problem of the agent is to identify each pair $(x,q)$ of payment $x$ and cost $q$ as an acceptable constract $(a=1)$ or not $(a=-1)$.   Without loss of generality, we can assume that the agent uses the single threshold linear classifier
induced by hyperplane
\[
{\sf H}(\lambda_x,\lambda_q,\omega) =\{ (x,q) : \lambda_x x +\lambda_q q =\omega \}
\]
and 
\[
  h(x,q)=\begin{cases}
    1    & \text{if } \ (x,q)\in {\sf H}^+(\lambda_x,\lambda_q,\omega) \\
    -1   & \text{otherwise.}
    \end{cases}
\]
We can construct $\tau_{\Ahat}$ by estimating $\Expect v(\theta,p,a)$ for each $(p,a)$.   

\begin{lemma}
  Suppose that $\sigma$ assigns a positive probability to $(x,q)$ where
  \[
\Expect (q-\theta x : (q,x))=0
\]
for $x>0$.  Then $\sigma$ is not a best response to $\tau_{\Ahat}$.
\end{lemma}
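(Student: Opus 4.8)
The plan is to argue by contradiction: assuming $\sigma$ is a best response to $\tau_{\Ahat}$ that places positive probability on a contract $(x,q)$ with $x>0$ and $\E(q-\theta x:(q,x))=0$, I will exhibit an arbitrarily small perturbation of $\sigma$ that strictly raises the sender's long-run payoff. The starting point is that $\E(q-\theta x:(q,x))=0$ means the agent is exactly indifferent at $(x,q)$ between $a=1$ (conditional expected payoff $\E(q-\theta x:(q,x))=0$) and $a=-1$ (payoff $0$), so $y^R(\sigma,(x,q))$ is not a strict best reply. The first step is to pin down what $\tau_{\Ahat}$ does at $(x,q)$ under $\sigma$: the label estimator $\yhat_t(x,q)$ is the sign of an empirical estimate of $\E\bigl(v(\theta,(x,q),1)-v(\theta,(x,q),-1):(q,x)\bigr)=0$, so it behaves like the sign of a mean-zero partial sum (this is where $x>0$ enters: it guarantees the increments $q-\theta x$ genuinely vary with $\theta$ whenever the posterior at $(x,q)$ is non-degenerate, the degenerate case being handled by tie-breaking). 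Hence $\yhat_t(x,q)$ fails to converge, $f^y_t(x,q)$ stays bounded away from $0$ and $1$, and the ensemble recommendation $\tau_{\Ahat}(D_t)(x,q)$ does not settle on a pure action: the long-run frequency $\beta$ of acceptance at $(x,q)$ satisfies $\beta\in(0,1)$ (more carefully, its $\liminf$ and $\limsup$ lie strictly inside $(0,1)$).

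The second step constructs the deviation, splitting on the sender's preference at $(x,q)$. Let $\Theta_0$ be the set of types placing mass on $(x,q)$, with induced posterior $\mu$ over $\Theta_0$. If accepting is strictly preferred by these types, $u(\theta,(x,q),1)>u(\theta,(x,q),-1)$, move all of that mass to $(x,q+\delta)$ for small $\delta>0$; since exactly the same types offer it with the same relative weights, the posterior there is again $\mu$, the agent's conditional expected payoff from accepting becomes $\E(q-\theta x:(q,x))+\delta=\delta>0$, so $y^R(\sigma',(x,q+\delta))=1$ is now a \emph{strict} best reply and all other posteriors (hence all other labels) are unchanged. By Proposition~\ref{cr: corollary} and Proposition~\ref{pr: convergence}, $\tau_{\Ahat}$ learns this and the agent accepts at $(x,q+\delta)$ with probability tending to $1$. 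The change in the sender's payoff contributed by this contract is proportional to $(1-\beta)\bigl(u(\theta,(x,q),1)-u(\theta,(x,q),-1)\bigr)$, a strictly positive constant (since $\beta<1$), minus the continuity loss $u(\theta,(x,q),1)-u(\theta,(x,q+\delta),1)=O(\delta)$ summed over $\theta\in\Theta_0$; taking $\delta$ small makes the net strictly positive, contradicting optimality of $\sigma$. If instead rejecting is strictly preferred at $(x,q)$, move the mass to $(x,q-\delta)$ instead: the agent's conditional expected payoff from accepting becomes $-\delta<0$, so it strictly rejects and, by the same propositions, rejects with probability tending to $1$; since the rejection payoff $u(\theta,(x,q-\delta),-1)=u(\theta,(x,q),-1)$ does not depend on $q$, there is no first-order loss, and the gain $\beta\bigl(u(\theta,(x,q),-1)-u(\theta,(x,q),1)\bigr)>0$ (using $\beta>0$) is strict. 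Either way $\sigma$ is not a best response.

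I expect the main obstacle to be the first step — rigorously establishing $\beta\in(0,1)$. That $\yhat_t(x,q)$ oscillates is a routine law-of-large-numbers/large-deviations fact at a mean-zero point (the large deviation property of Proposition~\ref{pr: convergence} simply fails there), but propagating non-convergence through the AdaBoost ensemble needs care: one must check that a non-converging label at the single support point $(x,q)$ forces the recommendation $\tau_{\Ahat}(D_t)(x,q)$ to have acceptance frequency bounded away from $\{0,1\}$ — for instance by tracking how $\yhat_t(x,q)$ feeds into the weights $\alpha_k$ and the fitted $h_k$, or by noting that if the recommendation converged to a pure action then the algorithm would be PAC-learning a label at $(x,q)$ inconsistent with the estimator's $\liminf/\limsup$. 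A secondary subtlety is the double-knife-edge in which the sender too is exactly indifferent between $a=1$ and $a=-1$ at $(x,q)$; here I would invoke strict concavity of $f$, which makes $u(\theta,(x,q),1)-u(\theta,(x,q),-1)$ strictly decreasing in $q$ and hence nonzero except at isolated contracts, so the perturbation can be taken in a direction that also breaks the sender's indifference.
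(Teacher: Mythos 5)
Your argument lands on the same profitable perturbation the paper constructs, but the routes differ in structure. The paper resolves the direction of the principal's preference at the outset via a Jensen argument: substituting the indifference condition $q = r x$ (with $r=\E(\theta:(x,q))$) into $u$, the ``accept'' lottery over $\{I-rx,\,I-L+(1-r)x\}$ is a mean-preserving contraction of the ``reject'' lottery over $\{I,\,I-L\}$, so strictly concave $f$ gives a \emph{strict preference for acceptance}. This makes the argument one-branched: tighten the contract slightly so the agent strictly prefers to accept (the paper reduces $x$; your increase of $q$ works equally well), invoke Proposition~\ref{cr: corollary} so $\tau_{\Ahat}$ learns the now-strict label, and compare payoffs. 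Your two-branch structure, perturbing $q$ up or down depending on the principal's preference, is sound but the second branch never arises given the concavity calculation; invoking concavity only as a fallback for the double-knife-edge reorganizes the same fact the paper uses upfront. It is worth noting the paper's chain $I > I-rx > I-L+(1-r)x > I-L$ implicitly also needs a non-degenerate posterior and $x<L$, points you sidestep by not fixing the sign in advance.

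Where your proposal is more careful than the paper is exactly the place you flag. The paper asserts the perturbation is a strict improvement without explaining why the unperturbed algorithm does not already accept with probability one at $(x,q)$; your random-walk recurrence argument for $\yhat_t$ oscillating is the right intuition for why the acceptance frequency $\beta$ is bounded away from $\{0,1\}$, and you are also right that carrying this through the AdaBoost ensemble to a statement about $\tau_{\Ahat}$'s output is not immediate. However, once the Jensen step pins down that the principal strictly prefers acceptance, all you actually need is $\beta<1$ (the algorithm does not already accept for sure), which is weaker than $\beta\in(0,1)$ and is the cleanest way to close the gap. Combining the paper's Jensen step, which fixes the perturbation direction, with your observation that $\beta<1$ must hold (or else nothing needs proving) gives the tightest version of either argument.
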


\begin{proof}
Let $(x,q)$ be some offer such that the agent is indifferent between accepting and rejecting, so that: 

\begin{equation*} 
q- \E[ \theta : (x,q)] x=0
\end{equation*}

The principal's expected payoff is found by taking the expectation of $u(\theta, (x,q),a)$ over all realizations of $x,q$. By the law of iterated expectations, this occurs if and only if the principal's payoff is maximized following \emph{each} realization of $(x,q)$. We claim the principal is \emph{not} indifferent between actions following any such $(x,q)$. Indeed, letting $\E[ \theta : (x,q)] =r$, indifference implies: 

\begin{equation*} 
 (1- r) f(I-rx)+  r f(I-L+(1-r)x)= (1-r) f(I)+ r f(I-L).
\end{equation*}

Note that equality holds if $f(y)=y$. This implies that both lotteries, whether or not the principal accepts, have the same expected values. However, if $f$ is concave, then since $I > I-rx > I-L +(1-r)x > I-L$, it must be that the left hand side is strictly greater than the right hand side. 

It follows that if indifference holds, the principal strictly prefers the agent accept the offer by slightly reducing $x$. 
\end{proof}

Following the same logic as in the previous example, we conclude that if $\sigma$ is a best response to $\tau_{\Ahat}$, then 
\[
\tau_{\Ahat}(D_t)(q,x)=y^R(\sigma, (x,q)) 
\]
with probability 1.  A best reply $\sigma$ to $\tau_{\Ahat}$ emulates $(\sigma^R,y^R(\sigma^R,p))$.

\subsection{Labor Market Signaling}

The firm's objective function is to forecast the productivity of the worker:
\[
v(\theta,p,a)=-(\theta-a)^2
\]
If $A$ is a real line, then
\[
y^R(\sigma,p)=\arg\max_{a\in A} \Expect_\theta \left[ v(\theta,p,a) : p,\sigma \right]
\]
where the posterior distribution over $\theta$ is calculated via Bayes rule from $\sigma$ and the prior over $\theta$.   Strict concavity of $v$ implies that $y^R(\sigma,p)$ is a strict best response $\forall\sigma,p$.

Without loss of generality, we consider a single threshold decision rule parameterized by $(a^+,a^-,p^0)$:
\[
  h(p)=\begin{cases}
a^+   & \text{if } \ p \ge p^0 \\
a^-   & \text{if } \ p < p^0. \\
    \end{cases}
  \]
Let ${\mathcal H}$ be the set of all single threshold decision rules.   In each round, $h_t$ solves
\[
\max_{h\in {\mathcal H}}\Expect_\theta \left[ v(\theta,p,a) : p,\sigma \right]
\]
if the data includes $\sigma$.  We construct $\tau_A$ accordingly.    If $\sigma$ is not observable by the algorithm, we estimate the posterior distribution of $\sigma$ conditioned on each $p$ to construct $\tau_{\Ahat}$.
If the agent learns $y^R(\sigma,p)$ eventually $\forall\sigma,p$, then the principal's choice $\sigma^R$ 
\[
  \Expect[ u(\theta,p,y^R(\sigma,p)) : \sigma, p]  
   =  \sum_\theta \sum_p u(\theta,p,y^R(\sigma,p))\sigma(p:\theta)\pi(\theta).
\]
If $\sigma^R$ entails separation by the high productivity worker, then the Riley outcome is the solution, that generates the largest ex ante expected surplus for the principal among all separating equilibria.   In order to satisfy the incentive constraint among different types of the principal, the principal with $\theta=H$ incurs the signaling cost.   If the signaling cost outweighs the benefit of separation, then $\sigma^R$ is the pooling equilibrium where both types of the workers takes the minimal signal.

The analysis is based upon the assumption that $y^R(\sigma,p)$ is a strict best response $\forall\sigma,p$.   As  $\abs{A}=J<\infty$, $y^R(\sigma,p)$ may not be a strict best response for some $\sigma$ and $p$.   Let us assume that
\[
A=\{a_0=0,a_1,\ldots,a_J\}
\]
and $a_i-a_{i-1}=\Delta>0$ and $a_F=1+\sup \Theta>0$.   Although $y^R(\sigma,p)$ may not be a strict best response for some $\sigma$ and $p$, the set of best responses contains at most 2 elements, which differ by $\Delta>0$.   Abusing notation, let $y^R(\sigma,p)$ be the set of best responses, if the agent has multiple best responses at $p$.
Applying the convergence result, we have $\exists T$ such that, $\forall t\ge T $,
\[
  \Prob\left(
  \exists y\in y^R(\sigma,p), y=\tau_{\Ahat}(D_t)(p) \right) < e^{-\rho t}. 
\]
For a sufficiently small $\Delta>0$, $\sigma^R$ is either a strategy close to the Riley outcome, or the pooling equilibrium where both types of the principals choose the smallest value of $p$.

\subsubsection{Monopoly Market}

In the model of \citeasnoun{Rubinstein93} illustrated in Section \ref{Monopoly Market},
suppose that type 1 buyer is an algorithmic player, while type 2 buyer is a rational player.    To simplify the model, we assume that type 2 buyer's decision is $y^R(\sigma,p)$ $\forall \sigma$.  Assume that type 1 buyer uses $\tau_{\Ahat}$.

$y^R(\sigma,p)$ is not a strict best response, if
\begin{equation}
  \Expect_\theta v(\theta,p,a,i)=0 \qquad\forall a\in A=\{1,-1\}, \forall i
  \label{eq: zero surplus condition}
\end{equation}
so that the agent is indifferent between accepting and rejecting $p$.
Thus, $\tau_{\Ahat}$ is not PAC learnable.

Still, the best response of the monopolistic seller against $\tau_{\Ahat}$ is $\sigma^R$.   The critical step is to show that a rational seller would not use any $\sigma$ which assigns positive probability $p>v_L$ satisfying \eqref{eq: zero surplus condition}.

\begin{lemma} \label{lm: strictly dominated}
Fix $\sigma$ which assigns $p >v_L$ with positive probability, satisfying
\begin{equation}
  \Expect_\theta v(\theta,p,1)\ge 0.
  \label{eq: positive surplus condition}
\end{equation}
Then, the ex ante expected profit of the principal against $\tau_{\Ahat}$ from $\sigma$ is strictly smaller than from $\sigma'$:
\[
{\mathcal U}(\sigma^R,\tau_{\Ahat}) > {\mathcal U}(\sigma,\tau_{\Ahat}).
\]
\end{lemma}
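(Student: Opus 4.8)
\emph{Proof plan.} The plan is to pin down the long-run profit $\mathcal{U}(\sigma,\tau_{\Ahat})$ explicitly and bound it by $\tfrac12(v_L-c_L)=\mathcal{U}(\sigma^R,\tau_{\Ahat})$, with the gap strictly negative whenever $\sigma$ charges a price above $v_L$ at which trade occurs. By Proposition~\ref{pr: convergence} the algorithm used by the type-$1$ buyer eventually plays $y^R(\sigma,\cdot)$, and the rational type-$2$ buyer plays $y^R(\sigma,\cdot)$ as well (his payoff does not depend on $i$), so in the long run a sale at $p$ occurs exactly when $y^R(\sigma,p)=1$. Writing $S=\{p:y^R(\sigma,p)=1\}$ for this ``sale region'' and using $\pi(L)=\pi(H)=\tfrac12$, the seller's payoff against $\tau_{\Ahat}$ is
\[
\mathcal{U}(\sigma,\tau_{\Ahat})=\tfrac12\sum_{p\in S}\sigma(p:L)(p-c_L)+\tfrac12\sum_{p\in S}\sigma(p:H)(p-\bar c),\qquad \bar c:=rc_1+(1-r)c_2 .
\]
Here $p-c_L>0$ since $p\ge v_L>c_L$ by \eqref{eq: lemon}, while $p-\bar c<0$ since $p\le v_H<\bar c$ by \eqref{eq: severe lemon}: selling is profitable at $\theta=L$ but loss-making in expectation over buyer types at $\theta=H$.

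Next I would record the Bayesian restriction that membership in $S$ imposes. If $p\in S$ then $y^R(\sigma,p)=1$, hence $\E_\theta[v(\theta,p,1):p]\ge 0$; written out via Bayes' rule with the uniform prior this is exactly
\[
\sigma(p:L)\,(p-v_L)\ \le\ \sigma(p:H)\,(v_H-p)\qquad\text{for every }p\in S .
\]
In particular, the hypothesized price $p_0>v_L$ satisfying \eqref{eq: positive surplus condition} must have $\sigma(p_0:H)>0$, since otherwise the posterior at $p_0$ is degenerate at $L$ and the left side would be strictly positive.

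The core of the argument is then a one-line reallocation of the $\theta=L$ revenue. Splitting $p-c_L=(p-v_L)+(v_L-c_L)$ and applying the displayed Bayes inequality termwise,
\[
\sum_{p\in S}\sigma(p:L)(p-c_L)\ \le\ \sum_{p\in S}\sigma(p:H)(v_H-p)+(v_L-c_L)\sum_{p\in S}\sigma(p:L)\ \le\ \sum_{p\in S}\sigma(p:H)(v_H-p)+(v_L-c_L).
\]
Substituting into $\mathcal{U}(\sigma,\tau_{\Ahat})$, the $\sigma(p:H)$ terms combine into $\sigma(p:H)[(v_H-p)+(p-\bar c)]=\sigma(p:H)(v_H-\bar c)$, so
\[
\mathcal{U}(\sigma,\tau_{\Ahat})\ \le\ \tfrac12(v_L-c_L)+\tfrac12(v_H-\bar c)\sum_{p\in S}\sigma(p:H)\ \le\ \tfrac12(v_L-c_L),
\]
again by \eqref{eq: severe lemon}. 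Under $\sigma^R$ the only sale is at $p=v_L$ with $\theta=L$, so $\mathcal{U}(\sigma^R,\tau_{\Ahat})=\tfrac12(v_L-c_L)$. Finally $p_0\in S$ (as $y^R(\sigma,p_0)=1$) and $\sigma(p_0:H)>0$, so the term $\tfrac12(v_H-\bar c)\sigma(p_0:H)$ is strictly negative, the last inequality is strict, and $\mathcal{U}(\sigma^R,\tau_{\Ahat})>\mathcal{U}(\sigma,\tau_{\Ahat})$ as claimed.

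I expect the main obstacle to be the knife-edge where \eqref{eq: positive surplus condition} holds with equality, i.e.\ the zero-surplus prices of \eqref{eq: zero surplus condition}: there $y^R(\sigma,p_0)$ is not a strict best response, $\tau_{\Ahat}$ need not be PAC-learnable at $p_0$, and its long-run recommendation at $p_0$ need not stabilize, so membership of $p_0$ in $S$—and hence $\mathcal{U}(\sigma,\tau_{\Ahat})$ itself—is not pinned down by $y^R$ alone. The way around this is to observe that the bound $\mathcal{U}(\sigma,\tau_{\Ahat})\le\tfrac12(v_L-c_L)$ is valid for \emph{any} resolution of the ties—replace each $\mathbf{1}[p\in S]$ by an arbitrary weight $\beta(p)\in[0,1]$ and none of the algebra changes—and that strictness can fail only if the algorithm never trades at $p_0$, in which case the mass $\sigma(p_0:H)>0$ (and $\sigma(p_0:L)$, if positive) is simply wasted and a strict improvement is obtained by a standard reallocation to a price in $S$ at or below $v_L$, exactly as in the indifference argument for the informed-principal lemma above. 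One should likewise check that under $\sigma^R$ the algorithm does \emph{not} trade at the indifference point $v_H$, which is precisely what the equilibrium selection $y^R(\sigma^R,p)=1\iff p\le v_L$ records and which keeps $\mathcal{U}(\sigma^R,\tau_{\Ahat})=\tfrac12(v_L-c_L)$.
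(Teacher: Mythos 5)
Your argument follows the same route as the paper's proof (which reproduces the accounting of \citeasnoun{Rubinstein93}): at any accepted price $p > v_L$, bound the revenue by the posterior mean $\E[v : p]$ via the Bayes acceptance inequality, subtract the cost, and observe that the $\theta = L$ term is at most $v_L - c_L$ while the $\theta = H$ term is strictly negative by \eqref{eq: severe lemon}; your split $p - c_L = (p - v_L) + (v_L - c_L)$ followed by pushing the $(p - v_L)$ piece to the $\theta = H$ column is exactly that calculation, merely aggregated over $S$ rather than written conditionally on $p$. Two points your write-up sharpens are worth recording. First, you carry the $\theta = L$ cost $c_L$ through, so the bound lands exactly at $\tfrac12(v_L - c_L) = \mathcal{U}(\sigma^R, \tau_{\Ahat})$; the paper's printed proof drops $c_L$ from the cost expression and derives only the looser bound $\pi_L v_L$, which by itself does not imply the Lemma's conclusion when $c_L > 0$ (an oversight rather than a different argument). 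Second, you correctly flag the indifference knife-edge, where $y^R(\sigma,p_0)$ is not a strict best response, Proposition~\ref{pr: convergence} does not apply, and the long-run sale region is therefore not pinned down by $y^R$ alone; the paper's proof passes over this. Your tie-robustness remark is the right repair, but the reallocation aside needs one caveat: if $\sigma(p_0:L)=0$ and the algorithm never trades at $p_0$, your bound holds with equality, and indeed $\sigma=\sigma^R$ itself satisfies the Lemma's hypotheses at $p_0=v_H$, so the strict inequality as literally stated requires a qualification excluding that corner. This is a latent defect of the Lemma's statement that your more careful bookkeeping makes visible, not one your proof introduces.
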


\begin{proof}
It suffices to show that 
if $p>v_L$ and $\Expect(v:p)-p\ge 0$, then the expected profit from $p$ is strictly less than $\pi_L v_L$.   We write the proof in \citeasnoun{Rubinstein93} for the later reference.
For any price $p$ satisfying
\[
\Prob(H :p) v_H + \Prob(L:p)v_L \ge p,
  \]
  the revenue cannot exceed
  \[
\Prob(H:p)  v_H+\Prob (L:p)  v_L 
  \]
  but the cost is
  \[
\Prob(H : p) (1-r) c_2 +\Prob(H:p) r c_1.
  \]
  Thus, the seller's expected profit is at most
  \[
\Prob(L:p)  v_L + \Prob(H:p) ((1-r)(v_H-c_2) +r(v_H-c_1))
  \]
Because of the lemon's problem,
\[
(1-r)(v_H-c_2) +r(v_H-c_1)<0
\]
and
\[
\Prob(H:p)>0
\]
to satisfy
\[
\Prob(H :p) v_H + \Prob(L:p)v_L \ge p > v_L.
  \]
Integrating over $p$, we conclude that the ex ante profit is strictly less than
$\pi_L v_L$.
\end{proof}

  Lemma \ref{lm: strictly dominated} implies that again $\tau_{\Ahat}$, the principal will not use $\sigma$ which assigns a positive probability to $p$ so that both 1 and -1 are best responses.   Thus, if $\sigma$ is a best response to $\tau_{\Ahat}$, then $y^R(\sigma,p)$ is a strict best response $\forall p >v_L$.   We can apply Proposition \ref{cr: corollary}.

\begin{proposition}
  In the example of \citeasnoun{Rubinstein93},
if $\sigma$ is a best response to $\tau_{\Ahat}$, then 
  $(\sigma,\tau_{\Ahat})$ is a Nash equilibrium of the algorithm game, which emulates $(\sigma^R,y^R(\sigma^R,p))$.
\end{proposition}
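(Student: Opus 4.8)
The plan is to combine three ingredients already in hand: Lemma~\ref{lm: strictly dominated}, which rules out the seller strategies that would create indifference at prices above $v_L$; the convergence machinery behind Propositions~\ref{cr: corollary} and~\ref{pr: convergence}, which shows that $\tau_{\Ahat}$ learns the rational label wherever that label is a strict best reply; and the uniqueness of the rational equilibrium in the Rubinstein game recorded in Section~\ref{Monopoly Market}. The overall logic mirrors Proposition~\ref{pr: PAC learnable}: once $\tau_{\Ahat}$ reproduces (up to payoff-irrelevant ties) the rational reply along the support of whatever strategy the seller uses, the seller's maximization collapses to her problem against a fully rational buyer population, where the unique optimum is $\sigma^R$; and $\tau_{\Ahat}$ is then a best reply for the algorithmic (type~1) buyers, so the pair is a Nash equilibrium.

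First I would argue that any $\sigma$ that is a best response to $\tau_{\Ahat}$ places no weight on a price $p>v_L$ with $\Expect_\theta v(\theta,p,1)\ge 0$: this is exactly Lemma~\ref{lm: strictly dominated}, which shows such a $\sigma$ earns strictly less than $\sigma^R$. Consequently, along $\supp\sigma$, at every $p>v_L$ the buyer strictly prefers $a=-1$, and at every $p\le v_L$ the buyer weakly prefers $a=1$ (strictly when $p<v_L$, or when $\sigma$ pools both types at $p$). Thus $y^R(\sigma,\cdot)$ is a strict best reply everywhere on $\supp\sigma$ except possibly at the two knife-edge prices $p=v_L$ (used by $\theta=L$ alone) and $p=v_H$ (used by $\theta=H$ alone), where the buyer's expected surplus is exactly zero.

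Next I would invoke the convergence argument behind Proposition~\ref{pr: convergence}: single-threshold classifiers are weakly learnable (Proposition~\ref{prop: hyperplaneweaklearn}), and the ordinal estimator $\yhat_t$ built from~\eqref{eq: ordinal information} satisfies the uniform LDP, the relevant rate function being the binomial one, which is uniformly bounded below. Hence on the strict part of $\supp\sigma$, $\Prob\big(\tau_{\Ahat}(D_t)(p)=y^R(\sigma,p)\big)\to 1$ at an exponential rate. At the two knife-edge prices the cumulative surplus is identically zero, so the estimator breaks ties toward $a=1$; at $p=v_L$ this agrees with $y^R(\sigma^R,v_L)=1$, while at $p=v_H$ it differs from $y^R(\sigma^R,v_H)=-1$, but there the receiver's payoff is $0$ under either action. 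So $\tau_{\Ahat}$ attains the per-price payoff maximum for the receiver against \emph{every} $\sigma$, i.e., it is fully rational in the sense of Definition~\ref{def:CR} even though (as noted earlier) it is not PAC learnable here in the sense of Definition~\ref{def:PAC}. Feeding this back into the seller's objective, her long-run average payoff against $\tau_{\Ahat}$ equals the payoff she would earn against a fully rational buyer population up to the tie-broken acceptances at zero-surplus prices; since, as in the next paragraph, those only weakly lower her payoff and cannot be turned to her advantage, her optimum is still the one defining $\sigma^R$, and by the uniqueness in Section~\ref{Monopoly Market} we get $\sigma=\sigma^R$. Combining $\overline\sigma=\sigma^R$ with $\tau_{\Ahat}$ fully rational yields emulation, and the equilibrium property follows as in Proposition~\ref{pr: PAC learnable} (the seller optimizes by hypothesis, and $\tau_{\Ahat}$ cannot be improved upon by the algorithmic buyers).

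The main obstacle is the circularity: the learning guarantee needs the realized $\sigma$ to induce strict buyer incentives, but that is itself a statement about best replies, which depend on what $\tau_{\Ahat}$ does. Lemma~\ref{lm: strictly dominated} is exactly what breaks the loop, and crucially it uses only the coarse upper bound on seller revenue at weakly-positive-surplus prices together with the lemons inequalities~\eqref{eq: lemon}--\eqref{eq: severe lemon}, so it does not require knowing $\tau_{\Ahat}$'s behavior at such prices. The secondary but genuinely delicate point is the indifference at $p=v_H$: there $\tau_{\Ahat}$'s action does \emph{not} match $y^R(\sigma^R,\cdot)$, so the conclusion must be phrased as payoff-based emulation rather than PAC-learnability; relatedly, one must check that the seller cannot exploit this tie-break, i.e., that there is no price in $[v_L,v_H]$ at which the type~1 buyer is eventually induced to reject a loss-making $\theta=H$ trade, so that $\sigma^R$ remains her best response despite the residual loss $\pi_H r(v_H-c_1)$ relative to the rational benchmark.
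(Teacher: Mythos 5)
Your proof follows the same overall path as the paper's: invoke Lemma~\ref{lm: strictly dominated} to rule out seller strategies that place weight on non-negative-surplus prices above $v_L$, conclude that $y^R(\sigma,\cdot)$ is a strict best reply on the relevant part of the support of a best-responding $\sigma$, and then feed this into Proposition~\ref{cr: corollary} to obtain PAC learnability and the Nash-equilibrium/emulation conclusion. Where you genuinely add value is in tracking the knife-edge prices $p=v_L$ and $p=v_H$ explicitly, which the paper's terse ``We can apply Proposition~\ref{cr: corollary}'' quietly skips: at $p=v_L$ (used only by $\theta=L$ under $\sigma^R$) the receiver is indifferent, but you observe that the sign-based estimator's tie-break ($\ge 0 \mapsto a=1$) happens to agree with $y^R(\sigma^R,v_L)=1$, so the algorithm still learns the right action despite the failure of strictness; at $p=v_H$ the same tie-break disagrees with $y^R(\sigma^R,v_H)=-1$, and you correctly recast the conclusion as payoff-based (Definition~\ref{def:CR}) rather than PAC (Definition~\ref{def:PAC}). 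Neither observation appears in the paper's own argument.

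The issue you flag but leave open --- and which the paper also does not resolve --- is whether $\sigma^R$ is in fact a best reply against $\tau_{\Ahat}$ once the tie-break toward acceptance at $p=v_H$ is taken into account. If the algorithmic (type-1) buyers accept there, the seller's ex-ante payoff under $\sigma^R$ is $\pi_L(v_L-c_L)+\pi_H r(v_H-c_1)$, which is strictly below the rational benchmark since $v_H<c_1$; the seller would then strictly prefer any strategy that forces rejection when $\theta=H$, and the asserted conclusion $\sigma=\sigma^R$ would fail in the literal sense required by the paper's emulation definition (${\overline\sigma}=\sigma^R$). Relatedly, Lemma~\ref{lm: strictly dominated} as stated applies to $\sigma^R$ itself (it assigns $p=v_H>v_L$ with $\E_\theta v(\theta,v_H,1)=0$), so the strict inequality it asserts, specialized to $\sigma=\sigma^R$, is self-contradictory unless one additionally assumes $\tau_{\Ahat}$ rejects at $p=v_H$ on the path of $\sigma^R$. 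Your explicit accounting of the tie-break makes this tension visible; the paper's own argument does not, so this is a shared loose end rather than a defect specific to your proposal, but it does mean that neither you nor the paper has fully closed the claim as stated, and some clarification of the tie-breaking rule or the seller's feasible price set is needed.
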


\section{Conclusion} 

This paper has applied the framework of PAC learnability to describe the performance of algorithms in a strategic setting. We show that as long as some initial set of classifiers satisfy weak learnability, an algorithm can be specified which ensures the receiver takes an optimal response to the sender's action. 
As noted by \citeasnoun{Rubinstein93}, this need not be the case when the receiver's behavior follows from the \textit{optimally} chosen single-threshold classifier given the sender's strategy. However, being able to combine classifiers is enough to overcome this limitation, even if it only remains possible to find the ``best'' classifier from within this limited class.  

Our general analysis has focused on settings featuring strategic inference---based on the observed action of the strategic sender, a rational receiver would update beliefs about an underlying state (thus influencing the optimal response). This adds a complicating feature that the ex-post optimal action is only observed with noise. Yet because this noise diminishes with the size of the sample, we are still able to show this presents no added difficulty (thanks to results from large deviations theory). We briefly mention that if the amount of label noise were bounded away from zero, then our approach need not be successful (a well-known issue with Boosting algorithms). While a technical contribution, it is one that is necessary due to the uncertainty inherent in our applications of interest.

We have sought to articulate the following tradeoff in the design of statistical algorithms to mimic rationality: on the one hand, simply fitting a single-threshold classifier to data will fall short of rational play and be exploited. On the other hand, it may not be clear why this is the end of the story. By adding the ability to fit classifiers repeatedly and combining them in particular ways, we show how the rational benchmark can be restored. Here, we have taken as a black box the ability to fit these classifiers. But given this, our algorithm specifies exactly how to put these fitted classifiers together in order to construct one which can mimic rationality arbitrarily well. 

We have focused on a simple yet general setting where the comparison to the rational benchmark is most transparent. Still, we believe that many concerns highlighted by the machine learning literature regarding the design of algorithms can speak to issues of interest to economic theorists. Given how productive the machine learning literature has been in terms of designing algorithms for the purposes of classification, we hope that our work will inspire further analysis of how these algorithms behave in strategic settings.

\appendix
\footnotesize

\newpage

\section{Weak Learnability Proofs} \label{Proof of Proposition prop: hyperplaneweaklearn}

The proof of \ref{prop: hyperplaneweaklearn} uses the following Lemma: 

\begin{lemma}  \label{lemm: duality}
Let $\mathcal{H}$ be an arbitrary hypothesis class with the property that for every $h \in \mathcal{H}$ and every permutation $\pi : A \rightarrow A$, the composition $\pi \circ h$ is contained in $\mathcal{H}$. Then this hypothesis class can do at least as well as a uniform random guesser. 
\end{lemma}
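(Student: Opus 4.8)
The plan is to establish the equivalent quantitative statement: for every distribution $d$ over $P(\sigma)$ and every labeling $y:P(\sigma)\to A$, there is an $h\in\mathcal{H}$ with $\sum_{p}d(p)\mathbf{1}[h(p)=y(p)]\ge 1/\abs{A}$. This is exactly what ``doing at least as well as a uniform random guesser'' should mean: a guesser that draws its label uniformly from $A$ independently of $p$ is correct at each $p$ with probability $1/\abs{A}$ and hence has expected accuracy $\sum_{p}d(p)/\abs{A}=1/\abs{A}$ against every $d$. Since an optimal weak hypothesis $\hubar$ is by definition the accuracy-maximizing element of $\mathcal{H}$, it suffices to exhibit a single hypothesis meeting the bound.

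\textbf{Key steps.} First I would fix an arbitrary $h_{0}\in\mathcal{H}$ and invoke permutation-closure: $\pi\circ h_{0}\in\mathcal{H}$ for every bijection $\pi:A\to A$. Averaging accuracies over these $\abs{A}!$ classifiers,
\[
\frac{1}{\abs{A}!}\sum_{\pi}\;\sum_{p}d(p)\,\mathbf{1}[\pi(h_{0}(p))=y(p)]
=\sum_{p}d(p)\left(\frac{1}{\abs{A}!}\sum_{\pi}\mathbf{1}[\pi(h_{0}(p))=y(p)]\right).
\]
For each fixed $p$, the number of bijections $\pi$ sending the particular label $h_{0}(p)$ to the particular label $y(p)$ is $(\abs{A}-1)!$, since the remaining $\abs{A}-1$ labels may be permuted freely; hence the inner average equals $(\abs{A}-1)!/\abs{A}!=1/\abs{A}$ for every $p$, and the whole expression collapses to $\sum_{p}d(p)/\abs{A}=1/\abs{A}$. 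Because the average over $\pi$ of $\sum_{p}d(p)\mathbf{1}[\pi(h_{0}(p))=y(p)]$ equals $1/\abs{A}$, some permutation $\pi^{*}$ attains at least $1/\abs{A}$, and $\pi^{*}\circ h_{0}\in\mathcal{H}$ is the desired hypothesis. Equivalently — and this is the ``duality'' reading — view the learner (choosing a distribution over $\mathcal{H}$) and the adversary (choosing $d$) as playing a zero-sum game with payoff equal to accuracy: the uniform mixture over $\{\pi\circ h_{0}:\pi\}$ is correct with probability exactly $1/\abs{A}$ at \emph{every} $p$, so it guarantees the learner $1/\abs{A}$ against all $d$, and the minimax theorem then forces the value to be at least $1/\abs{A}$.

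\textbf{Main obstacle.} I do not expect a genuine difficulty here; the one point requiring care is that the definition of an optimal weak hypothesis demands a single classifier, not a randomization, so the argument must conclude with the counting/pigeonhole step — which produces a pure $\pi^{*}\circ h_{0}\in\mathcal{H}$ — rather than resting on the minimax value alone. No finiteness of $\mathcal{H}$ is used, since an explicit witness is produced. When $\abs{A}=2$, the bound $\sum_{p}d(p)\mathbf{1}[h(p)=y(p)]\ge 1/2$ is precisely the margin statement $\sum_{p}d(p)\big(\mathbf{1}[y(p)\ne h(p)]-\mathbf{1}[y(p)=h(p)]\big)\le 0$ appearing (with $\rho=0$) in Definition \ref{def: weaklearn}. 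Upgrading this baseline to a strict improvement $\rho>0$, as needed for the single-threshold classifiers in Proposition \ref{prop: hyperplaneweaklearn}, is where the additional geometric structure of half-spaces does the real work; permutation-closure alone delivers only the ``at least a random guesser'' half, which is all Lemma \ref{lemm: duality} asserts.
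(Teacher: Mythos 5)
Your proof is correct, and it takes a genuinely different route from the paper's. The paper sets up a matrix whose columns are indexed by the $|A|!$ permutations $\pi$ and whose rows are indexed by prices (plus an added row encoding the uniform random guesser's cost $-\overline{c}_j/k!$), checks that the uniform vector $(\rho/k!,\ldots,\rho/k!)$ maps to zero, and then invokes a theorem of the alternative to conclude that no distribution over prices can make every permuted classifier $\pi\circ h$ strictly worse than the random guesser; taking $\rho\to\infty$ extracts the claimed bound. You instead use an averaging/pigeonhole argument: fix any $h_0\in\mathcal{H}$, average the accuracy of $\pi\circ h_0$ over all $|A|!$ permutations, observe that for each fixed $p$ exactly $(|A|-1)!$ permutations send $h_0(p)$ to $y(p)$ so the inner average equals $1/|A|$, deduce that the overall average accuracy equals $1/|A|$, and conclude by pigeonhole that some single $\pi^*\circ h_0\in\mathcal{H}$ attains at least the average. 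Both proofs are valid and prove the same quantitative statement. Your route is more elementary, produces an explicit witness directly without passing through LP duality, and extends verbatim to the paper's asymmetric cost formulation $c_{j,y}$ since the same counting gives $\frac{1}{|A|!}\sum_\pi c_{j,\pi(h_0(p_j))}=\frac{1}{|A|}\sum_{a\in A}c_{j,a}$, which is the uniform guesser's expected cost at $p_j$. The paper's LP-duality framing is heavier machinery for this lemma but is stylistically aligned with the margin/minimax perspective used elsewhere in the boosting analysis; for the lemma itself your argument is cleaner. One caution if you ever cite the minimax reading rather than the pigeonhole step: as you correctly note, the lemma requires a single classifier in $\mathcal{H}$, so the pigeonhole extraction (not just the value of the game) is the part that must be kept.
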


\begin{proof} 
Let $\Pi$ be the set of all possible permutations on $A$, noting that $\abs{\Pi}= k!$. Fix an arbitrary classifier $h \in \mathcal{H}$, and define $h^{\pi} = \pi \circ h$. Let $c_{j,y}$ be the cost of assigning label $y$ to price $p_{j}$. Define 

\begin{equation*} 
\sum_{\pi \in \Pi} c_{j,h^{\pi}(p_{j})} = \overline{c}_{j}.
\end{equation*}

\noindent In particular, note that this is invariant to the true label of $j$. As a result, the random guesser's expected payoff on observation $j$ is is $\overline{c}_{j}/k!$. To see this, note that $h(p_{j})$ gives some fixed guess regarding the label of price $p_{j}$. Then randomizing over permutations is equivalent to randomizing over labels, as there are an equal number of permutations which flip the label according to $h(p_{j})$ and every other label. 

We therefore obtain the following matrix equation, for an arbitrary $\rho \in (0, \infty)$, where the number of columns is $k!$ and the number of rows is the number of possible prices. 

\begin{equation*} 
\left(
     \begin{array}{c|cc|c}
       &  &  &  \\
   c_{j,h(p)}  & \cdots & \cdots & c_{j,h^{\pi}(p)} \\
   - \overline{c}_{j}/k!  &  &  &  - \overline{c}_{j}/k! \\ &  &  &
     \end{array} \right)
 \cdot \begin{pmatrix}  \frac{\rho}{k!} \\ \vdots \\ \frac{\rho}{k!}  \end{pmatrix} = \mathbf{0}
\end{equation*}

Also note that: 

\begin{equation*} 
(1/\rho, \cdots, 1/\rho) \cdot \begin{pmatrix}  \frac{\rho}{k!} \\ \vdots \\ \frac{\rho}{k!}  \end{pmatrix} = 1
\end{equation*}

So as long as $\rho > 0$, by the theorem of the alternative, we therefore cannot have that a vector $\mathbf{x}$ exists with: 

\begin{equation*} 
    \left(
     \begin{array}{cccc}
  & c_{j,h(p)}  - \overline{c}_{j} / k ! & \\ \hline 
 \vdots  &  \vdots  &  \\  \vdots &     \vdots &   \\ \hline    &  c_{j,h^{\pi}(p)} - \overline{c}_{j} / k ! & \end{array} \right)
     \cdot \mathbf{x} \geq \begin{pmatrix} \frac{1}{\rho} \\ \vdots \\ \frac{1}{\rho} \end{pmatrix}.
\end{equation*}

Let $D(p)$ be an arbitrary distribution. Since $\sum_{p \in P} D(p)=1$, this implies we can find some $\pi$ such that: 

\begin{equation*} 
\left(\sum_{p_{j} \in P} D(p_{j}) (c_{j, h^{\pi}(p_{j})} - \frac{ \overline{c}_{j}}{k!}) \right) < \frac{1}{\rho}.
\end{equation*}

Taking $\rho \rightarrow \infty$ and rearranging gives: 

\begin{equation*} 
\left(\E_{p \sim D}[ c_{j, h^{\pi}(p_{j})} ] \right) \leq \E_{j \sim D} \left[\frac{ \overline{c}_{j}}{k!} \right]
\end{equation*}

Recalling again that the right hand side of this inequality is the payoff of the random guesser, we have shown that for every possible distribution over prices, we can find some permutation which delivers a cost bounded above by the random guesser. This proves the Lemma. \end{proof} 

\begin{proof}[Proof of Proposition \ref{prop: hyperplaneweaklearn}] 
Let $\mathcal{H}$ be the set of hyperplane classifiers. We prove this by contradiction. If there were no universal lower bound on the error, then we would have, for all $\rho$, a distribution $D_{\rho}$ and cost $c_{j,y}^{\rho}$ (without loss normalized to be on the unit sphere themselves) with the property that: 

\begin{equation*} 
\max_{h \in \mathcal{H}}  \sum_{p \in P} D_{\rho}(p) c_{j,h(p)}^{\rho} < U_{c}^{\rho}, 
\end{equation*}

\noindent where $U_{c}^{\rho}$ is the payoff of the uniform random guesser who is correct with added probability $\rho$. Taking $\rho \rightarrow 0$ and passing to a subsequence if necessary, compactness of the unit sphere implies that we can find a distribution $D^{*}$ and cost function $c^{*}$ such that: 

\begin{equation*} 
\max_{h \in \mathcal{H}}  \sum_{p \in P} D^{*}(p) c_{j,h(p)}^{*} = U_{c}^{0},
\end{equation*}

\noindent where we note by Lemma \ref{lemm: duality} that at least this bound can be obtained by permutation the labels if necessary. We will arrive at a contradiction by exhibiting a single-hyerplane classifier that achieves a strictly better accuracy, given $D^{*}$. Note that $\mathcal{H}$ contains the set of ``trivial'' classifiers, which give all menus the same label. Also note that the only non-trivial case to consider is when there are at least two prices in the support of $D^{*}$; if there were only one price, then simply choosing the prediction corresponding to the label on that price would yield a perfect fit. Since, by assumption, no classifier does better than random guessing, it must be the case in particular that each trivial classifier cannot exceed the random-guess bound. On the other hand, by our previous result, we know there \emph{does} exist a trivial classifier which achieves at least this bound, for \emph{any} $D$ supported on $P$.

Let $P=\{p_{1}, \ldots, p_{k}\}$ be the set of prices supporting $D^{*}$, and let $\tilde{p} \in P$ be a price in that is also an extreme point of the convex hull of $P$. Without loss of generality, assume that $\tilde{p}$ is nontrivial, in the sense that it does not give the same cost to all labels. Note that indeed, this is without loss, since for any such price, the choice of classification is irrelevant.\footnote{If all prices are trivial, then we will achieve a contradiction, because that implies that the classifier does do at least as well as the edge-over-random guesser, since all classifiers achieve the same payoff.} Note that $\tilde{p}$ is not in the convex hull of $P \backslash\{ \tilde{p} \}$. Therefore, by the separating hyperplane theorem, we can find an $h \in \mathcal{H}$ which (strictly) separates $\tilde{p}$ from $P \backslash \{\tilde{p}\}$. Denote such a hyperplane by $h^{*}$, and note that the set of hyperplane classifiers contains classifiers which assign \emph{any} two labels (possibly the same label) to prices depending on which side of $h^{*}$ they lie on. 

Also note that, again by our previous result, a trivial classifier supported on $P \backslash \{ \tilde{p}\}$ can achieve the random guess guaranatee if $p$ is distributed according to the conditional distribution on this set. In other words, our prior lemma implies that there exists $y^{*} \in A$ such that: 

\begin{equation*} 
\sum_{p_{j} \in P \backslash \tilde{p}} \frac{D^{*}(p_{j})}{\sum_{q \in P \backslash \tilde{P}} D^{*}(q)} c_{j,y^{*}}^{*} = U_{c^{*}}^{0}.
\end{equation*}

On the other hand, a classifier which separates $p_{\tilde{j}}$ from the other prices can fit $p_{\tilde{j}}$ perfectly. Thus we must have

\begin{equation*} 
c_{\tilde{j}, y_{\tilde{j}}} < \E_{\hat{y} \sim \text{Unif}}[ c_{\tilde{j}, \hat{y}}].
\end{equation*}

So consider the hyperplane classifier which predicts $\tilde{y}$ for $\tilde{p}$, and $y^{*}$ for $p \in P \backslash \{\tilde{p}\}$, i.e., depending on which side of $h^{*}$ they are on (acknowledging that this may be a trivial classifier). Denote the resulting classifier by $h$. For this single-hyperplane classifier, we have

\begin{equation*} 
 \sum_{p_{j} \in P} D^{*}(p_{j}) c_{j,h(p_{j})}  = D^{*}(p_{\tilde{j}}) c_{\tilde{j}, y_{\tilde{j}}} +\left( \sum_{q \in P \backslash \{p_{\tilde{j}}\}} D^{*}(q)\right) \sum_{p_{k} \in P \backslash \{p_{\tilde{j}}\}} \frac{D^{*}(p_{k})}{\sum_{q \in P \backslash \{p_{\tilde{j}} \}} D^{*}(q)} c_{k,y^{*}} > U_{c^{*}}^{0},
\end{equation*}

\noindent where the inequality holds since the single-threshold classifier does strictly better on some non-trivial price, and as well on all other prices. This completes the proof. 
 \end{proof}

\section{Specifying the Algorithm Parameters and the Proof of Proposition \ref{pr: convergence}.}
\label{Proof of Proposition pr: convergence}

\subsection{Convergence of $\tau_A$}
\subsubsection{The $\abs{A}=2$ case}
We replicate the proof in \citeasnoun{SchapireandFreund12} for reference.
Define
\[
F_t(p)=\sum_{k=1}^t\alpha_kh_k(p).
\]
Following the same recursive process described in \citeasnoun{SchapireandFreund12}, we have
\begin{equation}
d_{t+1}(p) =\frac{
d_1(p) \exp \left(-y(\sigma,p)\sum_{k=1}^t \alpha_k h_k(p)\right)
}{
\prod_{k=1}^tZ_k
}
=\frac{
d_1(p)\exp (-y(\sigma,p)F_t(p))
}{
\prod_{k=1}^tZ_k }.
\label{eq: Dt expression}
\end{equation}

Following \citeasnoun{SchapireandFreund12}, we can show that
\[
\Prob\left(H_t(p)\ne y(\sigma,p)\right)
=\Expect\sum_{p} d_1(p){\mathbf 1}(H_t(p)\ne y(\sigma,p))
\le\Expect \sum_{p} d_1(p)\exp (-y(\sigma,p)F_t(p)),
\]
and
\[
\Prob ( H_t(p)\ne y(\sigma,p))
=\Expect\prod_{k=1}^t Z_k.
\]
Note
\[
  Z_k =\sum_p d_k(p) \exp\left(-y(\sigma,p) \alpha_k h_k(p)\right).
\]
The rest of the proof follows from \citeasnoun{SchapireandFreund12}, which we copy here for later reference.
\begin{eqnarray*}
  Z_t & = & \sum_p d_t(p) \exp\left(-y(\sigma,p) \alpha_t h_t(p)\right) \\
      & = & \sum_{y(\sigma,p) h_t(p)=1} d_t(p) \exp\left(-\alpha_t \right) +
            \sum_{y(\sigma,p) h_t(p)=-1} d_t(p) \exp\left(-\alpha_t \right) \\
      & = & e^{-\alpha_t} (1-\epsilon_t) +e^{\alpha_t} \epsilon_t \\
      &= &  e^{-\alpha_t} \left( \frac{1}{2}+\gamma_t\right) +e^{\alpha_t}
           \left(\frac{1}{2}-\gamma_t  \right) \\
      & = & \sqrt{1-4\gamma^2_t }
\end{eqnarray*}
where
\[
\gamma_t =\frac{1}{2}-\epsilon_t.
\]
By weak learnability, we know that $\gamma_t$ is uniformly bounded away from 0:  $\exists\gamma>0$ such that
\[
  \gamma_t\ge\gamma \qquad\forall t\ge 1.
\]
Recall that the maximum number of the elements in the support of $\sigma$ is $N$.  Thus,
\[
  d_{t+1}(p)=d_1(p)\prod_{k=1}^t \sqrt{1-4\gamma^2_t } \le
  \frac{1}{N} \left( 1-4 \gamma^2\right)^{\frac{t}{2}}
  \le \frac{1}{N} e^{-2\gamma^2 t}
\]
where the right hand side converges to 0 at the exponential rate uniformly over $p$.

\subsection{The $\abs{A} > 2$ case} 
The specification of the algorithm can be found in \citeasnoun{MukherjeeSchapire2013}. The proof provided below fills in some details to show that convergence holds in a self-contained way.  

First,  initialize $F_{y}^{0}(x_{i})=0$. 

\begin{itemize} 
\item From previous stage, take $F_{y}^{t}$. 

\item At stage $t$, find the $h \in \mathcal{H}$ solving: 

\begin{equation*} 
\min_{h \in \mathcal{H}} \frac{1}{m} \sum_{i=1}^{m} \mathbf{1}[h_{t}(x_{i})=y_{i}]\left((e^{-\eta}-1) \sum_{\tilde{y} \neq y_{i}} e^{\eta(F_{\tilde{y}}^{t-1}-F_{y_{i}}^{t-1})}\right)+ \mathbf{1}[h_{t}(x_{i}) \neq y_{i}](e^{\eta}-1)e^{\eta (F_{h_{t}(x_{i})}^{t-1}-F_{y}^{t-1}(x_{i}))}. 
\end{equation*}

\item Define $F_{y}^{t}(x_{i}) = \sum_{s=1}^{t} \mathbf{1}[h_{t}(x_{i})=y]$. 

\end{itemize}

The final prediction is $H_{t}(x_{i}) = \argmax_{\tilde{y}} \sum_{t=1}^{T} \mathbf{1}[h_{t}(x_{i}) = \tilde{y}]. $

The weak learnability condition says that the hypothesis class can outperform a random guesser that does better than some $\gamma$, where we allow for a potentially asymmetric cost of making different errors. 

We now show convergence to the rational rule:

\textbf{Step 1: Bounding The Mistakes}: This step is as previous. We have

\begin{equation*} 
\sum_{i=1}^{m} \mathbf{1}[H_{t}(x_{i}) \neq y_{i}] \leq \sum_{i=1}^{m} \sum_{\tilde{y} \neq y_{i}} e^{\eta( F_{\tilde{y}}^{t}(x_{i})- F_{y_{i}}^{t}(x_{i}))}. 
\end{equation*}

Indeed, the exponential is positive, so this inequality holds when $y_{i}$ is labelled correctly, and if the label is incorrect, then that means that some $\tilde{y}_{i}$ satisfies $F_{\tilde{y}_{i}}^{t}(x_{i}) > F_{y_{i}}^{t}(x_{i})$. Since all exponential terms are positive, and furthermore the exponent is positive if $x_{i}$ is labelled incorrectly, meaning the right hand side is greater than 1 if mislabeled.

\textbf{Step 2: Recursive Formulation of the Loss} We now show that the right hand side goes to 0 at an exponential rate. We define the loss function to be: 

\begin{equation*} 
L_{t}(x_{i}) = \sum_{\tilde{y} \neq y_{i}} e^{\eta(F_{\tilde{y}}^{t}(x_{i}) - F_{y_{i}}^{t}(x_{i}))}, \tilde{L_{t}} = \frac{1}{m} \sum_{i=1}^{m} L_{t}(x_{i}). 
\end{equation*}

We first express $\tilde{L}_{t+1}$ as a function of $\tilde{L}_{t}$. Note that $F_{\tilde{y}}^{t+1}(x_{i}) = F_{\tilde{y}}^{t}(x_{i})$ for all $\tilde{y} \neq h_{t}(x_{i})$, and $F_{\tilde{y}}^{t+1}(x_{i}) = F_{\tilde{y}}^{t}(x_{i}) + 1$ for $\tilde{y} = h_{t}(x_{i})$. The loss from a given $x_{i}$ changes depending on whether or not it is correctly classified. For any observation that is classified correctly at the $t+1$th stage, we multiply that observation's loss by a factor of $e^{-\eta}$. On the other hand, for any observation that is classified incorrectly as $\tilde{y}$, we \emph{add} the following: 

\begin{equation*} 
e^{\eta(F_{\tilde{y}}^{t}(x_{i}) - F_{y_{i}}^{t}(x_{i}))}(e^{\eta}-1).
\end{equation*} 

So: 

\begin{equation*} 
\tilde{L}_{t+1} = \frac{1}{m} \left( \sum_{i: h_{t+1}(x_{i})= y_{i}} e^{-\eta} L_{t}(x_{i})+ \sum_{i : h_{t+1}(x_{i}) \neq y_{i}} \left(L_{t}(x_{i}) + e^{\eta(F_{h_{t+1}(x_{i})}^{t}(x_{i}) - F_{y_{i}}^{t}(x_{i}))}(e^{\eta}-1) \right) \right).
\end{equation*}

Note that if we subtract $\tilde{L}_{t}$ from both sides, and substitute in for $L_{t}(x_{i})$ above, we obtain: 

\begin{equation*} 
\tilde{L}_{t+1} -\tilde{L}_{t}= \frac{1}{m} \left( \sum_{i: h_{t+1}(x_{i})= y_{i}} (e^{-\eta} -1) \sum_{\tilde{y} \neq y_{i}} e^{\eta(F_{\tilde{y}}^{t}(x_{i}) - F_{y_{i}}^{t}(x_{i}))} + \sum_{i : h_{t+1}(x_{i}) \neq y_{i}} e^{\eta(F_{h_{t+1}(x_{i})}^{t}(x_{i}) - F_{y_{i}}^{t}(x_{i}))}(e^{\eta}-1) \right).
\end{equation*}

\textbf{Step 3: Weak Learnability} By the above, $h_{t+1}$ is chosen to solve: 

\begin{equation*} 
\min_{h \in \mathcal{H}} \frac{1}{m} \sum_{i=1}^{m} \mathbf{1}[h (x_{i})=y_{i}]\left((e^{-\eta}-1) \sum_{\tilde{y} \neq y_{i}} e^{\eta(F_{\tilde{y}}^{t}(x_{i})-F_{y_{i}}^{t}(x_{i}))}\right)+ \mathbf{1}[h(x_{i}) \neq y_{i}](e^{\eta}-1)e^{\eta (F_{h(x_{i})}^{t}(x_{i})-F_{y}^{t}(x_{i}))}. 
\end{equation*} 

In fact, using the previous step, we see that this can equivalently be expressed as $\tilde{L}_{t+1} - \tilde{L}_{t}$. On the other hand, someone who is random guessing, but is correct with extra probability $\gamma$, will be correct with probability $\frac{1-\gamma}{k} + \gamma$, and guess an incorrect label $\tilde{y}$ with probability $\frac{1- \gamma}{k}$.  Furthermore, the hypothesis class ensures a weakly lower error (as measured by this cost) than the random guessing. Hence this expression is bounded above by: 

\begin{equation*} 
\frac{1}{m}\sum_{i=1}^{m} \left(( \frac{1- \gamma}{k} + \gamma)  (e^{-\eta} -1)L_{t}(x_{i}) + \frac{1- \gamma}{k}\sum_{\tilde{y} \neq y_{i}}(e^{\eta}-1) e^{\eta (F_{\tilde{y}}^{t}(x_{i}) - F_{y}(x_{i}))} \right)
\end{equation*} 

Again substituting in for $L_{t}(x_{i})$ and rearranging, we obtain: 

\begin{equation*} 
\left(( \frac{1- \gamma}{k} + \gamma)(e^{-\eta} - 1) + \frac{1- \gamma}{k} (e^{\eta} - 1) \right)  \tilde{L}_{t}. 
\end{equation*}

Putting this together, we have this is an upper bound of $\tilde{L}_{t+1} - \tilde{L}_{t}$, and therefore: 

\begin{equation*} 
\tilde{L}_{t+1} \leq \left(1 + \left(( \frac{1- \gamma}{k} + \gamma)(e^{-\eta} - 1) + \frac{1- \gamma}{k} (e^{\eta} - 1) \right) \right)  \tilde{L}_{t}. 
\end{equation*}

\textbf{Step 4: Specifying $\eta$} We are done if we can ensure $\tilde{L}_{t}\rightarrow 0$ as $t \rightarrow \infty$, since Step 1 shows that this implies that the number of misclassifications approaches 0 as well. To complete the argument, we must specify an $\eta$  which delivers the exponential convergence. However, first note that if $\eta=0$, the coefficient on $\tilde{L}_{t}$ in the previous inequality is 1, and the derivative with respect to $\eta$ is $- \gamma$ at 0, so that this expression is less than 1, for some $\eta>0$. Setting $\eta=\log(1+\gamma)$, the above coefficient on $\tilde{L}_{t}$ reduces to: 

\begin{equation*} 
\overbrace{1+ \left( ( \frac{1- \gamma}{k} + \gamma)(\frac{1}{1+ \gamma}-1) + \frac{1- \gamma}{k}\gamma \right)}^{z_{k}(\gamma)}.
\end{equation*}  

Note that $z_{k}(\gamma)$ is bounded above by $\tilde{z}(\gamma)=e^{ - \gamma^{2}/2}$. Indeed, this expression is decreasing in $k$, with $z_{k}(0)=1=\tilde{z}(0)$, and $z_{2}(\gamma) = 1 - \frac{\gamma^{2}}{2} < e^{-\gamma^{2}/2}=\tilde{z}(\gamma)$. Since $\tilde{L}_{0}= (k-1)$, we therefore have that: 

\begin{equation*} 
\tilde{L}_{t} \leq (k-1)e^{- \gamma t^{2}/2},
\end{equation*} 

as desired.

\subsection{Convergence of $\tau_{\Ahat}$ }

Under the assumption that $y^R(\sigma,p)$ is a strict best response,
\[
\lim_{t\rightarrow\infty}\yhat_t(p)= y^R(\sigma,p) 
\]
almost surely.   Since $\yhat_t(p)$ satisfies the uniform LDP,
$\forall\epsilon>0$, $\exists\rho(\epsilon,\sigma)>0$ and $T(\epsilon,\sigma)$ such that
\[
  \Prob\left(
\exists t\ge T(\epsilon,\sigma), \yhat_t(p)\ne y^R(\sigma,p) 
    \right) \le e^{-t \rho(\epsilon,\sigma)}.
\]
Since the support of $\sigma$ contains a finite number of $p$, the empirical 
the multinomial probability distribution over $\theta$.

Let $\pihat_t(\theta : p)$ be the empirical probability distribution over $\Theta$ following $t$ rounds of observations.
By the law of large numbers, $\pihat_t(\theta : p)\rightarrow \pi(\theta :p)$ computed via Bayes rule from the prior distribution over $\theta$ and $\sigma$.
Write $\Theta=(\theta_1,\ldots,\theta_{\abs{\Theta}})$.    
Given ${\bf\epsilon}=(\epsilon,\ldots,\epsilon)\in {\mathbb R}^{\abs{\Theta}}$, the rate function of the multinomial distribution is
\[
\sum_{i=1}^{\abs{\Theta}}\epsilon\log \frac{\epsilon}{p(\theta)}
\]
where $p(\theta)$ is the probability that $\theta$ is realized.   Since $\sum_{\theta}p(\theta)=1$, 
\[
  \sum_{i=1}^{\abs{\Theta}}\epsilon\log \frac{\epsilon}{p(\theta)} \ge
  \prod_{i=1}^{\abs{\Theta}}\epsilon\log\frac{\epsilon}{1/\abs{\Theta}}
  =  \prod_{i=1}^{\abs{\Theta}}\epsilon\log \epsilon\abs{\Theta} >0.
\]
Note that the right hand side is independent of $\sigma$, which is the rate function of the uniform distribution over $\Theta$.
Thus, we can choose $\rho(\epsilon)\le \rho(\epsilon,\sigma)$ uniformly over $\sigma$, which is strictly increasing with respect to $\epsilon>0$.    We choose $T(\epsilon)$ independently of $\sigma$ as well.

Define an event
\[
  {\mathcal L}=\left\{
\yhat_t(p) =y^R(\sigma,p) \qquad\forall t\ge T(\epsilon)
    \right\}
\]
We know that
\[
  \Prob({\mathcal L})\ge 1-e^{-t\rho(\epsilon)}.
\]
Fix $t > T(\epsilon)$.   We have
\begin{eqnarray*}
&&\Prob\left( \tau_{\Ahat}(D_t)(p)\ne y^R(\sigma,p)\right)\\
  & = & \Prob\left( \tau_{\Ahat}(D_t)(p)\ne y^R(\sigma,p) : {\mathcal L}\right) \Prob({\mathcal L}) +\Prob\left( \tau_{\Ahat}(D_t)(p)\ne y^R(\sigma,p) : {\mathcal L}^c\right) \Prob({\mathcal L}^c)  \\
  & \le &
          \Prob\left( \tau_{\Ahat}(D_t)(p)\ne y^R(\sigma,p) : {\mathcal L}\right)+ \Prob({\mathcal L}^c)  \\
  & \le &
 \Prob\left( \tau_{\Ahat}(D_t)(p)\ne y^R(\sigma,p) : {\mathcal L}\right)+ e^{-t\rho(\epsilon)}.
\end{eqnarray*}
Following the same logic as in the proof of Proposition \ref{pr: convergence}, we can show that $\exists \gamma(G)>0$ such that
\begin{equation}
\Zhat_t \le 1-\gamma(G) \qquad \forall t\ge 1  \label{eq: second bound}
\end{equation}
under $\tau_{\Ahat}$.

Recall that
\[
F_a(p)= \sum_{s=1}^t\alpha_s{\mathbf 1}(h_s(p)=a).
\]
Similarly, we define
\[
\Fhat_a(p)= \sum_{s=1}^t\alphahat_s{\mathbf 1}(h_s(p)=a).
\]
Following the same logic as in the proof of Proposition \ref{pr: convergence}, we know that if $\tau_{\Ahat}(D_t)(p)\ne y^R(p)$,
\[
\Fhat_{y^R(\sigma,p)}(p) +\sum_{a\ne y^R(\sigma,p)}\Fhat_a(p) >0.
\]
Thus,
\begin{eqnarray*}
{\mathbf 1}( \tau_{\Ahat}(D_t)(p)\ne y^R(\sigma,p)) 
  & \le & {\mathbf 1}
    \left( \Fhat_{y^R(\sigma,p)}(p) +\sum_{a\ne y^R(\sigma,p)}\Fhat_a(p)   \right) \\
  & \le & \exp\left(\Fhat_{y^R(\sigma,p)}(p) +\sum_{a\ne y^R(\sigma,p)}\Fhat_a(p)   \right).
\end{eqnarray*}
Conditioned on event ${\mathcal L}$,
\[
\yhat_t(p)=y^R(\sigma,p) \qquad\forall t\ge T(\epsilon).
\]
We can write for $t\ge T(\epsilon)$,
\begin{eqnarray*}
  d_{t+1}(p) & = & \frac{\dhat_t(p) \exp(\alpha_t( {\mathbf 1}(h_t(p)\ne \yhat_t(p))-  {\mathbf 1}(h_t(p)= \yhat_t(p))))
                   }{ \Zhat_t   }   \\
  & = & \frac{\dhat_t(p) \exp(\alpha_t( {\mathbf 1}(h_t(p)\ne y^R(\sigma,p))-  {\mathbf 1}(h_t(p)= y^R(\sigma,p))))
                   }{ \Zhat_t   }   \\
  & = & \frac{d_{T(\epsilon)}(p) \exp(\sum_{s=T(\epsilon)}^t\alpha_s( {\mathbf 1}(h_s(p)\ne y^R(\sigma,p))-  {\mathbf 1}(h_s(p)= y^R(\sigma,p))))
                   }{\prod_{s=T(\epsilon)}^t \Zhat_t   }.
\end{eqnarray*}
Thus,
\begin{eqnarray*}
  &&\prod_{s=T(\epsilon)}^t \Zhat_t \\
  &=&\sum_p d_{T(\epsilon)}(p)
      \exp\left[\sum_{s=T(\epsilon)}^t\alpha_s( {\mathbf 1}(h_s(p)\ne y^R(\sigma,p))-  {\mathbf 1}(h_s(p)= y^R(\sigma,p)))\right] \\
    &\ge&\left(\min_{p\in{\mathcal P}(\sigma)}d_{T(\epsilon)}(p)\right) \sum_p \exp\left[\sum_{s=T(\epsilon)}^t\alpha_s( {\mathbf 1}(h_s(p)\ne y^R(\sigma,p))-  {\mathbf 1}(h_s(p)= y^R(\sigma,p)))\right].
\end{eqnarray*}
Since $d_1(p)$ is the uniform distribution over ${\mathcal P}(\sigma)$,
\[
\min_{p\in{\mathcal P}(\sigma)}d_{T(\epsilon)}(p) >0.
\]
We can write 
\begin{eqnarray*}
  && \prod_{s=1}^t \Zhat_t=\prod_{s=T(\epsilon)}^t\Zhat_t \prod_{s=1}^{T(\epsilon)-1}\Zhat_t  \\
  &\ge &   \left(\min_{p\in{\mathcal P}(\sigma)}d_{T(\epsilon)}(p)\right) \sum_{p}\exp(\sum_{s=T(\epsilon)}^t\alphahat_s( {\mathbf 1}(h_s(p)\ne y^R(\sigma,p))-  {\mathbf 1}(h_s(p)= y^R(\sigma,p)))) \prod_{s=1}^{T(\epsilon)-1}\Zhat_t \\
  &= & \frac{  \left(\min_{p\in{\mathcal P}(\sigma)}d_{T(\epsilon)}(p)\right)  \prod_{s=1}^{T(\epsilon)-1}\Zhat_t}{ \sum_{p}\exp\left[\sum_{s=1}^{T(\epsilon)-1}\alphahat_s( {\mathbf 1}(h_s(p)\ne y^R(\sigma,p))-  {\mathbf 1}(h_s(p)= y^R(\sigma,p)))\right]} \\
  && \ \ \ \ \ \ \ \ \ \ \ \ \ \times
  \sum_{p}\exp\left[\sum_{s=1}^t\alphahat_s( {\mathbf 1}(h_s(p)\ne y^R(\sigma,p))-  {\mathbf 1}(h_s(p)= y^R(\sigma,p)))\right]
\end{eqnarray*}
over ${\mathcal L}$.
Define
\[
M(\epsilon) =\frac{  \left(\min_{p\in{\mathcal P}(\sigma)}d_{T(\epsilon)}(p)\right)  \prod_{s=1}^{T(\epsilon)-1}\Zhat_t}{ \sum_{p}\exp(\sum_{s=1}^{T(\epsilon)-1}\alphahat_s( {\mathbf 1}(h_s(p)\ne y^R(\sigma,p))-  {\mathbf 1}(h_s(p)= y^R(\sigma,p))))}
\]
which is bounded away from 0.

Recall that
\begin{eqnarray*}
  &&  \Prob( \tau_{\Ahat}(D_t)(p) \ne y^R(\sigma,p) ) \\
  & \le & \sum_p d_1(p) \exp(\sum_{s=1}^t\alphahat_s( {\mathbf 1}(h_s(p)\ne y^R(\sigma,p))-  {\mathbf 1}(h_s(p)= y^R(\sigma,p)))) \\
& \le & \frac{\prod_{s=1}^t \Zhat_t}{M(\epsilon)} 
\le \frac{ (1-\gamma(G))^t}{M(\epsilon)} \le \frac{e^{-t \gamma(G)}}{M(\epsilon)}.
\end{eqnarray*}

Combining the probabilities over ${\mathcal L}$ and ${\mathcal L}^c$, we have that $\forall\epsilon$, $\forall\sigma\in\Sigma^G\subset\Sigma$, $\exists T(\epsilon)$, $\rho(\epsilon)$ and $\gamma(G)$ such that
\[
  \Prob\left( \exists t\ge T(\epsilon), \ \tau_{\Ahat}(D_t)(p)\ne y^R(\sigma,p)   \right)
  \le \frac{e^{-t \gamma(G)}}{M(\epsilon)} +e^{-t \rho(\epsilon)}.
\]
We can choose $T>T(\epsilon)$ and ${\overline\rho}$ such that $\forall t\ge T$,
\[\frac{e^{-t \gamma(G)}}{M(\epsilon)} +e^{-t \rho(\epsilon)}\le e^{-{\overline\rho}t}
  \]
which proves the proposition.

\section{Proofs for Section \ref{sect: whyalg}}

\begin{proof}[Proof of Proposition \ref{prop: atleastsingle}]
Concave differences implies that the set

\begin{equation*} 
K=\{p : u(\theta, p, 1) \geq u(\theta, p, 0)\}
\end{equation*} 

\noindent is a convex set; if $u(\theta, p_{i}, 1) - u(\theta, p_{i}, 0) \geq 0$ for $i=1,2$, then the same conclusion holds for $\alpha p_{1} + (1- \alpha)p_{2}$ for all $\alpha \in [0,1]$. Therefore, given any $p^{*}$ on the boundary, the supporting hyperplane theorem implies that we can find a linear hyperplane $(\lambda, \omega)$ tangent to this set at $p^{*}$. 

Suppose the algorithm designer prescribes that the receiver choose $a=1$ at any menu $p$ such that $\lambda \cdot p \leq \omega$ and $a=0$ otherwise. Note that having the receiver choose $a=1$ therefore requires choosing $p$ where the sender would rather the receiver choose action $a=0$, by definition of $K$. Therefore, the strategic player cannot do any better than choosing $\sigma(p  \mid \theta)$ which is a point mass at $p^{*}$. 
\end{proof}

\begin{proof}[Proof of Proposition \ref{thm:fullsurplus}]
  The ideas in this proof are largely borrowed from \citeasnoun{Rubinstein93}, accommodating two additional features of our enviroment: (a) need to infer the strategy from observed data and (b) the generalized setting, but we provide the proof for completeness. We construct a strategy $\sigma^*$ for the sender that generates higher payoff than the equilibrium strategy $\sigma^R$, thus deriving the contradiction that $\sigma^R$ is a best response to $\tau$ in the long run. More precisely, define $(p_{\theta}^{*}, a(\theta))$ to be the sender payoff-maximizing strategy.  We show that the sender can induce the receiver to choose $a(\theta) \neq y_{R}(p_{\theta}^{*})$. 

Fix $\epsilon> 0$ small, and without loss suppose $\abs{\Theta}=2$. First suppose $v(\theta_{L}, p_{L}^{*}, a_{1}) = v(\theta_{L}, p_{L}^{*}, a_{0})$. Let $\tilde{p} \in (p_{L}, p_{H})$ satisfies $v(\theta_{L}, \tilde{p}, 1) < v(\theta_{L}, \tilde{p}, 0)$. (If $p$ is multidimensional, we can take $\tilde{p}$ tp be on the line segment connecting $p_{L}^{*}$ and $p_{H}^{*}$) Set $\eta=  v(\theta_{L}, \tilde{p}, 0)- v(\theta_{L}, \tilde{p}, 1)  >0.$ We then choose $\epsilon,\epsilon_H,\epsilon_L>0$ to satisfy
\begin{equation}
\pi(H)\epsilon_H < \pi(L)\epsilon_L, \label{ineq:first}
\end{equation}
and such that 
\begin{equation} 
  \frac{\epsilon_L}{\epsilon_L+\eta} <\epsilon
  <\frac{\pi(L)\epsilon_L-\pi(H)\epsilon_H}{\pi(L)\epsilon_L}. \label{ineq:second}
\end{equation}

\noindent Under the increasing differences assumption, we can find $p_{i}(\varepsilon_{i})$ such that \begin{equation*} \varepsilon_{i} = v(\theta_{i}, p_{i}(\varepsilon_{i}), 1) - v(\theta_{i}, p_{i}(\varepsilon_{i}), 0). \end{equation*} 

\noindent Consider the following randomized pricing rule $\sigma^*$ of the sender: in state $H$, $\tilde{p}_{H}(\epsilon_H)$ is chosen with probability 1. In state $L$, $p_{L}(\epsilon_{L})$ is chosen with probability $1-\epsilon$ and $\tilde{p}$ with probability $\epsilon$. 

Under this strategy, the optimal response following $\tilde{p}$ is 0, and this does not vanish as all other parameters tend to 0. However, the ex-post optimal decisions are 1 for both $\tilde{p}_{L}(\epsilon_{L})$ and $\tilde{p}_{H}(\epsilon_{H})$. Nevertheless, (\ref{ineq:second}) implies first, the decisionmaker prefers to choose $a=1$ if and only if $\tilde{p}_{L}(\epsilon_{L})$ than choose $a=1$ if and only if $\tilde{p}_{H}(\epsilon_{H})$; and second, that the loss from choosing $a=1$ following $\tilde{p}$ is larger than the loss from choosing $a=0$ at $\tilde{p}_{L}(\epsilon_{L})$. Putting this together, and taking $\epsilon, \epsilon_{L}, \epsilon_{H} \rightarrow 0$ shows this policy approximates the sender's optimum,  as desired. 

The case of $v(\theta_{L}, p_{L}^{*}, a_{1}) > v(\theta_{L}, p_{L}^{*}, a_{0})$ is even more straightforward, since in this case the gain from choosing $a_{1}$ is non-vanishing, meaning that we can set $\varepsilon_{L}=0$.

The verification that the optimal rule converges to this threshold when emerging from data is straightforward; any recursive learning algorithm generates $\{ \phi_t\}$ which converges to $\phi\in \left(v_L-\epsilon_L, \frac{v_H+v_L}{2}\right)$ to emulate the best response of type 1 buyer against $\sigma$.    Thus, the long run average payoff against such algorithm should be bounded from below by ${\mathcal U}_p^*-\epsilon$.
\end{proof}

\section{Proofs for Section \ref{sect: discretization}}
\subsection{Proof of Proposition \ref{pr: Smoothing}}

The proof of the theorem proceeds in the following steps: 

\begin{itemize} 
\item Step 1: Show that the expected value conditional on price, in the image of the sender's possible strategies after applying the augmentation, is uniformly equicontinuous. 

\item Step 2: Show that the same label is applied to $\E[v_{\theta} \mid p+z_{i,\eta}, \sigma, \phi_{\eta}]$ as would be applied to $\E[v_{\theta} \mid p, \sigma]$, with high probability.

\item Step 3: Verify that the change in recommendation due to discarding ``low density prices'' occurs with vanishing probability.
\end{itemize}

\noindent Putting these together shows that the change in the expectation can be made arbitrarily small, as can the probability that small density observations are drawn. The condition that $\sigma$ is either discrete or continuous is stronger than necessary; what is necessary is continuity of the conditional expectation as a function of price, which can be satisfied if the discrete portions and continuous portions are separated, for instance. However, the proposition highlights that we need not restrict the sender's strategy space at all in order for our algorithm to converge. 

The Theorem implies that if the sender were to use an \textit{arbitrary} strategy $\sigma$, the receiver could instead focus on finding a rational response to $\tilde{\sigma}_{\eta}$. Doing so would still lead to PAC learnability of the approximately optimal response to $\sigma$. On the other hand, we can show that the optimal response to $\tilde{\sigma}_{\eta}$ is PAC learnable (unlike, potentially, the optimal response to $\sigma$), and doing the change leads to a negligible impact on the sender's surplus. 

Before presenting the proof, we argue that uniform equicontinuity implies weak learnability. Suppose that $\E[v \mid \sigma, p]-p$ is uniformly equicontinuous (which holds if $\E[v \mid \sigma, p]$ is uniformly equicontinuous). By uniform equicontinuity, we have there exists some $\delta$ such that whenever $\abs{p-p'}< \delta$, we have that \begin{equation*} \abs{\E[v \mid \sigma, p]-\E[v \mid \sigma, p']} < 2\varepsilon, \end{equation*} for any $\sigma$. Suppose we have some price $p$ such that $\E[v \mid \sigma, p] - p > \varepsilon$. Then if $\E[v \mid \sigma, p'] - p' < -\varepsilon$, it follows that $\abs{p-p'} > \delta$. It follows that there can only be at most $\frac{v_{H}- v_{L}}{\delta}$ prices such that $y(\sigma, p)=-y(\sigma,p')$, where $p$ and $p'$ are adjacent (ignoring all prices where $\abs{\E[v \mid \sigma, p] - p} < \varepsilon$, as the classification decision is irrelevant there).

\subsubsection{Step One} We first show that $\E[v_{\theta} \mid \tilde{\sigma}_{\eta}, p]$ is Lipschitz in $p$ uniformly of $\tilde{\sigma}_{\eta}$, noting that we are restricting to prices where $\tilde{\sigma}_{\eta} (p) > \gamma$. Note that:

\begin{equation*} 
\tilde{\sigma}_{\eta}'(p \mid \theta) = \int \phi_{\eta}'(p- \tilde{p}) \sigma(\tilde{p} \mid \theta) d\tilde{p} \leq \max \phi_{\eta}' := \overline{\phi'}.
\end{equation*}

Furthermore, we have: 

\begin{equation*} 
\frac{d}{dp}\Prob_{\tilde{\sigma}_{\eta}}[\theta \mid p]= \frac{\tilde{\sigma}_{\eta}'(p \mid \theta)\Prob[\theta]}{\sum_{\tilde{\theta}} \tilde{\sigma}_{\eta}(p \mid \tilde{\theta}) \Prob[\tilde{\theta}]} - \frac{ \tilde{\sigma}_{\eta}(p \mid \theta)\Prob[\theta](\sum_{\tilde{\theta}}  \sigma_{\eta}'(p \mid \tilde{\theta}) \Prob[\tilde{\theta}])}{{(\sum_{\tilde{\theta}} \tilde{\sigma}_{\eta}(p \mid \tilde{\theta}) \Prob[\tilde{\theta}])^{2}}},
\end{equation*}

so: 

\begin{equation*} 
\abs{\frac{d}{dp}\Prob_{\tilde{\sigma}_{\eta}}[\theta \mid p]} \leq \overline{\phi'} \Prob[\theta] \cdot \left(\frac{1}{\sum_{\tilde{\theta}} \tilde{\sigma}_{\eta}(p \mid \tilde{\theta}) \Prob[\tilde{\theta}]} \right) + \overline{\phi'}\left( \frac{\tilde{\sigma}_{\eta}(p \mid \theta)\Prob[\theta]}{(\sum_{\tilde{\theta}} \tilde{\sigma}_{\eta}(p \mid \tilde{\theta}) \Prob[\tilde{\theta}])^{2}} \right) \leq \overline{\phi'}  \Prob[\theta] \left(\frac{1}{\gamma} + \frac{M(\eta)}{\gamma^{2}} \right),
\end{equation*}

\noindent where $M(\eta)$ is a bound on $\tilde{\sigma}_{\eta}(p \mid \theta) \Prob[\theta]$, which exists since $\sigma$ and $\phi_{\eta}$ have bounded densities. Hence we see that for all $p \neq p^{*}$, the conditional probability is uniformly bounded in $p$, and is hence Lipschitz continuous. Importantly, the bound only depends on $\eta$ and $\gamma$ (and $\Prob[\theta]$), and is therefore uniform over all strategies in the image of the augmentation. Hence we can ensure that Lipschitz continuity is mainted for all prices in the support of $\tilde{\sigma}_{\eta}$. 

In fact, recall that the Lipschitz constant is equal to the $L^{\infty}$ norm of the derivative. Hence Lipschitz continuity depends only on $\gamma$, $M(\eta)$ and $\overline{\phi_{\eta}'}$, meaning that the Lipschitz constant holds uniformly over the image of the distributions emerging under the algorithm.  It follows that the image is uniformly equicontinuous. 

\subsubsection{Step Two}
Note that since $\E[v_{\theta} \mid \sigma, p]$ is continuous on $S=\cup_{\theta}~ \text{  Supp  } ~ \sigma( \cdot \mid \theta)$, $\E[v_{\theta} \mid \sigma, p]$ is uniformly continuous on any compact $K \subset S$. Define: 
\begin{equation*} 
K_{\gamma}= \{p : \sum_{\theta} \sigma(p \mid \theta) \Prob[\theta] \geq \gamma \}.
\end{equation*} Using that mollifiers converge uniformly on compact sets, we have that $\tilde{\sigma}_{\eta} \rightarrow \sigma$ uniformly on $K_{\gamma}$. We therefore have that, for any $\tilde{\varepsilon}$, we can find some $\overline{\eta}$ such that if $\eta < \overline{\eta}$ and $p \in K_{\gamma}$, then $\abs{\tilde{\sigma}_{\eta}(p \mid \theta) - \sigma(p \mid \theta)} < \tilde{\varepsilon}$ for all $\theta$, and $\abs{\sum_{\theta} \tilde{\sigma}_{\eta}(p \mid \theta)\Prob[\theta] - \sum_{\theta} \sigma( p \mid \theta)\Prob[\theta]} < \tilde{\varepsilon}$. 

Furthermore, since $\sigma$ is uniformly continuous on $K_{\gamma}$, we have: 

\begin{equation*} 
\abs{\sigma(p \mid \theta) - \tilde{\sigma}(p' \mid \theta)} = \abs{\int \phi_{\eta}(p'- \tilde{p}) (\sigma(p \mid \theta) - \sigma(\tilde{p} \mid \theta )) d \tilde{p}} \leq \tilde{\varepsilon},
\end{equation*}

using the uniform continuity of $\sigma$ on $K_{\gamma}$. 

So for any $p \in K_{\gamma}$, and $\eta$ sufficiently small, we have (letting $\overline{v} = \max_{\theta} v_{\theta}$):
\begin{align*} 
 \abs{\E[v_{\theta} \mid \sigma, p]-\E[v_{\theta} \mid \tilde{\sigma}_{\eta}, p'] } &= \abs{ \frac{\sum_{\theta} v_{\theta} \sigma(p \mid \theta) \Prob[ \theta] \sum_{\tilde{\theta}} \tilde{\sigma}_{\eta}(p' \mid \tilde{\theta}) \Prob[\tilde{\theta}] -\sum_{\theta} v_{\theta} \tilde{\sigma}_{\eta}(p' \mid \theta) \Prob[ \theta] \sum_{\tilde{\theta}} \sigma(p \mid \tilde{\theta}) \Prob[\tilde{\theta}]   }{\left(\sum_{\theta} \sigma(p \mid \theta) \Prob[\theta] \right) \left( \sum_{\theta} \tilde{\sigma}_{\eta}(p' \mid \theta) \Prob[\theta] \right)}} \\ & \leq \frac{1}{\sigma(p) \cdot (\gamma - \tilde{\varepsilon})} \biggl\lvert \sum_{\theta} v_{\theta} (\sigma(p \mid \theta) - \tilde{\sigma}_{\eta}(p' \mid \theta)) \Prob[\theta] \sum_{\tilde{\theta}} \sigma(p \mid \tilde{\theta}) \Prob[\tilde{\theta}] \\ &+\sum_{\theta} v_{\theta} \sigma(p \mid \theta) \Prob[ \theta] \sum_{\tilde{\theta}}( \tilde{\sigma}_{\eta}(p' \mid \tilde{\theta}) - \sigma(p \mid \tilde{\theta}))  \Prob[\tilde{\theta}]  \biggr\rvert \\ & \leq \frac{1}{\sigma(p) \cdot (\gamma - \tilde{\varepsilon})} \biggl(\overbrace{\biggl\lvert \sum_{\theta} v_{\theta} (\sigma(p \mid \theta) - \tilde{\sigma}_{\eta}(p' \mid \theta)) \sum_{\tilde{\theta}} \sigma(p \mid \tilde{\theta}) \Prob[\tilde{\theta}] \biggr \rvert}^{\leq \overline{v} \tilde{\varepsilon} \sigma(p) } \\ &+ \overbrace{\biggl \lvert \sum_{\theta} v_{\theta} \sigma(p \mid \theta) \Prob[ \theta] \sum_{\tilde{\theta}}( \tilde{\sigma}_{\eta}(p' \mid \tilde{\theta}) - \sigma(p \mid \tilde{\theta}))  \Prob[\tilde{\theta}]  \biggr\rvert}^{\leq \overline{v} \cdot \tilde{\varepsilon} \cdot \sigma(p)} \biggr) \\ & \leq \frac{2 \overline{v} \tilde{\varepsilon}}{\gamma - \tilde{\varepsilon}}. 
\end{align*}
The first inequality follows from adding and subtracting $\sum_{\theta} v_{\theta}\sigma(p \mid \theta) \Prob[\theta] \sum_{\tilde{\theta}} \sigma(p \mid \tilde{\theta}) \Prob[\tilde{\theta}]$ to the numerator inside the absolute value (as well as the lower bound on $\tilde{\sigma}_{\eta}(p)$), and the second inequality is from the triangle inequality, and the overbraced expression follows from  $v_{\theta} \leq \overline{v}$ and uniform convergence of $\tilde{\sigma}_{\eta}$ to $\sigma$.

So for any fixed $\gamma$, we can find some some $\eta$ such that whenever $\eta < \overline{\eta}$, we can ensure that on $K_{\gamma}$, $\abs{\E[v_{\theta} \mid \tilde{\sigma}_{\eta}, p] - \E[v_{\theta} \mid \sigma, p]}  < \varepsilon^{*}$, by choosing $\tilde{\varepsilon}$ sufficiently small so that $\frac{2 \tilde{\varepsilon}}{\gamma (\gamma - \tilde{\varepsilon})} < \varepsilon^{*}$. It follows that if the receiver's classifier converges to a rule that is $\varepsilon$-optimal under $\tilde{\sigma}_{\eta}$, it converges to a rule that is $\varepsilon+\varepsilon^{*}$ optimal under $\sigma$. The probability that this fails to occur is simply the probability that the price is outside of $K_{\gamma}$, which can be made arbitrarily small by taking $\gamma \rightarrow 0$, since we can approximate the support of $\sigma$ arbitrarily well.

\subsubsection{Step Three}
Note that, for an arbitrary continuous distribution $f$, if $p \sim f$ we have (for any compact $K$):

\begin{equation*} 
\Prob_{f}[L_{\gamma}] = \int_{K} \mathbf{1}[p: f(p) \leq \gamma] f(p) dp \leq  \int_{K} \mathbf{1}[p: f(p) \leq \gamma] \gamma dp \leq \mu(K) \cdot \gamma,
\end{equation*} 

\noindent where $\mu$ is Lebesgue measure. It follows that the probability that $p \in L_{\gamma}$, is small if $\gamma$ is small, and furthermore that this probability can be made small uniformly, using only $\gamma$. 

As shown by the claim above, by taking $\eta$ small, we can ensure that the difference in the conditional expected value is small with high probability. By taking $\gamma$ small, we ensure that the probability of a different outcome due to smoothing goes to 0, implying the result.

\section{Proofs for Examples} \label{sect: strictly dominated}
\begin{proof}[Proof of Lemma \ref{lm: strictly dominated}]

It suffices to show that 
if $p>v_L$ and $\Expect(v|p)-p\ge 0$, then the expected profit from $p$ is strictly less than $\pi_L v_L$.   We write the proof in \citeasnoun{Rubinstein93} for the later reference.
For any price $p$ satisfying
\[
\Prob(H |p) v_H + \Prob(L|p)v_L \ge p,
  \]
  the revenue cannot exceed
  \[
\Prob(H|p)  v_H+\Prob (L|p)  v_L 
  \]
  but the cost is
  \[
\Prob(H | p) (1-r) c_2 +\Prob(H|p) r c_1.
  \]
  Thus, the sender's expected payoff is at most
  \[
\Prob(L|p)  v_L + \Prob(H|p) ((1-r)(v_H-c_2) +r(v_H-c_1))
  \]
Because of the lemon's problem,
\[
(1-r)(v_H-c_2) +r(v_H-c_1)<0
\]
and
\[
\Prob(H|p)>0
\]
to satisfy
\[
\Prob(H |p) v_H + \Prob(L|p)v_L \ge p > v_L.
  \]
Integrating over $p$, we conclude that the ex ante profit is strictly less than
$\pi_L v_L$.

\end{proof}

\bibliographystyle{econometrica}
\bibliography{adaboost}

\end{document}